\title{Proving Termination of Unfolding Graph Rewriting for General
Safe Recursion%
}
\author{Naohi Eguchi%
\thanks{The author is supported by Grants-in-Aid for JSPS Fellows (Grant No.
$25 \cdot 726$) that is granted at Graduate School of
Science, Chiba University, Japan.}%
}
\institute{Institute of Computer Science, University of Innsbruck%
\\
Technikerstrasse 21a, 6020 Innsbruck, Austria\\
\email{naohi.eguchi@uibk.ac.at}
}
\begin{document}

\maketitle

\begin{abstract}
In this paper we present a new termination proof and complexity analysis
of
{\em unfolding graph rewriting}
which is a specific kind of infinite graph rewriting expressing the general form of safe recursion.
We introduce a termination order over sequences of terms
together with an interpretation of term graphs into
sequences of terms.
Unfolding graph rewrite rules expressing general safe recursion can be
 successfully embedded into the termination order by the interpretation,
 yielding the polynomial runtime complexity.
Moreover, generalising the definition of unfolding graph
rewrite rules for general safe recursion, we propose a new criterion for
the polynomial runtime complexity of infinite GRSs and for the polynomial
size of normal forms in infinite GRSs.
\end{abstract}

\section{Introduction}

In this paper we present a new termination proof and complexity analysis
of a specific kind of infinite graph rewriting called 
{\em unfolding graph rewriting} \cite{GRR2010}.
The formulation of unfolding graph rewriting stems from a
function-algebraic characterisation of the polytime computable functions 
based on the principle known as {\em safe recursion} \cite{BC92} or
{\em tiered recursion} \cite{Leivant95}.
The schema of safe recursion is a syntactic restriction of the standard primitive
recursion based on a specific separation of argument positions of functions
into two kinds. 
Notationally, the separation is indicated by semicolon as 
$f(\sn{x_1, \dots, x_k}{x_{k+1}, \dots, x_{k+l}})$, 
where $x_1, \dots, x_k$ are called {\em normal} arguments while
$x_{k+1}, \dots, x_{k+l}$
are called {\em safe} ones.
The schema (\ref{e:sr}) formalises the idea that 
recursive calls is restricted on normal argument whereas
substitution of recursion terms is restricted for safe arguments:
\begin{equation}
\tag{\textbf{Safe Recursion}}
\label{e:sr}
\begin{array}{rcl}
f(0, \vec y; \vec z) &=& g(\vec y; \vec z) \\
f(c_i (x), \vec y; \vec z) &=& 
h_i (x, \vec y; \vec z, f(x, \vec y; \vec z))
\qquad (i \in I),
\end{array}
\end{equation}
where $I$ is a finite set of indices.
The purely function-algebraic characterisation in \cite{BC92} is
made more flexible and polynomial runtime complexity analysis is established in 
\cite{AM08,AEM12} in terms of termination orders.
As discussed in \cite{GRR2010}, safe recursion is sound for polynomial
runtime complexity over unary constructor, i.e., over numerals or sequences,
but it was not clear whether general forms of safe recursion over
arbitrary constructors, which is called 
{\em general ramified recurrence} \cite{GRR2010} or 
(\ref{e:gsr}), could be related to polytime computability.
\begin{equation}
\tag{\textbf{General Safe Recursion}}
\label{e:gsr}
f(c_i (x_1, \dots, x_{\arity (c_i)}), \vec y; \vec z) = 
h_i (\vec x, \vec y; \vec z, f(x_1, \vec y; \vec z), \dots,
                             f(x_{\arity (c_i)}, \vec y; \vec z)
    )
\ (i \in I)
%\end{array}
\end{equation}
To see the difficulty of this question, consider a term rewrite system
(TRS for short) $\RS$ over the constructors 
$\{ \epsilon, \m{c}, \m{0}, \ms \}$
consisting of the following four rules with
the argument separation indicated in the rules.
\begin{equation*}
\label{ex:TRS}
\begin{array}{rclrcl}
\mg (\sn{\epsilon}{z}) & \rightarrow & z & \qquad
\mg (\sn{\m{c} (\sn{}{x, y})}{z}) & \rightarrow &
\m{c} (\sn{}{\mg (\sn{x}{z}), \mg (\sn{y}{z})}) \\
\mf (\sn{\m{0}, y}{}) & \rightarrow & \epsilon &
\mf (\sn{\ms (\sn{}{x}), y}) & \rightarrow & \mg (\sn{y}{\mf (\sn{x, y}{})})
\end{array}
\end{equation*}
%
%\noindent
Under the natural interpretation, 
$\mg (x, y)$ generates the binary tree appending the tree $y$ to every
leaf of the tree $x$, and 
$\mf (\ms^m (\m{0}), x)$ generates a tree consisting of exponentially
many copies of the tree $x$ measured by $m$.
Namely, rewriting in the TRS $\RS$ results in normal forms of exponential size
measured by the size of starting terms.
This problem cannot be solved by simple sharing. 
The authors of \cite{GRR2010} solved this problem,
showing that the equation of general safe recursion can be expressed by
an infinite set of unfolding graph rewriting.
As a consequence, the same authors answered the above question positively
in the sense as Theorem \ref{t:GRR10} in Section \ref{s:uf}.
In the present work, instead of looking at unfolding graph rewriting
sequences carefully, we propose complexity analysis by means of
termination orders over sequences of terms (Section \ref{s:order})
together with a successful embedding (Section \ref{s:pint}),
sharpening the complexity result obtained in \cite{GRR2010} 
(Section \ref{s:gsr}, Corollary \ref{c:gsr:2}).
In Section \ref{s:ptas} we generalise the definition of unfolding graph
rewrite rules for general safe recursion, we propose a new criterion for
the polynomial runtime complexity of infinite GRSs and for the polynomial
size of normal forms in infinite GRSs 
(Corollary \ref{c:spt:2}).

\section{Term graph rewriting}

In this section, we present basics of term graph rewriting following \cite{BareEGKPS87}.
Let $\FS$ be a {\em signature}, a finite set of function symbols, and
let 
$\arity : \FS \rightarrow \mathbb N$
where $\arity (f)$ is called the
{\em arity} of $f$.
We assume that $\FS$ be a signature partitioned into the set $\CS$ of
constructors and the set $\DS$ of defined symbols.
%Let $\bot$ a distinct symbol not in $\mathcal F$ that plays a role as variables.
%We write $\VS$ to denote an infinite set of {\em variables} such that$\FS \cap \VS = \emptyset$.
%We define $\arity (x) = 0$ for any variable $x$.
%The set of terms over $\FS$ and $\VS$ is denoted by $\mathcal{T(F,V)}$
%and the set of terms over $\FS$ is denoted by $\mathcal{T(F)}$.
Let $G = (V_G, E_G)$ be a directed graph consisting of a set
$V_G$ of vertices (or nodes) and a set $E_G$ of directed edges.
A {\em labeled graph} is a triple 
$(G, \lab_G, \att_G)$
of an acyclic directed graph $G = (V_G, E_G)$,
a partial {\em labeling} function
$\lab_G: V_G \rightarrow \FS$ 
and a (total) {\em successor} function
$\att_G: V_G \rightarrow V_G^\ast$, 
mapping a node $v \in V_G$ to a sequence of nodes of length
$\arity (\lab_G)$,
such that
if
$\att_G (v) = v_1, \dots, v_k$, 
then
$\{ v_1, \dots, v_k \} = 
 \{ u \in V_G \mid (v, u) \in E_G \}$.
In case $\att_G (v) = v_1, \dots, v_k$,
the node $v_j$ is called the {\em $j$th successor of $v$} for every
$j \in \{ 1, \dots, k \}$.
A list 
$\nseq{v_1, m_1, \dots, v_{k-1}, m_{k-1}, v_k}$
consisting of nodes $v_1, \dots, v_m$ of a term graph $G$ and naturals
$m_1, \dots, m_{k-1}$ is called a {\em path} from $v_1$ to $v_k$
if $v_{j+1}$ is the $m_j$th successor of $v_j$
for each $j \in \{ 1, \dots, k-1 \}$.
A labeled graph $(G, \lab_G, \att_G)$ is {\em closed} if 
the labeling function $\lab_G$ is total.
Given two labeled graphs $G$ and $H$,
a {\em homomorphism} from $G$ to $h$ is a mapping
$\varphi: V_G \rightarrow V_H$ 
such that 
\begin{itemize}
\item $\lab_H (\varphi (v)) = \lab_G (v)$ for each 
      $v \in \dom{\lab_G} \subseteq V_G$, and
\item for each $v \in \dom{\lab_G}$, if $\att_G (v) = v_1, \dots, v_k$, then
      $\att_H (\varphi (v)) = \varphi (v_1), \dots, \varphi (v_k)$.
\end{itemize} 
By definition, these conditions are not required for a node $v \in V_G$ for which
$\lab_G (v)$ is not defined.
%A bijective homomorphism is called an {\em isomorphism}.
A quadruple $(G, \lab_G, \att_G, \rootnode_G)$ is a {\em term graph} if 
$(G, \lab_G, \att_G)$ is a labeled graph and
$\rootnode_G$ is a {\em root} of $G$, i.e.,
a unique node in $V_G$ from which every node is reachable.
We write $\mathcal{TG(F)}$ to denote the set of term graphs over a
 signature $\mathcal F$.
For a labeled graph 
$G = (G, \att_G, \lab_G)$
and a node $v \in V_G$,
$G \seg v$ denotes the sub-term graph of $G$ rooted at $v$.
A homomorphism $\varphi$ from a term graph $G$ to another term graph
$H$ is a homomorphism 
$\varphi : (G, \lab_G, \att_G) \rightarrow (H, \lab_H, \att_H)$
such that $\rootnode_H = \varphi (\rootnode_G)$.
Two term graphs $G$ and $H$ are {\em isomorphic}, denoted as
$G \cong H$, if there exists a bijective homomorphism from $G$ to $H$.
A {\em graph rewrite rule} is a triple $\rho = (G, l, r)$ of a labeled
graph $G$ and distinct two nodes $l$ and $r$ respectively called
the {\em left} and {\em right} root. 
The term rewrite rule
$\mg (x, y) \rightarrow \m{c} (y, y)$ 
is expressed by a graph rewrite rule (1) and
$\mh (x, y, z, w) \rightarrow \m{c} (z, w)$
is expressed by (2) in Figure \ref{fig:rule}.
In the examples, the left root is written in a circle while the right
root is in a square.
Undefined nodes are indicated as $\bot$.
Namely, undefined nodes behave as free variable.
\begin{figure}[t]
\begin{equation*}
  (1) \quad
  \xymatrix{*+[o][F-]{\mg} \ar[d] \ar[dr] &
            *+[F]{\m{c}} \ar@/_/[d] \ar@/^/[d] \\
             \bot & \bot
           }
  \qquad \qquad
  (2)
  \xymatrix{& *+[o][F-]{\mh} \ar[dl] \ar[d] \ar[dr] \ar[drr] & &
            *+[F]{\m{c}} \ar[dl] \ar[d] \\
            \bot & \bot & \bot & \bot
           }
\end{equation*}
\caption{Examples of graph rewrite rules}
\label{fig:rule}
\end{figure}
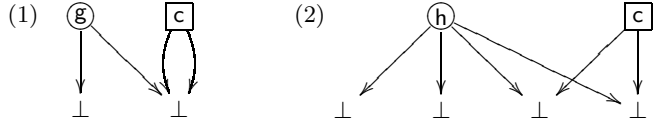 
A {\em redex} in a term graph $G$ is a pair 
$(R, \varphi)$ of a rewrite rule
$R = (H, l, r)$ and a homomorphism
$\varphi : H \seg l \rightarrow G$.
Intuitively, according to the homomorphism $\varphi$,
the subgraph $G \seg \varphi (l)$ to which $H \seg l$ is homomorphic by $\varphi$ is
replaced with the term graph to which $H \seg r$ is homomorphic.
A set $\GS$ of graph rewrite rules is called a 
{\em graph rewrite system} (GRS for short).
A graph rewrite rule $(G, l, r)$ is called a {\em constructor} one if
$\lab_G (l) \in \DS$ and
$\lab_G (v) \in \CS$ for any 
$v \in V_{G \seg l} \setminus \{ l \}$
whenever $\lab_G (v)$ is defined.
A GRS $\GS$ is called a constructor one if $\GS$ consists only of
constructor rewrite rules.
The rewrite relation defined by a GRS $\GS$ is denoted as
$\rightarrow_{\GS}$, its $m$-fold iteration as
$\rightarrow_{\GS}^m$, and its reflective and transitive closure is denoted as 
$\rightarrow_{\GS}^\ast$.
The {\em innermost} rewrite relation is defined in a
natural way, denoted as $\irew$, and $\irewm{m}$, $\irewast$ are defined
accordingly.

\section{Unfolding graph rewrite rules for general safe recursion}
\label{s:uf}

In this section we specify the shape of unfolding graph rewrite rules
which compatible with the schema of (\ref{e:gsr}).
We start with recalling the definition of unfolding graph rewrite rules
presented in \cite{GRR2010}.

\begin{definition}[Unfolding graph rewrite rules]
\label{d:uf}
\normalfont
Let $\Sigma$ and $\Theta$ be two disjoint signatures in bijective
 correspondence by 
$\varphi: \Sigma \rightarrow \Theta$.
For a fixed $k \in \mathbb N$, suppose that
$\arity ( \varphi (g)) = 2 \arity (g) + k$
for each $g \in \Sigma$.
Let
$f \not\in \Sigma \cup \Theta$ be a fresh function symbol
such that $\arity (f) = 1+k$.
Given a natural $m \geq 1$, an {\em unfolding graph rewrite rule over
$\Sigma$ and $\Theta$ defining $f$} is a graph rewrite rule
$\rho = (G, l, r)$ where
$G = (V_G, E_G, \att_G, \lab_G)$
is a labeled graph over a signature 
$\FS \supseteq \Sigma \cup \Theta$
that fulfills the following conditions.
\begin{enumerate}
\item The set $V_G$ of vertices consists of $1 + 2m + k$ elements
      $y$, $v_1, \dots, v_m$, $w_1, \dots$, $w_m$, $x_1, \dots, x_k$.
\item $l = y$ and $r = w_1$.
\item $\lab_G (y) = f$ and
      $\att_G (y) = v_1, x_1, \dots, x_k$.
\label{d:uf:f}
\item $\lab_G (x_j)$ is undefined for all
      $j \in \{ 1, \dots, k \}$.
\item For each $j \in \{ 1, \dots, m \}$,
      $\att_G (v_j) \subseteq \{ v_1, \dots, v_m \}^\ast$.
      Moreover,
      $V_{G \seg v_1} = \{ v_1, \dots, v_m \}$.
\label{d:uf:v}
\item For each $j \in \{ 1, \dots, m \}$,
      $\lab_G (v_j) \in \Sigma$ and
      $\lab_G (w_j) = \varphi (\lab_G (v_j))$.
\item For each $j \in \{ 1, \dots, m \}$,
%\begin{equation*}
      $\att_G (w_j) = 
      v_{j_1}, \dots, v_{j_n}, x_1, \dots, x_k, w_{j_1}, \dots, w_{j_n}$
%\end{equation*}
      if
      $\att_G (v_j) = v_{j_1}, \dots, v_{j_n}$.
\label{d:uf:w}
\end{enumerate}
\end{definition}

\begin{example}
\label{ex:ugr}
%\normalfont
Let $\Sigma = \{ \m{0}, \ms \}$,
$\Theta = \{ \mg, \mh \}$,
$\varphi : \Sigma \rightarrow \Theta$
be a bijection defined as
$\m{0} \mapsto g$ and $\ms \mapsto \mh$,
and
$\mf \not\in \Sigma \cup \Theta$, where
the arities of $\m{0}, \ms, \mg, \mh, \mf$
are respectively $0$, $1$, $1$, $3$ and $2$.
Namely we consider the case $k=1$.
The standard equations 
$\mf (\m{0}, x) \rightarrow \mg (x)$,
$\mf (\ms (y), x) \rightarrow \mh (y, x, \mf (y, x))$
for primitive recursion can be expressed by
the infinite set of unfolding graph rewrite rules over 
$\FS = \Sigma \cup \Theta \cup \{ \mf \}$ defining $f$, which includes
 the rewrite rules pictured in Figure 
\ref{fig:ugr}.
\begin{figure}[t]
\begin{equation*}
  \xymatrix{*+[o][F-]{\mf} \ar[d] \ar[dr] & & \\
            \m{0} & \bot & *+[F]{\mg} \ar[l]
           }
  \qquad \qquad
  \xymatrix{*+[o][F-]{\mf} \ar[d] \ar[ddr] & & \\
            \ms \ar[d] & & *+[F]{\mh} \ar@/_/[dll] \ar[dl] \ar[d] \\
            \m{0} & \bot & \mg \ar[l]
           }
  \qquad \qquad
  \xymatrix{*+[o][F-]{\mf} \ar[d] \ar[dddr] & & \\
            \ms \ar[d] & &
            *+[F]{\mh} \ar[dll] \ar[ddl] \ar[d] \\
            \ms \ar[d] & & \mh \ar[dll] \ar[dl] \ar[d] \\
            \m{0} & \bot & \mg \ar[l]
           }
\end{equation*}
\caption{Examples of unfolding graph rewrite rules}
\label{fig:ugr}
\end{figure}
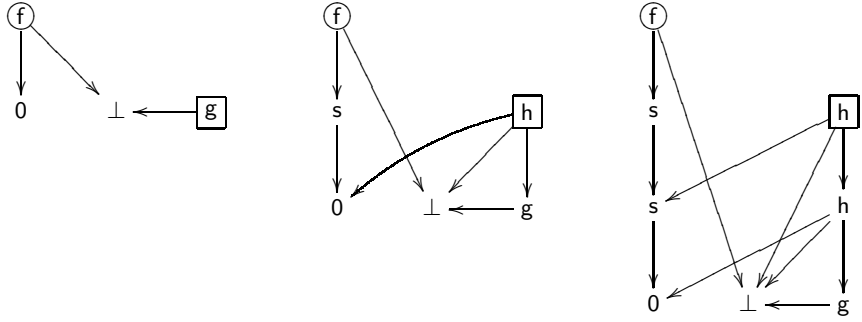
As seen from the pictures, the unfolding graph rewrite rules in Figure \ref{fig:ugr} express the infinite instances
$\mf (\m{0}, x) \rightarrow \mg (x)$,
$\mf (\ms (\m{0}), x) \rightarrow \mh (\m{0}, x, \mg (x))$,
$\mf (\ms (\ms (\m{0})), x) \rightarrow 
 \mh (\ms (\m{0}), x, \mh (\m{0}, x, \mg (x)))$, ...,
representing terms as term graphs.
\end{example}

In \cite{GRR2010} a graph rewrite system $\GS$ is called 
{\em polytime presentable} if there exists a deterministic polytime
algorithm which, given a term graph $G$, returns a term graph $H$ such
that $G \irew H$ if such a term graph exists, or
the value $\m{false}$ if otherwise.
In addition, a GRS $\GS$ is {\em polynomially bounded} if there exists a
polynomial $p: \mathbb N \rightarrow \mathbb N$ such that
$\max \{ m, |H| \} \leq p (|G|)$ holds whenever $G \irewm{m} H$ holds.
The main result in \cite{GRR2010} is restated as follows.

\begin{theorem}[Dal Lago, Martini and Zorzi \cite{GRR2010}]
\label{t:GRR10}
Every general safe recursive function can be represented by a
polytime presentable and polynomially bounded constructor GRS. 
\end{theorem}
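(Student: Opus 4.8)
The plan is to proceed by induction on the build-up of a general safe recursive function $f$ as a member of the smallest function class containing the basic functions and closed under safe composition and the schema (\ref{e:gsr}), maintaining throughout the invariant that $f$ is represented by a constructor GRS that is simultaneously polytime presentable and polynomially bounded. For the basic functions (constructors, projections, and the like) and for the closure under composition the representing GRS is \emph{finite}: one reads off a bounded set of constructor rewrite rules, and the two complexity properties follow by routine homomorphic matching together with the composition of the polynomial bounds supplied by the induction hypothesis. The entire difficulty is therefore concentrated in the recursion schema (\ref{e:gsr}).

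For the recursion case I would represent $f$ by exactly the infinite family of unfolding graph rewrite rules of Definition \ref{d:uf}. Taking $\Sigma$ to be the constructors $c_i$ on which the recursion in (\ref{e:gsr}) branches, $\Theta = \{\, \varphi(c_i) \mid i \in I \,\}$ the associated step symbols, and $\varphi$ the bijection $c_i \mapsto h_i$, the arity condition $\arity(\varphi(c_i)) = 2\arity(c_i)+k$ matches the schema precisely: the $2\arity(c_i)$ successors accommodate the immediate subterms $\vec{x}$ together with the recursive calls $f(x_1,\dots),\dots,f(x_{\arity(c_i)},\dots)$, while the remaining $k$ account for the passed-through normal and safe parameters, so that $\arity(f)=1+k$. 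As illustrated in Example \ref{ex:ugr}, these rules faithfully encode the instances of (\ref{e:gsr}), but with the recursion skeleton $v_1,\dots,v_m$ and the parameter nodes $x_1,\dots,x_k$ \emph{shared} rather than copied. That the resulting rule set is a constructor GRS is immediate from Definition \ref{d:uf}: the left root $y$ carries label $f\in\DS$, every other labelled node of the left-hand side lies among $v_1,\dots,v_m$ and carries a label in $\Sigma\subseteq\CS$, and the $x_j$ are undefined.

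Polytime presentability comes next. Although the family is infinite, it is uniformly generated: given an input term graph $G$ whose root is labelled $f$, the sub-term graph $G \seg v$ rooted at the first successor $v$ of the root determines a \emph{unique} candidate unfolding rule, and by conditions \ref{d:uf:v}--\ref{d:uf:w} that rule has size linear in $m=|V_{G\seg v}|$. Hence one can construct the matching rule on the fly in time polynomial in $|G|$, test homomorphic matching, and return the contracted graph $H$; combined with ordinary matching against the finitely many rules from the base and composition cases, this yields the required deterministic polytime decision procedure for $G \irew H$.

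The main obstacle is polynomial boundedness, namely $\max\{m,|H|\}\leq p(|G|)$ whenever $G \irewm{m} H$. A single unfolding step introduces only the $O(m)$ fresh nodes $w_1,\dots,w_m$ while reusing the shared skeleton and parameters, so one contraction cannot cause the exponential blow-up exhibited by the term rewrite system $\RS$ of the introduction. The delicate point is to show that this sharing \emph{survives} subsequent innermost reduction: as the freshly introduced nodes (each labelled by some $\varphi(c_i)\in\Theta$) are in turn reduced by the inductively polynomial sub-GRSs, no node of the recursion skeleton may be silently duplicated, for otherwise the copies could be re-expanded and the polynomial bound would fail. I would control this by attaching a size-and-step measure to term graphs that is strictly decreased by every rule while bounding the growth of $|H|$, and by arguing that innermost evaluation keeps every $f$-recursion folded so that its skeleton is materialised at most once. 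Establishing exactly this invariant is what forces one to look beyond the graphs themselves and to interpret a term graph as a \emph{sequence of terms} equipped with a termination order, which is the route developed in the sections that follow.
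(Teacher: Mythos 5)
Your proposal never actually proves the half of the theorem where all the difficulty lies. Note first that the paper itself does not prove Theorem~\ref{t:GRR10}: it is imported from \cite{GRR2010}, and the paper's Sections~\ref{s:order}--\ref{s:gsr} instead re-derive (and sharpen) the polynomial-boundedness half by independent means. Your construction of the witnessing GRS does coincide with the paper's Lemma~\ref{l:gsr} --- induction over the definition of $f$, finite rules for the base and composition cases, the infinite family of unfolding rules of Definition~\ref{d:uf} for the recursion case, with the arity bookkeeping $\arity(\varphi(c_i))=2\arity(c_i)+k$ correctly matched to the schema --- and your polytime-presentability sketch (the first normal successor of the root determines the unique candidate rule, of size linear in that subgraph, so the rule can be generated on the fly) is plausible and is essentially how \cite{GRR2010} argues, granted their restriction to \emph{injective} matching homomorphisms. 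But for polynomial boundedness your final paragraph is a promissory note: you say you ``would control this by attaching a size-and-step measure'' and that establishing the invariant ``forces one to interpret a term graph as a sequence of terms,'' without defining any measure, proving any invariant, or bounding anything. That is precisely the content of the paper's entire technical development --- the interpretations $\pj$ and $\pint$, the order $\spl{\ell}$, the bound on $\mG{\ell}$ (Lemma~\ref{l:Gleq}), the embedding (Theorem~\ref{t:contextgsr}), and the size analysis (Lemmas~\ref{l:basic}, \ref{l:size}, \ref{l:gsr:size}) --- and none of it, nor any workable substitute, appears in your attempt.

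Moreover, the induction invariant you chose cannot be repaired locally: ``polytime presentable and polynomially bounded'' is too weak to close the recursion case. After one unfolding step, the skeleton nodes $w_1,\dots,w_m$ are evaluated stagewise, and the normalised result of stage $j$ is fed into a \emph{safe} argument of the $h_i$-call at stage $j{+}1$; if all the induction hypothesis gives you is $\max\{n,|H|\}\leq p(|G|)$ for $G \irewm{n} H$, then composing it along the skeleton yields $p^{(m)}(|G_0|)$ with $m$ depending on the input --- an exponential bound, i.e.\ exactly the blow-up of the TRS $\RS$ you set out to avoid. (The composition case is immune only because there the number of nested stages is fixed by the definition of $f$.) What the induction must carry is the separation-sensitive strengthening: derivation lengths polynomial in the normal-argument part $|\bigcup_{v \in \nrm(\rootnode_G)} V_{G \seg v}|$ alone, and sizes growing at most \emph{additively} in the safe part, which is exactly the shape of the paper's Corollaries~\ref{c:main} and~\ref{c:gsr:2}, underpinned by the fact that normal-argument subgraphs never grow along safe paths (Lemma~\ref{l:basic}). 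Your observation that shared skeleton nodes must not be ``silently duplicated'' points in the right direction, but identifying the needed invariant is not the same as establishing it, and as it stands the proposal is a plan for a proof rather than a proof.
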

In the proof of Theorem \ref{t:GRR10}, the case that a general safe
recursive function is
defined by (\ref{e:gsr}) is witnessed by an infinite set of unfolding
graph rewrite rules in a specific shape compatible with the 
argument separation as indicated in the schema
(\ref{e:gsr}).
To be compatible with the argument separation, in \cite{GRR2010}, for
any redex $(R, \varphi)$, the homomorphism $\varphi$ is restricted to
an {\em injective} one.
In this paper, instead of assuming injectivity of homomorphisms, we rather indicate the
argument separation explicitly.

\begin{definition}[Term graphs with the argument separation]
\label{d:safeterm}
\normalfont
In accordance with idea of safe recursion, we assume that the argument
 positions of every function symbol are separated into the normal
 and safe ones, writing 
$f(\sn{x_1, \dots, x_k}{x_{k+1}, \dots, x_{k+l}})$
to denote $k$ normal arguments and $l$ safe ones.
We always assume that every constructor symbol in $\CS$ has safe
argument positions only.
We take the argument separation into labeled graphs in such a way that
for every successor $u$ of a node $v$ 
we write $u \in \nrm (v)$ if $u$ is connected to a normal argument
 position of $\lab_G (v)$, and 
$u \in \safe (v)$ if otherwise.
For two distinct nodes $v_0$ and $v_1$, if
$\lab_G (v_0) = \lab_G (v_1)$, then,
for any $j \in \{ 1, \dots, \arity (\lab_G (v_0)) \}$,
$u_0 \in \nrm (v_0) \Leftrightarrow u_1 \in \nrm (v_1)$
for the $j$th successor $u_i$ of $v_i$ ($i = 0, 1$).
Notationally, we write 
$\att_G (v) =
 \sn{v_1, \dots, v_k}{v_{k+1}, \dots, v_{k+l}}$
to express the separation that
$v_1, \dots, v_k \in \nrm (v)$
and
$v_{k+1}, \dots, v_{k+l} \in \safe (v)$.
We assume that for any term graph 
$(G, \lab_G, \att_G, \rootnode_G)$ and for any node $v \in V_G$, either \ref{d:safeterm:1} or
\ref{d:safeterm:2} below holds.
\begin{enumerate}
\item For any path 
      $\nseq{v_0, m_0, \dots, v_{k-1}, m_{k-1}, v_k}$
      $(1 \leq k)$ in $G$, if $\lab_G (v_0) \in \DS$,
      $\lab_G (v_j) \in \CS$ for each 
      $j \in \{ 1, \dots, k-1 \}$, and
      $v_k = v$, then
      $v_1 \in \nrm (v_0)$.
\label{d:safeterm:1}
\item For any path 
      $\nseq{v_0, m_0, \dots, v_{k-1}, m_{k-1}, v_k}$
      $(1 \leq k)$ in $G$, if $\lab_G (v_0) \in \DS$,
      $\lab_G (v_j) \in \CS$ for each 
      $j \in \{ 1, \dots, k-1 \}$, and
      $v_k = v$, then
      $v_1 \in \safe (v_0)$.
\label{d:safeterm:2}
\end{enumerate} 
Accordingly, we assume that any homomorphism
$\varphi : G \rightarrow H$
preserves the argument separation.
Namely, for each $v \in \dom{\lab_G}$, if 
$\att_G (v) = v_1, \dots, v_k$; 
$v_{k+1}, \dots, v_{k+l}$, then
$\att_H (\varphi (v)) = 
 \sn{\varphi (v_1), \dots, \varphi (v_k)}%
    {\varphi (v_{k+1}), \dots, \varphi (v_{k+l})}$.
\end{definition}

\begin{example}
\label{ex:safeterm}
%\normalfont
Let us consider graph representations of the term
$\mf ( \mg (\ms (\m{0}), \ms (\m{0}))$, 
$\mh (\ms (\m{0}), \ms (\m{0})))$.
All the graphs described in Figure \ref{fig:safeterm} are valid graph
 representations of the term.
On the other hand, consider the argument separation 
$\mf (\sn{x}{y})$, $\mg (\sn{x}{y})$, $\mh (\sn{x}{y})$
and the partition
$\CS = \{ \m{0}, \ms \}$, $\DS = \{ \mf, \mg, \mh \}$
of the signature.
Then, among the four graphs in Figure \ref{fig:safeterm}, the first
and second ones are valid representations of the term
$\mf (\sn{\mg (\sn{\ms (\m{0})}{\ms (\m{0})})}%
         {\mh (\sn{\ms (\m{0})}{\ms (\m{0})})}
     )$
but the others are not valid representations.
\begin{figure}[t]
\begin{equation*}
  \xymatrix{\mf \ar[d] \ar[drr] & & & &
            \mf \ar[d] \ar[dr] & & 
            \mf \ar[d] \ar[dr] & & 
            \mf \ar[d] \ar[dr] &
            \\
            \mg \ar[d] \ar[dr] & & \mh \ar[d] \ar[dr] & &
            \mg \ar[d] \ar[dr] & \mh \ar[dl] \ar[d] &
            \mg \ar@/_/[d] \ar@/^/[d] & \mh \ar@/_/[d] \ar@/^/[d] &
            \mg \ar@/_/[d] \ar@/^/[d] & \mh \ar@/_/[dl] \ar@/^/[dl]
            \\
            \ms \ar[d] & \ms \ar[d] & \ms \ar[d] & \ms \ar[d] &
            \ms \ar[d] & \ms \ar[d] &
            \ms \ar[d] & \ms \ar[d] &
            \ms \ar[d] &
            \\
            \m{0} & \m{0} & \m{0} & \m{0} &
            \m{0} & \m{0} & 
            \m{0} & \m{0} & 
            \m{0} &
           }
\end{equation*}
\caption{Examples of terms graphs with the argument separation}
\label{fig:safeterm}
\end{figure}
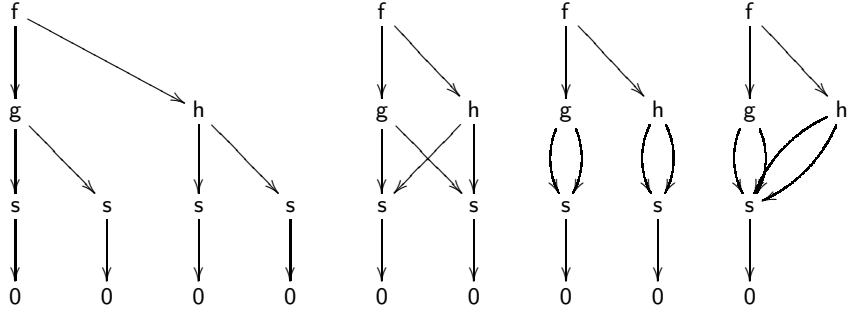
\end{example}

Let us recall the idea of safe recursion that the number of recursive
calls is measured only by a normal argument and recursion terms can be
substituted only for safe arguments.
This motivates us to introduce the following
safe version of unfolding graph rewrite rules.

\begin{definition}[Safe recursive unfolding graph rewrite rules]
\label{d:sruf}
\normalfont
We call an unfolding graph rewrite rule {\em safe recursive} if the
 following constraints imposed on the clause \ref{d:uf:f} and
 \ref{d:uf:w} in Definition \ref{d:uf} are satisfied.
\begin{enumerate}
\item In the clause \ref{d:uf:f},
      $v_1 \in \nrm (y)$.
\item In the clause \ref{d:uf:w},
      $v_{j_1}, \dots, v_{j_n} \in \nrm (w_j)$ and
      $w_{j_1}, \dots, w_{j_n} \in \safe (w_j)$.
\item In the clause \ref{d:uf:f} and \ref{d:uf:w},
      for each $j \in \{ 1, \dots, k \}$,
      $x_j \in \nrm (y)$ if and only if
      $x_j \in \nrm (w_i)$ for all $i \in \{ 1, \dots, m \}$.
\end{enumerate}
\end{definition}

As a consequence of Definition \ref{d:sruf}, we have a basic property of
safe recursive unfolding graph rewrite rules, which ensures that
rewriting by the graph rewrite rules does not change the structures of
subgraphs in normal argument positions.

\begin{corollary}
\label{c:sruf}
Let $(G, y, w_1)$ be a safe recursive unfolding graph rewrite rule with
the set $V_G$ of vertices consisting of $1+2m+k+l$ elements 
$y$, $v_1, \dots, v_m$, $w_1, \dots, w_m$, $x_1, \dots, x_{k+l}$
specified as in Definition \ref{d:uf} and \ref{d:sruf},
where 
$\att_G (y)$ $= \sn{v_1, x_1, \dots, x_k}{x_{k+1}, \dots, x_{k+l}}$.
Then, for any $j \in \{ 1, \dots, m \}$ and any node
$u \in \nrm (w_j)$, one of the following two cases holds.
\begin{enumerate}
\item If $\lab_G (u)$ is defined, then $u = v_i$ for some
      $i \in \{ 1, \dots, m \}$, and hence there exists a path
      from $v_1$ to $u$ in $G \seg y$.
\item If $\lab_G (u)$ is undefined, then $u = x_i$ for some
      $i \in \{ 1, \dots, k \}$, and hence $u \in \nrm (y)$.
\end{enumerate}
\end{corollary}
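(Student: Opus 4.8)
The plan is to compute the set $\nrm(w_j)$ explicitly from Definitions \ref{d:uf} and \ref{d:sruf}, and then to split into the two cases according to whether $\lab_G(u)$ is defined. By clause \ref{d:uf:w} of Definition \ref{d:uf} (read with the parameter $k+l$ in place of $k$), the successor sequence of $w_j$ is $\att_G(w_j) = v_{j_1}, \dots, v_{j_n}, x_1, \dots, x_{k+l}, w_{j_1}, \dots, w_{j_n}$ whenever $\att_G(v_j) = v_{j_1}, \dots, v_{j_n}$, so I would first read off which of these successors are normal. The second constraint of Definition \ref{d:sruf} places $v_{j_1}, \dots, v_{j_n} \in \nrm(w_j)$ and $w_{j_1}, \dots, w_{j_n} \in \safe(w_j)$, so none of the $w$-nodes contribute to $\nrm(w_j)$. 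For the free-variable nodes, the third constraint of Definition \ref{d:sruf} gives $x_i \in \nrm(w_j) \Leftrightarrow x_i \in \nrm(y)$, and the hypothesis $\att_G(y) = \sn{v_1, x_1, \dots, x_k}{x_{k+1}, \dots, x_{k+l}}$ says precisely that $x_i \in \nrm(y)$ iff $i \in \{1, \dots, k\}$. Putting these together yields $\nrm(w_j) = \{v_{j_1}, \dots, v_{j_n}\} \cup \{x_1, \dots, x_k\}$.

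With this characterisation the two cases become a clean dichotomy driven by the labelling. Recall from clause 6 of Definition \ref{d:uf} that $\lab_G(v_i) \in \Sigma$ is defined for every $i$, while clause 4 makes $\lab_G(x_i)$ undefined for every $i \in \{1, \dots, k+l\}$. Hence, if $u \in \nrm(w_j)$ and $\lab_G(u)$ is defined, then $u$ cannot be any $x_i$, so $u = v_{j_p}$ for some $p$; since $\att_G(v_j) \subseteq \{v_1, \dots, v_m\}^\ast$ by clause \ref{d:uf:v}, this gives $u = v_i$ for some $i \in \{1, \dots, m\}$. Conversely, if $\lab_G(u)$ is undefined, then $u$ cannot be any $v_{j_p}$, forcing $u = x_i$ with $i \in \{1, \dots, k\}$, and then $x_i \in \nrm(y)$ is exactly what we recorded above.

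It remains to supply the reachability claim of the first case. Here I would invoke the second part of clause \ref{d:uf:v}, namely $V_{G \seg v_1} = \{v_1, \dots, v_m\}$. Since $v_1$ is by definition the root of $G \seg v_1$, every node of $V_{G \seg v_1}$ is reachable from $v_1$; in particular there is a path from $v_1$ to $u = v_i$ inside $G \seg v_1$. As $v_1$ is a successor of $y$ by clause \ref{d:uf:f}, the subgraph $G \seg v_1$ is contained in $G \seg y$, so this path lies in $G \seg y$, as required.

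The argument is essentially a matter of unfolding the definitions, so no genuinely hard step arises; the one point demanding care is the bookkeeping around the free-variable nodes. Specifically, one must apply the third constraint of Definition \ref{d:sruf} with the parameter $k+l$ rather than $k$ and track the argument separation of $y$ accurately, since this is what excludes the safe free variables $x_{k+1}, \dots, x_{k+l}$ from $\nrm(w_j)$ and thereby guarantees that the undefined node produced in the second case is genuinely a normal argument of $y$.
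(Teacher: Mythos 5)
Your proof is correct, and it coincides with the paper's reasoning: the paper states Corollary \ref{c:sruf} without any explicit proof, presenting it as an immediate consequence of Definitions \ref{d:uf} and \ref{d:sruf}, and your unfolding of those definitions --- computing $\nrm (w_j) = \{ v_{j_1}, \dots, v_{j_n} \} \cup \{ x_1, \dots, x_k \}$ from the safe-recursive constraints, splitting on whether $\lab_G (u)$ is defined (clauses 4 and 6 of Definition \ref{d:uf}), and obtaining the path from $v_1$ to $u$ via $V_{G \seg v_1} = \{ v_1, \dots, v_m \}$ in clause \ref{d:uf:v} --- is exactly the intended argument. You also correctly handle the one point the paper leaves tacit, namely instantiating the definitions with parameter $k+l$ in place of $k$ and using the stated separation $\att_G (y) = \sn{v_1, x_1, \dots, x_k}{x_{k+1}, \dots, x_{k+l}}$ together with the third constraint of Definition \ref{d:sruf} to exclude the safe variables $x_{k+1}, \dots, x_{k+l}$ from $\nrm (w_j)$.
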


\section{Termination orders on sequences of terms}
\label{s:order}

In this section we consider a termination order 
$\spl{\ell}$ indexed by a positive natural $\ell$ over sequences of terms
based on an observation that every instance of unfolding graph rewrite
rules is precedence terminating in the sense defined in 
\cite{MidOZ96}.
Essentially, the termination order $\spl{\ell}$ is the same as 
{\em small polynomial path orders on sequences} \cite{AEM12}
but without recursive comparison.
We show that, for any fixed $\ell$, the length of any $\spl{\ell}$-reduction
sequence can be linearly bounded measured by the size of a starting term
but polynomially bounded if measured by $\ell$
(Lemma \ref{l:Gleq}). 

Let $\FS = \CS \cup \DS$ be a signature.
The set of terms over $\FS$ (and the set $\VS$ of variables) is denoted
as $\mathcal{T(F,V)}$, and the set of closed terms is denoted as
$\mathcal{T(F)}$.
We write $s \rhd t$ to express that $s$ is a {\em proper super-term} of $t$.
A {\em precedence} $\sp$ is a well founded partial binary relation on
$\FS$.
The {\em rank} $\rk : \FS \rightarrow \mathbb N$
is defined to be compatible with $>$:
$\rk (f) > \rk (g) \Leftrightarrow f > g$.
We always assume that every constructor symbol is $\sp$-minimal. 
To form sequences of terms, assume an auxiliary function symbol
$\circ$ whose arity is finite but arbitrary.
A term of the form 
$\circ (t_1, \dots, t_k)$
will be called a sequence if 
$t_1, \dots, t_k \in \mathcal{T(F,V)}$,
denoted as $\lseq[k]{t}$.
We will write $a, b, c, \dots$ for both terms and sequences.
We also write $\lseq[k]{s} \con \lseq[l]{t}$ to denote the concatenation
$\lst{s_1 \ \cdots \ s_k \ t_1 \ \cdots \ t_l}$.

\begin{definition}
\label{d:spl}
\normalfont
Let $\sp$ be a precedence on a signature $\FS$.
Suppose that $\ell \in \mathbb N$ and $1 \leq \ell$.
Then $a \spl{\ell} b$ holds if one of the following three cases holds.
\begin{enumerate}
\item $a = f(s_1, \dots, s_k)$, $b = g(t_1, \dots, t_l)$,
       $f, g \in \FS$, $f > g$,
  \begin{itemize}
  \item $f(s_1, \dots, s_k) \rhd t_j$
        for all $j \in \{ 1, \dots, k \}$, and
  \item $l \leq \ell$.
  \end{itemize}
\label{d:spl:st}
\item $a = f(s_1, \dots, s_k)$, $f \in \FS$, $b = \lst{t_1 \cdots t_l}$,
  \begin{itemize}
  \item $f(s_1, \dots, s_k) \spl{\ell} t_j$
        for all $j \in \{ 1, \dots, l \}$, and
  \item $l \leq \ell$.
  \end{itemize}
\label{d:spl:sb}
\item $a = \lst{s_1 \cdots s_k}$, $b = \lst{t_1 \cdots t_l}$
      and there exists a permutation
      $\pi : \{ 1, \dots, l \} \rightarrow \{ 1, \dots, l \}$,
      and there exist terms or sequences
      $b_j$ $(j = 1, \dots, k)$ such that
  \begin{itemize}
  \item $b_1 \con \cdots \con b_k = \lst{t_{\pi (1)} \cdots t_{\pi (l)}}$,
  \item $s_j \eqspl{\ell} b_j$ for all $j \in \{ 1, \dots, k \}$, and
  \item $s_i \spl{\ell} b_i$ for some $i \in \{ 1, \dots, k \}$.
  \end{itemize} 
In case some $b_i$ is a term $t$, the concatenation
$\cdots \con b_i \con \cdots$ 
should be understood as
$\cdots \con \lst{t} \con \cdots$.
\label{d:spl:ab}
\end{enumerate}
\end{definition}
For notational convention, we write $a \caseref{\spl{\ell}}{i} b$ if 
$a \spl{\ell} b$ follows from the $i$-th clause in Definition
\ref{d:spl}. 
Note for example that 
if $s \cspl{sb} \lseq[l]{t}$, then
$s \cspl{st} t_j$ holds for all $j \in \{ 1, \dots, l \}$.
As a special case,
$s \cspl{sb} \lst{}$ holds
for any $\ell \geq 2$ and any term $s$, and hence
$a \cspl{ab} \lst{}$ holds
for any non-empty sequence $a$.

\begin{lemma}
\label{l:Gl}
\begin{enumerate}
\item If $a \spl{\ell} b$ and $\ell \leq \ell'$, then 
      $a \spl{\ell'} b$ holds.
\label{l:Gl:mon}
\item If $a \spl{\ell} a'$ holds, then
      $b \con a \con c \spl{\ell} b \con a' \con c$
      also holds.
\label{l:Gl:con}
\item If $s \cspl{sb} b$, $b = b_0 \con b_1$ and
      $b_j \neq \lst{}$ for each $j = 0, 1$, then
      $\lst{s} \con c \cspl{ab} b_0 \con c \con b_1$
      holds for any sequence $c$.
\label{l:Gl:perm}
\end{enumerate}
\end{lemma}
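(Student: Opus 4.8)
The plan is to handle the three assertions separately; each reduces to unwinding Definition~\ref{d:spl} and, whenever the compared objects are sequences, exhibiting an explicit witness for clause~\ref{d:spl:ab} (a permutation together with a block decomposition). For the first assertion I would argue by induction on the derivation of $a \spl{\ell} b$. The index $\ell$ enters Definition~\ref{d:spl} only through the numeric side conditions $l \leq \ell$ in clauses~\ref{d:spl:st} and~\ref{d:spl:sb}; since $\ell \leq \ell'$ these remain valid with $\ell'$ in place of $\ell$, while the subterm condition $f(\dots) \rhd t_j$ of clause~\ref{d:spl:st} does not mention $\ell$ at all. In clauses~\ref{d:spl:sb} and~\ref{d:spl:ab} the embedded comparisons $\spl{\ell}$ and $\eqspl{\ell}$ are upgraded to $\spl{\ell'}$ and $\eqspl{\ell'}$ by the induction hypothesis, keeping the same permutation and block decomposition, so the very same clause applies for $\ell'$.

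For the second assertion, write $b = \lst{u_1 \cdots u_p}$ and $c = \lst{w_1 \cdots w_q}$. The comparison $b \con a \con c \spl{\ell} b \con a' \con c$ is between two sequences, hence must arise from clause~\ref{d:spl:ab}. I build the witness by matching each $u_i$ and each $w_j$ to itself reflexively and by placing the single middle block $a'$, compared to the source entry $a$ via $a \eqspl{\ell} a'$. The hypothesis $a \spl{\ell} a'$ supplies the required strict decrease in that one block, and the convention identifying a term $t$ with $\lst{t}$ inside concatenations makes the construction uniform: when $a, a'$ are terms the identity permutation suffices, and when they are sequences I extend the clause~\ref{d:spl:ab} permutation already witnessing $a \spl{\ell} a'$ by the identity on the $p$ prefix and $q$ suffix positions. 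In every subcase the concatenation of the chosen blocks reproduces the permuted target, so clause~\ref{d:spl:ab} applies.

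For the third assertion, $s \cspl{sb} b$ unfolds by clause~\ref{d:spl:sb} to: $s$ is a term, $b = \lst{t_1 \cdots t_l}$ with $l \leq \ell$, and $s \spl{\ell} t_j$ for all $j$; crucially this records the single strict step $s \spl{\ell} b$ itself. Writing $b_0 = \lst{t_1 \cdots t_r}$ and $b_1 = \lst{t_{r+1} \cdots t_l}$ (both nonempty, so $1 \leq r \leq l-1$) and $c = \lst{c_1 \cdots c_q}$, both $\lst{s} \con c$ and $b_0 \con c \con b_1$ are sequences, so I again target clause~\ref{d:spl:ab}. I take the block of the source entry $s$ to be the whole of $b$, using $s \spl{\ell} b$ as the strict decrease, and match each $c_j$ to itself reflexively. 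The substantive step is a permutation that reassembles the split fragments: it reorders $b_0 \con c \con b_1$ into $\lst{t_1 \cdots t_l} \con c$, i.e.\ into $b \con c$, so that the block concatenation $b \con \lst{c_1} \con \cdots \con \lst{c_q}$ matches the permuted target verbatim. Concretely this permutation fixes the first $r$ target positions (the image of $b_0$), moves the $b_1$-fragment to immediately follow them, and relocates the entire $c$-block to the end; I would verify by a direct index count that this map is a bijection of the $l+q$ target positions.

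I expect the only genuine obstacle to be this index bookkeeping in the third assertion (and, in its sequence subcase, the second): namely, writing the reassembling permutation explicitly and confirming that it is a bijection covering all target positions. Everything else—the reflexive matching of the unchanged entries, the induction hypothesis for the first assertion, and the reuse of the clause~\ref{d:spl:sb} hypothesis as the single strict block—is routine.
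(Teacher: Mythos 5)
Your proposal is correct and follows essentially the same route as the paper: for the third assertion the paper likewise invokes clause~\ref{d:spl:ab} with the whole of $b$ as the single strict block for $s$ (justified by $s \cspl{sb} b$), reflexive singleton blocks for the entries of $c$, and an explicitly defined permutation that moves the $c$-block past $b_1$ --- exactly your ``reassembling'' bijection, up to taking the inverse convention. The only difference is that for the first two assertions the paper simply cites \cite[Lemma 9]{AEM12}, whereas you spell out the routine inductive and block-extension arguments, which are correct as stated.
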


\begin{proof}
Property \ref{l:Gl:mon} and \ref{l:Gl:con} can be shown as in 
\cite[Lemma 9]{AEM12}.
Consider Property \ref{l:Gl:perm}.
The case $c = \lst{}$ immediately follows from the assumption.
Let $b_0 = \lst{t_1 \cdots t_{k_0}}$,
$b_1 = \lst{t_{k_0 +l+1} \cdots t_{k_0 +l+ k_1}}$, and
$c = \lst{t_{k_0 +1} \cdots t_{k_0 + l}}$.
Define a permutation on
$\pi : \{ 1, \dots, k_0 + l + k_1 \}$ 
by
\begin{equation*}
  \pi (j) =
  \begin{cases}
  j & \text{if } j \in \{ 1, \dots, k_0 \}, \\
  j + k_1 & \text{if } j \in \{ k_0 +1, \dots, k_0 + l \}, \\
  j - l & \text{if } j \in \{ k_0 +l+1, \dots, k_0 +l+ k_1 \}.
  \end{cases}
\end{equation*}
Then we have
$b_0 \con c \con b_1 = \lst{t_1 \cdots t_{k_0 +l+ k_1}}$
and
$b \con c =
 \lst{t_{\pi (1)} \cdots t_{\pi (k_0 +l+ k_1)}}$.
Now
$a \con c \cspl{ab} b_0 \con c \con b_1$
follows from $s \spl{\ell} b$ and
$t_j \eqspl{\ell} t_j$ for every
$j \in \{ k_0 +1, \dots, k_0 +l \}$.
\qed
\end{proof}

Since the relation $\spl{\ell}$ can be regarded as a fragment of small polynomial path orders
on sequences defined in \cite{AEM12},
$\spl{\ell}$ is well founded
for any fixed $\ell \geq 1$.
Therefore the following complexity measure 
$\mG{\ell}: \mathcal{T} \rightarrow \mathbb N$ 
can be well defined.

\begin{definition}
$\mG{\ell} (a) := \max 
 \{ k \in \mathbb N \mid \exists a_1, \dots, a_k \mbox{ such that }
    a \spl{\ell} a_1 \spl{\ell} \cdots \spl{\ell} a_k
 \}$
\end{definition}
Note that $\mG{\ell} (a) > \mG{\ell} (b)$ holds whenever $a \spl{\ell} b$ holds.
As employed in \cite{AEM12}, the following basic of $\mG{\ell}$
property can be shown, whose proof can be found in \cite[Lemma 7]{AEMspop}.

\begin{lemma}
\label{l:Gl:sum}
For any $\ell \geq 1$ and sequence $a = \lst{t_1 \cdots t_k}$,
      $\mG{\ell} (a) = \sum_{j=1}^{k} \mG{\ell} (t_j)$ holds.
\end{lemma}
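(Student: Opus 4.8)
The plan is to establish the two inequalities $\mG{\ell}(a) \ge \sum_{j=1}^k \mG{\ell}(t_j)$ and $\mG{\ell}(a) \le \sum_{j=1}^k \mG{\ell}(t_j)$ separately. Write $h(\lst{u_1 \cdots u_p}) := \sum_{i=1}^p \mG{\ell}(u_i)$ for the sum of $\mG{\ell}$ over the components of a sequence, and $h(t) := \mG{\ell}(t)$ for a term $t$, so that the assertion to be proved is exactly $\mG{\ell}(a) = h(a)$. The lower bound $\mG{\ell}(a) \ge h(a)$ is routine: fixing for each $j$ a maximal chain $t_j = r_{j,0} \spl{\ell} \cdots \spl{\ell} r_{j,\mG{\ell}(t_j)}$, I would reduce the components of $a$ one after another, using the congruence property~\ref{l:Gl:con} of Lemma~\ref{l:Gl} to replay each single step $r_{j,i} \spl{\ell} r_{j,i+1}$ inside the surrounding sequence with the remaining components kept as context. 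Concatenating these blocks of reductions produces a $\spl{\ell}$-chain out of $a$ of total length $\sum_{j=1}^k \mG{\ell}(t_j)$. The only point to verify is that this works uniformly whether a given $r_{j,i}$ is a term or a sequence, which is precisely how $\con$ treats the singleton case.

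For the upper bound $\mG{\ell}(a) \le h(a)$ I would argue by well-founded induction on the natural number $\mG{\ell}(a)$. The case $\mG{\ell}(a) = 0$ is immediate. Otherwise I take a longest chain out of $a$; since $a$ is a sequence, its first step can only be an instance of clause~\ref{d:spl:ab} of Definition~\ref{d:spl}, say $a \cspl{ab} a'$ with $\mG{\ell}(a') = \mG{\ell}(a) - 1$. That clause furnishes blocks $b_1, \dots, b_k$ whose concatenation is a permutation of $a'$ and which satisfy $t_j \eqspl{\ell} b_j$ for every $j$, with $t_i \spl{\ell} b_i$ strictly for at least one index $i$. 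Since $b_1 \con \cdots \con b_k$ carries the same multiset of components as $a'$, the induction hypothesis applied to $a'$ yields $\mG{\ell}(a') = \sum_{j=1}^k h(b_j)$.

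It then remains to compare $h(b_j)$ with $\mG{\ell}(t_j)$ block by block. If $t_j = b_j$ the two coincide. If $t_j \spl{\ell} b_j$ I would show $h(b_j) < \mG{\ell}(t_j)$: when $b_j$ is a term this is just $h(b_j) = \mG{\ell}(b_j) < \mG{\ell}(t_j)$ by the monotonicity of $\mG{\ell}$ noted above; when $b_j$ is a sequence I would first observe $\mG{\ell}(b_j) \le \mG{\ell}(a') < \mG{\ell}(a)$ (lifting any chain out of $b_j$ to one out of $a'$ again by congruence), so that the induction hypothesis gives $h(b_j) = \mG{\ell}(b_j)$, and then $\mG{\ell}(b_j) < \mG{\ell}(t_j)$ since $t_j \spl{\ell} b_j$. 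Summing over $j$ and using the strict inequality at the index $i$ gives $\mG{\ell}(a) = 1 + \sum_{j} h(b_j) \le \sum_{j} \mG{\ell}(t_j) = h(a)$, closing the induction.

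The step I expect to be the main obstacle is exactly the subcase in which a term component $t_j$ collapses, through clause~\ref{d:spl:sb}, into an entire sequence $b_j$. There the elementary bound only delivers $\mG{\ell}(t_j) > \mG{\ell}(u)$ for each separate $u \in b_j$, which is far too weak; one genuinely needs the additivity being proved, now for the shorter sequence $b_j$. This is what dictates performing the induction on the value $\mG{\ell}(a)$ rather than on term size or structure, since only that measure makes the induction hypothesis available for $b_j$ via the strict drop $\mG{\ell}(b_j) < \mG{\ell}(a)$. Securing that strict drop cleanly from the congruence property is the pivot of the whole argument.
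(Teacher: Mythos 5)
Your proof is correct, and it is essentially the argument the paper relies on: the paper does not prove Lemma~\ref{l:Gl:sum} in-text but defers to \cite[Lemma 7]{AEMspop}, where the same two-inequality scheme is used --- lifting maximal reductions of the components via the congruence property (Lemma~\ref{l:Gl}.\ref{l:Gl:con}) for the lower bound, and induction on the value $\mG{\ell} (a)$ with the block decomposition of clause~\ref{d:spl:ab} for the upper bound. The only point you should make explicit is the strict drop $\mG{\ell} (b_j) < \mG{\ell} (a)$ for a sequence block $b_j$: lifting chains out of $b_j$ into $a'$ requires, beyond Lemma~\ref{l:Gl}.\ref{l:Gl:con}, permutation invariance of $\spl{\ell}$ on sequences (immediate from clause~\ref{d:spl:ab} by composing permutations), or one can bypass it via $\mG{\ell} (b_j) < \mG{\ell} (t_j) \leq \mG{\ell} (a)$, the second inequality being exactly your lower-bound lifting applied to the component $t_j$ inside $a$.
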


\begin{lemma}
\label{l:Gleq}
Let $\ell \geq 1$ and
$\max \{ \arity (f) \mid f \in \FS \} \leq d$.
Then, for any function symbol $f \in \FS$ with arity $k \leq \ell$
and for any closed terms $s_1, \dots, s_k \in \mathcal{T(C)}$, the
 following inequality holds, where $\depth (t)$ denotes the {\em depth}
 of a term $t$ in the standard tree representation.
\begin{equation*}
\textstyle
\mG{\ell} (f(s_1, \dots, s_k)) \leq
d^{\rk (f)} \cdot (1+ \ell)^{\rk (f)} \cdot 
\left( 1 + \sum_{j=1}^k \depth (s_j) \right).
%\label{eq;Gleq:1}
\end{equation*}
\end{lemma}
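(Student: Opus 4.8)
The plan is to prove the bound by induction on the rank $\rk(f)$, exploiting the structure of Definition~\ref{d:spl} to control every possible one-step $\spl{\ell}$-successor of $f(s_1,\dots,s_k)$. Since $\mG{\ell}(a)$ counts the length of the longest descending chain, I would first bound $\mG{\ell}(f(s_1,\dots,s_k))$ in terms of the $\mG{\ell}$-values of the strictly smaller terms it can reach in one step, and then sum those contributions using Lemma~\ref{l:Gl:sum}.

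First I would examine what a single $\spl{\ell}$-step from a term $f(s_1,\dots,s_k)$ can look like. Because the left-hand side is a term (not a sequence), only clauses~\ref{d:spl:st} and~\ref{d:spl:sb} of Definition~\ref{d:spl} apply. By clause~\ref{d:spl:st}, $f(s_1,\dots,s_k) \spl{\ell} g(t_1,\dots,t_l)$ requires $f > g$, so $\rk(g) < \rk(f)$, with each $t_j$ a proper subterm of $f(s_1,\dots,s_k)$ and $l \leq \ell$. By clause~\ref{d:spl:sb}, $f(s_1,\dots,s_k)$ decomposes into a sequence of at most $\ell$ terms, each again a $\spl{\ell}$-successor that is itself obtained by clause~\ref{d:spl:st}. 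The key structural observation is that every term reachable from $f(s_1,\dots,s_k)$ carries a root symbol of strictly smaller rank, and its arguments are proper subterms of the original term; since the $s_j$ are closed constructor terms, every such subterm has the form $g(u_1,\dots,u_p)$ with $\rk(g) < \rk(f)$ (constructors being $\sp$-minimal) and depth bounded by $\depth(s_j)$ for the relevant $j$.

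Next I would set up the induction. The induction hypothesis gives, for each $g$ with $\rk(g) < \rk(f)$ and each closed constructor term $g(u_1,\dots,u_p)$ arising as a successor, the bound
\begin{equation*}
\mG{\ell}(g(u_1,\dots,u_p)) \leq d^{\rk(g)} \cdot (1+\ell)^{\rk(g)} \cdot \Bigl(1 + \textstyle\sum_{i} \depth(u_i)\Bigr).
\end{equation*}
The term $f(s_1,\dots,s_k)$ has at most $d$ immediate arguments, each of depth at most $\depth(s_j)$, and a clause~\ref{d:spl:st} step produces at most $\ell$ proper subterms as the new arguments, each rooted by a symbol of rank $\le \rk(f)-1$. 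Combining the factor $d$ (for the branching into arguments) with the factor $(1+\ell)$ (for the sequence width permitted by clause~\ref{d:spl:sb} and the $l \le \ell$ bound in clause~\ref{d:spl:st}) reproduces the multiplicative growth $d^{\rk(f)}(1+\ell)^{\rk(f)}$ across one increment of rank, while the additive term $1 + \sum_j \depth(s_j)$ accounts for the single step down in rank plus the depths accumulated along the subterm chains. I would use Lemma~\ref{l:Gl:sum} to replace the $\mG{\ell}$-value of the intermediate sequence by the sum of the $\mG{\ell}$-values of its components, and $\depth$-subadditivity to bound $\sum \depth(u_i) \le \sum_j \depth(s_j)$.

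The main obstacle I expect is the bookkeeping in clause~\ref{d:spl:sb}: a sequence of up to $\ell$ terms is produced, and each of these may itself begin a long descending chain, so the naive sum could overshoot the claimed bound. The delicate point is to show that the width-$\ell$ branching contributes exactly one extra factor of $(1+\ell)$ per rank level, rather than compounding multiplicatively along the chain length. I would control this by arguing that after the initial clause~\ref{d:spl:sb} decomposition, every component is a term rooted by a strictly smaller symbol, so subsequent steps are governed by the induction hypothesis at lower rank, and the $\ell$-fold fan-out is charged once against the $(1+\ell)$ factor carried by the rank decrement. Establishing this "charge once per level" accounting cleanly—rather than allowing it to accumulate—is where the argument must be most careful.
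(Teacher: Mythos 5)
Your overall strategy is the same as the paper's: induction on $\rk (f)$, a case split between clause \ref{d:spl:st} and clause \ref{d:spl:sb} of Definition \ref{d:spl}, the observation that every component of a clause-\ref{d:spl:sb} successor is itself a clause-\ref{d:spl:st} successor, and Lemma \ref{l:Gl:sum} to sum the $\mG{\ell}$-values over the components of the resulting sequence. However, one step you state would fail as written: the ``depth-subadditivity'' claim $\sum_i \depth (u_i) \leq \sum_j \depth (s_j)$ is false in general, because a clause-\ref{d:spl:st} successor $g(t_1, \dots, t_l)$ may duplicate subterms. For instance, from $f(s_1)$ one may step to $g(s_1, s_1)$, where $\sum_i \depth (u_i) = 2 \depth (s_1) > \depth (s_1)$. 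What the paper proves instead is $1 + \sum_{j=1}^{l} \depth (t_j) \leq d \left( 1 + \sum_{j=1}^{k} \depth (s_j) \right)$, which holds because $l = \arity (g) \leq d$ and each $t_j$, being a proper subterm of $f(s_1, \dots, s_k)$, has depth at most $\max_j \depth (s_j)$. This is exactly where the factor $d$ in the bound is consumed: it pays for the duplication your subadditivity claim ignores. In your accounting you allot $d$ to ``branching into arguments'' \emph{and} invoke subadditivity; the former is the correct repair of the latter, so as long as you spend the $d$ factor on the inequality above (and not twice), your induction goes through and reproduces the paper's computation (inequality (\ref{eq:Gleq})). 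You should also note, when applying the induction hypothesis to $g(t_1, \dots, t_l)$, that the $t_j$ are again closed constructor terms and that $l \leq \ell$, both needed for the hypothesis of the lemma.

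Your anticipated ``main obstacle'' --- that the width-$\ell$ fan-out of clause \ref{d:spl:sb} might compound along the chain length --- is in fact a non-issue once the proof is organised this way. Since $\mG{\ell} (s) \leq 1 + \max \{ \mG{\ell} (b) \mid s \spl{\ell} b \}$, it suffices to bound every one-step successor; and if $s \cspl{sb} \lst{t_1 \cdots t_l}$, then $s \cspl{st} t_j$ for every $j$, so the clause-\ref{d:spl:st} bound $d^{\rk (f)} \cdot (1+\ell)^{\rk (f) -1} \cdot \left( 1 + \sum_{j=1}^{k} \depth (s_j) \right)$ applies to each component, and Lemma \ref{l:Gl:sum} multiplies it by $l \leq \ell < 1 + \ell$ exactly once. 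The ``charge once per level'' bookkeeping you worry about is thus automatic from the definition of $\mG{\ell}$ together with Lemma \ref{l:Gl:sum}; the genuinely delicate point of the proof is the depth inequality discussed above, not the sequence width.
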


\begin{proof}
Let $s = f(s_1, \dots, s_k)$.
We show the lemma by induction on 
$\rk (f)$.
In the base case $\rk (f) = 0$, all the possible reduction is
$f(s_1, \dots, s_k) \spl{\ell} \lst{}$, and hence
$\mG{\ell} (s) \leq 1$.
For the induction step, suppose $\rk (f) > 0$.
It suffices to show that for any $b$, if 
$s \spl{\ell} b$, then
$\mG{\ell} (b) <
d^{\rk (f)} \cdot (1 + \ell)^{\rk (f)} \cdot
\left( 1 + \sum_{j=1}^k \depth (s_j) \right)
$
holds.
This is shown by case analysis splitting into
$s \cspl{st} b$ and $s \cspl{sb} b$

{\sc Case.} $s \cspl{st} b = g(t_1, \dots, t_l)$:
In this case, $f >_{\FS} g$, 
$s \rhd t_j$ for all $j \in \{ 1, \dots, l \}$,
and $l \leq \ell$.
Since $\rk (f) > \rk (g)$, 
the induction hypothesis yields
$\mG{\ell} (b) \leq 
d^{\rk (g)} \cdot (1 + \ell)^{\rk (g)} \cdot
\left( 1 + \sum_{j=1}^l \depth (t_j) \right)$.
On the other hand,
$1 + \sum_{j=1}^l \depth (t_j) \leq d
 \left( 1 + \sum_{j=1}^k \depth (s_j) \right)
$,
and hence
\begin{eqnarray}
\mG{\ell} (b) &\leq&
\textstyle
d^{\rk (g)} \cdot (1 +\ell)^{\rk (g)} \cdot
d \left( 1 + \sum_{j=1}^k \depth (s_j) \right)
\nonumber
\\
&\leq&
\textstyle
d^{\rk (f)} \cdot (1 + \ell)^{\rk (f)-1} \cdot
\left( 1 + \sum_{j=1}^k \depth (s_j) \right).
\label{eq:Gleq}
\end{eqnarray}

{\sc Case.} $s \cspl{sb} b = \lst{t_1 \cdots t_l}$:
In this case,
$l \leq \ell$ and 
$s \cspl{st} t_j$ for all $j \in \{ 1, \dots, l \}$.
By (\ref{eq:Gleq}) in the previous case,
$\mG{\ell} (t_j) \leq 
d^{\rk (f)} \cdot (1 + \ell)^{\rk (f)-1} \cdot
\left( 1 + \sum_{j=1}^k \depth (s_j) \right)
$
holds for all $j \in \{ 1, \dots, l \}$.
Therefore
\begin{eqnarray*}
\mG{\ell} (b) &\leq&
\textstyle
\ell \cdot
d^{\rk (f)} \cdot (1 + \ell)^{\rk (f)-1} \cdot
\left( 1 + \sum_{j=1}^k \depth (s_j) \right) 
\qquad (\text{by Lemma \ref{l:Gl}.\ref{l:Gl:sum}})
\\
&<&
\textstyle
d^{\rk (f)} \cdot (1+ \ell)^{\rk (f)} \cdot 
\left( 1 + \sum_{j=1}^k \depth (s_j) \right).
\end{eqnarray*}
\qed
\end{proof}

\section{Predicative embedding of safe recursive unfolding graph rewriting into 
$\spl{\ell}$}
\label{s:pint}

In this section we present the {\em predicative} interpretation of term
graphs into sequences of terms, showing that, by the interpretation,
rewriting sequences by safe recursive unfolding graph rewrite rules can
be embedded into the termination order
$\spl{\ell}$ presented in the previous section (Theorem \ref{t:context}). 
This yields that the length of any rewriting
sequence by safe recursive unfolding graph rewrite rules starting with a
term graphs whose arguments are already normalised can be bounded by a polynomial in the sizes of the normal
argument subgraphs only.
The predicative interpretation is defined modifying
the predicative interpretations for terms, which
stem from \cite{AM05} and are employed in \cite{AM08,AEM11,AEM12}.

\begin{definition}
\label{d:safepath}
\normalfont
\begin{enumerate}
\item A path $\nseq{v_1, m_1, \dots, v_{k-1}, m_{k-1}, v_k}$
      in a term graph $G$ is called a {\em safe} one if
      $v_{j+1} \in \safe (v_j)$ for all $j \in \{ 1, \dots, k-1 \}$.
\label{d:safepath:1}
\item Given a signature $\FS = \CS \cup \DS$,
      we define a subset 
      $\mathcal{TG}_{\nrm} (\FS) \subseteq \mathcal{TG(F)}$.
      Let $G \in \mathcal{TG(F)}$ with 
      $\att_G (\rootnode_G) = 
       \sn{v_1, \dots, v_k}{v_{k+1}, \dots, v_{k+l}}$.
      Then $G \in \mathcal{TG}_{\nrm} (\FS)$ if
      $G \in \mathcal{TG(C)}$, or
      $G \seg v_j \in \mathcal{TG(C)}$ for each
      $j \in \{ 1, \dots, k \}$ and
      $G \seg v_j \in \mathcal{TG}_{\nrm} (\FS)$ for each
      $j \in \{ k+1, \dots, k+l \}$.
\label{d:safepath:2}
\end{enumerate}
\end{definition}

\begin{lemma}     
\label{l:safepath}
Let $\GS$ be a set of safe recursive unfolding graph rewrite rules over
 a signature $\FS$ and
$G \in \mathcal{TG}_{\nrm} (\FS)$.
\begin{enumerate}
\item  For any redex $(R, \varphi)$ in $G$ with a rewrite rule
$R = (H, l, r) \in \GS$ and a homomorphism 
$\varphi : H \seg l \rightarrow G$,
the node $\varphi (l)$ lies on a safe path from $\rootnode_G$ in $G$.
\label{l:safepath:1}
\item If $G \rew H$, then
$H \in \mathcal{TG}_{\nrm} (\FS)$.
\label{l:safepath:2}
\end{enumerate} 
\end{lemma}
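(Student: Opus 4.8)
The plan is to prove the two parts in order, using Corollary~\ref{c:sruf} as the main tool for the second part. Throughout I rely on the fact that a homomorphism preserves labels and the normal/safe separation, so that the left root $\varphi(l)$ of any redex carries the defined symbol prescribed by clause~\ref{d:uf:f} of Definition~\ref{d:uf}, that is $\lab_G(\varphi(l)) = f \in \DS$.

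For Part~\ref{l:safepath:1} I would prove, by induction following the recursive Definition~\ref{d:safepath} of $\mathcal{TG}_{\nrm}(\FS)$, the slightly more general claim that \emph{every} node $u$ with $\lab_G(u) \in \DS$ lies on a safe path from $\rootnode_G$. If $G \in \mathcal{TG(C)}$ the claim is vacuous, as no node is labelled by a defined symbol. Otherwise write $\att_G(\rootnode_G) = \sn{v_1,\dots,v_k}{v_{k+1},\dots,v_{k+l}}$. For $u = \rootnode_G$ the one-node path is (vacuously) safe. For $u \neq \rootnode_G$, the node $u$ cannot belong to any normal subgraph $G \seg v_j$ with $j \leq k$, since $G \seg v_j \in \mathcal{TG(C)}$ contains only constructor-labelled nodes whereas $\lab_G(u) \in \DS$; hence every path from $\rootnode_G$ reaching $u$ begins with a safe edge $\rootnode_G \to v_j$ for some $j > k$, and $u \in V_{G \seg v_j}$. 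Applying the induction hypothesis to $G \seg v_j \in \mathcal{TG}_{\nrm}(\FS)$ yields a safe path from $v_j$ to $u$, which I prepend with the safe step $\rootnode_G \to v_j$. Taking $u = \varphi(l)$ settles Part~\ref{l:safepath:1}.

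For Part~\ref{l:safepath:2} it is convenient to first reformulate membership in $\mathcal{TG}_{\nrm}(\FS)$ as the invariant that every node reachable from $\rootnode_G$ by a path containing at least one normal edge roots a constructor term graph; the equivalence with Definition~\ref{d:safepath} is a routine unfolding of the recursion and, together with Part~\ref{l:safepath:1}, implies that a defined node such as $\varphi(l)$ has only safe incoming edges. The step $G \rew H$ replaces $G \seg \varphi(l)$ by the instantiated right-hand side rooted at the image of $r = w_1$ and redirects every edge pointing to $\varphi(l)$ to that image. I would then verify the invariant for $H$ by inspecting its normal edges. A normal edge inherited from $G$ keeps its target, because the redirected edges are all safe, and that target already rooted a constructor term graph in $G$ which the rewrite leaves untouched, constructor graphs containing no redex and hence no node subject to redirection. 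A fresh normal edge emanates from one of the newly created nodes (a copy of some $w_j$), and by Corollary~\ref{c:sruf} its target is either some $\varphi(v_i)$, lying inside the constructor graph $G \seg \varphi(v_1)$, or some normal argument $\varphi(x_i)$ of $\varphi(l)$ with $i \leq k$, which likewise roots a constructor graph in $G$; both are preserved by the rewrite. In every case the target of a normal edge of $H$ roots a constructor term graph, so $H$ satisfies the invariant and $H \in \mathcal{TG}_{\nrm}(\FS)$.

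The main obstacle I anticipate lies in the bookkeeping around the graph-rewriting mechanics under sharing: I must confirm that the global edge redirection touches only safe edges and that the normal (constructor) subgraphs are literally unchanged, that is, neither relabelled, redirected, nor garbage-collected, so that the invariant transfers verbatim. Corollary~\ref{c:sruf}, which confines the normal successors of the newly created nodes to the spine $\{v_1,\dots,v_m\}$ and to the normal arguments $x_1,\dots,x_k$ of the redex, is precisely what excludes the creation of a normal edge into genuinely new, defined-labelled material, and is therefore the crux of the argument.
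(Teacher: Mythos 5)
Your proposal is correct, and its two key ingredients coincide with the paper's: Part~\ref{l:safepath:1} is extracted from the structure of $\mathcal{TG}_{\nrm}(\FS)$, and Part~\ref{l:safepath:2} hinges on Corollary~\ref{c:sruf} confining the normal successors of the fresh nodes to the old spine $\varphi(v_i)$ and the normal arguments $\varphi(x_i)$. The organisation, however, differs in both parts. For Part~\ref{l:safepath:1} the paper argues by contradiction: if $\varphi(l)$ were not on a safe path, the dichotomy of Definition~\ref{d:safeterm} (condition~\ref{d:safeterm:1}) would place $\varphi(l)$ inside a normal constructor subgraph $G \seg v_1 \in \mathcal{TG(C)}$, making $G \seg \varphi(l)$ unrewritable. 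You instead prove directly, by induction on the recursion of Definition~\ref{d:safepath}, the stronger claim that \emph{every} $\DS$-labelled node lies on a safe path; this needs only $\lab_G(\varphi(l)) \in \DS$ rather than rewritability, bypasses Definition~\ref{d:safeterm} entirely, and your observation that a $\DS$-labelled node cannot sit inside a normal subgraph (which is in $\mathcal{TG(C)}$) is exactly the sharing-related point the induction must handle. For Part~\ref{l:safepath:2} the paper reduces to $H \seg r' \in \TGnrm$ and runs a structural induction over $H \seg v$ inside the contractum, splitting on $v \in \nrm(u)$ versus $v \in \safe(u)$; you instead recast membership in $\TGnrm$ as the global invariant that every node reached via a path containing a normal edge roots a constructor graph, and then audit all normal edges of $H$ (inherited ones keep their targets because every edge into $\varphi(l)$ is safe; fresh ones are controlled by Corollary~\ref{c:sruf}). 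Your global formulation buys an explicit justification of the step the paper leaves implicit in its ``it suffices to show $H \seg r' \in \TGnrm$'' --- namely that the replacement happens only in safe positions, so the redirection touches no normal edge and the constructor subgraphs are literally unchanged --- at the modest cost of the routine equivalence proof for the invariant, and you correctly identify this redirection bookkeeping as the crux rather than sweeping it under the induction.
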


\begin{proof}
{\sc Property} \ref{l:safepath:1}.
Assume that $\varphi (l)$ is not on any safe path from $\rootnode_G$.
Then, there exists a path 
$\nseq{v_0, m_0, \dots, v_{k-1}, m_{k-1}, v_k}$
such that
$v_k = \varphi (l)$,
$\lab_G (v_j) \in \CS$ for each 
$j \in \{ 1, \dots, k-1 \}$
and
$v_1 \in \nrm (v_0)$.
Since constructor symbols only have safe arguments, it holds that
$\lab_G (v_0) \in \DS$.
Hence, by the condition \ref{d:safeterm:1} in Definition
 \ref{d:safeterm},
for any path
$\nseq{u_0, n_0, \dots, u_{l-1}, n_{l-1}, u_l}$
      $(1 \leq k)$ in $G$, 
if $u_l = \varphi (l)$, $\lab_G (u_0) \in \DS$, and
      $\lab_G (u_j) \in \CS$ for each 
      $j \in \{ 1, \dots, k-1 \}$, then
      $u_1 \in \nrm (u_0)$.
This means $G \seg v_1 \in \mathcal{TG(C)}$ by the definition of the
 class
$\mathcal{TG}_{\nrm} (F)$, and thus
$G \seg \varphi (l) = (G \seg v_1) \seg \varphi (l)$ is not rewritable. 

{\sc Property} \ref{l:safepath:2}.
Suppose that $G$ results in $H$ by applying a redex 
$(R, \varphi)$ for a rule $R = (G_0, l, r) \in \GS$.
Since $\varphi (l)$ lies on a safe path from $\rootnode_G$ by Property
 \ref{l:safepath:1}, it suffices to show that 
$H \seg r' \in \TGnrm$ for the node 
$r' \in H$ corresponding to $r \in V_{G_0}$.
Let $v \in V_{H \seg r'} \setminus \{ r' \}$.
By structural induction over $H \seg v$,
employing Corollary \ref{c:sruf},
it can be shown that
$H \seg v \in \mathcal{TG(C)}$ if $v \in \nrm (u)$ for some
$u \in V_{H \seg r'}$,
and
$H \seg v \in \TGnrm$ if $v \in \safe (u)$ for some
$u \in V_{H \seg r'}$.
Then
$H \seg r' \in \TGnrm$
follows accordingly.
\qed
\end{proof}

\begin{definition}[Interpretation of term graphs into unlabeled graphs]
\label{d:pj}
\normalfont
In order to define the predicative interpretation, we define an
 interpretation $\pj$ of term graphs into {\em unlabeled graphs}.
For a term graph $G$, $\pj (G)$ denotes the directed graph 
$(V_{\pj (G)}, E_{\pj (G)})$ with the root
$\rootnode_{\pj (G)} = \rootnode_G$ consisting of the set 
$V_{\pj (G)} =  V_G$ of vertices, and the set 
$E_{\pj (G)}$ of edges defined as follows.
For an edge $(u, v) \in E_G$, $(u, v) \in E_{\pj (G)}$ holds if either 
\ref{d:pj:1} or \ref{d:pj:2} below holds.
\begin{enumerate}
\item There are no distinct two safe paths
      from $\rootnode_G$ to $v$.
\label{d:pj:1}
\item The edge $(u, v)$ lies on a safe path 
      $\nseq{u_1, m_1, \dots, u_{k-1}, m_{k-1}, v}$
      from $\rootnode_G$ to $v$, i.e., $u_1 = \rootnode_G$ and $u_{k-1} = u$,
      and, for any distinct safe path  
      $\nseq{v_1, n_1, \dots$, $v_{l-1}, n_{l-1}, v}$
      from $\rootnode_G$ to $v$,
      $m_i < n_j$ holds
      whenever $u_i = v_j$ and $m_i \neq n_j$.
      Namely, a safe path is kept by the interpretation $\pj$ if it is the
      leftmost one. 
\label{d:pj:2}
\end{enumerate}
\end{definition}

\begin{example}
\label{ex:pj}
Let us consider a term graph $G$ pictured in Figure \ref{fig:pj} with 
$\att_G (v_0)$ $= \sn{u_1, u_2}{v_1, v_2}$
and
$\att_G (v_1) = \sn{u_2}{v_2, v_2}$.
Since $u_1, u_2 \in \nrm (v_0)$ and $u_2 \in \nrm (v_1)$,
all of the edges
$(v_0, u_1)$, $(u_1, u_2)$, $(v_0, u_2)$ and $(v_1, u_2)$ are trivially preserved.
The path $\nseq{v_0, 3, v_1}$ is the unique safe path from $\rootnode_G$ to 
$v_1$, and hence the edge $(v_0, v_1)$ is preserved as well. 
Consider the edge $(v_1, v_2) \in E_G$.
There are two distinct safe paths 
$\nseq{v_0, 3, v_1, 2, v_2}$ and 
$\nseq{v_0, 3, v_1, 3, v_2}$
from $\rootnode_G$ to $v_2$.
The node $v_1$ lies on the both safe paths, and hence 
the edge $(v_1, v_2)$ is also preserved.
Finally, consider the edge
$(v_0, v_2) \in E_G$.
There are three distinct safe paths
$\nseq{v_0, 3, v_1, 2, v_2}$,
$\nseq{v_0, 3, v_1, 3, v_2}$
and
$\nseq{v_0, 4, v_2}$
from $\rootnode_G$ to $v_2$.
The edge $(v_0, v_2)$ lies only on the last one, which is not the
 leftmost, and thus the edge $(v_0, v_2)$ is not
 preserved.
Summing up, we obtain the unlabeled graph $\pj (G)$ as pictured in Figure \ref{fig:pj}.
\begin{figure}[t]
\begin{equation*}
  \xymatrix{ & & v_0 \ar[dl] \ar[ddl] \ar@/_/[d] \ar@/^3mm/[dd] & \\
            G = & u_1 \ar[d] & v_1 \ar[dl] \ar@/_/[d] \ar[d] &
            \mapsto \\
            & u_1 & v_2 &
           }
\qquad
  \xymatrix{ & & v_0 \ar[dl] \ar[ddl] \ar[d] \\
            \pj (G) = & u_1 \ar[d] & v_1 \ar[dl] \ar[d] \\
            & u_1 & v_2
           }
\end{equation*}
\caption{Interpretation of term graphs into unlabeled graphs}
\label{fig:pj}
\end{figure}
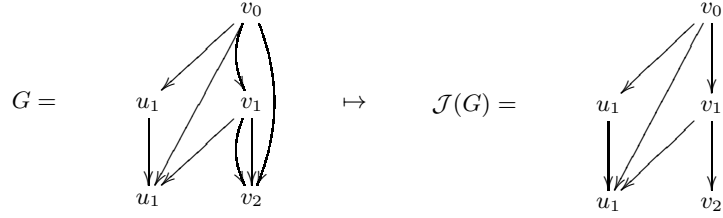
\end{example}

For each function symbol $f \in \FS$ with $k$ normal argument
positions, let $\fn$ denote a fresh function symbol with $k$ argument
positions.
%For any constant $c \in \CS$, we identify $\fsn{c}$ with $c$ itself.
We write $\FSn$ to denote the new signature
$\{ \fn \mid f \in \FS \}$.
For a term graph $G$, we write 
$\term (G)$ to denote the standard term representation of $G$, i.e.,
$\term (G) = \lab_G (\rootnode_G)
 (\sn{\term (G \seg v_1), \dots, \term (G \seg v_k)}%
     {\term (G \seg v_{k+1}), \dots, \term (G \seg v_{k+l})})$
if
$\att_G (\rootnode_G) =
 \sn{v_1, \dots, v_k}{v_{k+1}, \dots, v_{k+l}}$.
For two successors $v_0, v_1$ of a node $v$, if $v_j$ is the $k_j$th
successor for each $j \in \{ 0, 1 \}$ and $k_0 < k_1$, then we write
$v_0 <_{(G, v)} v_1$, or simply write $v_0 < v_1$ if no confusion likely arises.
Furthermore, we extend the notation $G \seg v$ to unlabeled (acyclic)
directed graphs in the most natural way.

\begin{definition}[Predicative interpretation of term graphs]
\label{d:pint}
\normalfont
%We define an interpretation $\PINT$.
Let $G$ be a closed term graph over a signature $\FS = \CS \cup \DS$,
$f = \lab_G (\rootnode_G)$, and
$\att_G (\rootnode_G)$ $=$ 
$\sn{v_1, \dots, v_k}{v_{k+1}, \dots, v_{\arity (f)}}
$.
%where 
%$\{ v_1, \dots, v_k \} \subseteq \nrm (\rootnode_G)$ and
%$\{ v_{k+1}, \dots, v_{\arity (f)} \} \subseteq \safe (\rootnode_G)$.
Suppose that
$\{ u_1, \dots, u_n \} = 
 \{ v \in V_{G} \mid v \in \safe (\rootnode_G)
$ and  
$  (\rootnode_{G}, v) \in E_{\pj (G)}
 \}$
and $u_1 < \cdots < u_n$.
Then we define an interpretation 
$\PINT : \mathcal{TG(F)} \rightarrow 
 \mathcal{T(\FS \cup \FSn \cup \{ \circ\})}$
by
\begin{equation*}
%\begin{eqnarray*}
%  && 
\PINT (G) = %\\
%  & = &
  \begin{cases}
  \lst{} \quad (\text{the empty sequence})
  \quad \hfill \text{if } G \in \mathcal{TG(C)},
  & \\
  \lst{\fn (\term (G \seg v_1), \dots, \term (G \seg v_k))} \con
  \PINT (G \seg u_1) \con \cdots \con \PINT (G \seg u_n)
  \text{ o.w.} &
  \end{cases}
%\end{eqnarray*}
\end{equation*}
%Finally $\PINT (G)$ is defined by $\PINT (G) = \PINT (G^-)$.
\end{definition}
We note that, for a node $v \in G$, $\pj (G \seg v) = \pj (G) \seg v$
does not hold in general.
Thus it should be understood that the result $\pint (G \seg v)$ of the
interpretation depends on $\pj (G) \seg v$ not on $\pj (G \seg v)$.

\begin{example}
\label{ex:pint}
Consider again the term graph $G$ in Example \ref{ex:pj}.
Let $\mf = \lab_G (v_0)$, $\ms = \lab_G (u_1)$,
$\m{0} = \lab_G (u_2)$, $\mh = \lab_G (v_1)$,
$\mg = \lab_G (v_2)$, and $\mh, \mg \in \DS$.
Then inductively one can see that the interpretation $\pint$ works for 
$G$ as follows.

$
\begin{array}{rcl}
\pint (G \seg v_2) &=& \lst{\fsn{\mg}}, \\
\pint (G \seg v_1) &=& \lst{\fsn{\mh} (\m{0})} \con \pint (G \seg v_2)
= \lst{\fsn{\mh} (\m{0}) \ \fsn{\mg}}, \\
\pint (G) &=& \lst{\fsn{\mf} (\ms (\m{0}), \m{0})} \con \pint (G \seg v_1)
= \lst{\fsn{\mf} (\ms (\m{0}), \m{0}) \ \fsn{\mh} (\m{0}) \ \fsn{\mg}}.
\end{array}
$

\noindent
If the interpretation $\pj$ is not performed, then
$G$ would be translated into the sequence
$\lst{\fsn{\mf} (\ms (\m{0}), \m{0}) \ \fsn{\mh} (\m{0}) \ \fsn{\mg} \ 
       \fsn{\mg} \  \fsn{\mg}}$, 
in which the term $ \fsn{\mg}$ is duplicated unnecessarily.
\end{example}

%For a labeled graph $G$, we call a triple $(G', l', r')$ an
%{\em instance} of a graph rewrite rule $(G, l, r)$
%if there exists a homomorphism
%$\varphi : G \rightarrow G'$ such that
%$\varphi (l) = l'$ and
%$\varphi (r) = r'$.
%Such an instance is called {\em closed} if the labeled graph $G'$ is closed.

\begin{lemma}
\label{l:pint}
Let $\GS$ be a set of safe recursive constructor unfolding graph rewrite
 rules over a signature $\FS$.
Suppose that $G \rew H$ is induced by a redex $(R, \varphi)$ in a closed term graph 
$G \in \mathcal{TG}_{\nrm} (\FS)$
for a rule $R = (G', l, r) \in \GS$ and a homomorphism
$\varphi : G' \seg l \rightarrow G$.
Let $r' \in V_H$ the node corresponding to $r \in V_{G'}$.
Then, for the interpretations $\pint$ defined for $G$ and $H$,
$\PINT (G \seg \varphi (l)) \spl{\ell} 
 \PINT (H \seg r')$
holds for
$\ell =
\max (\{ |G' \seg r| \} \cup \{ \arity (f) \mid f \in \FS \})$.
\end{lemma}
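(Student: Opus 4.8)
The plan is to unfold both interpretations according to Definition~\ref{d:pint} and then match them using the three clauses of Definition~\ref{d:spl}, with the permutation clause~\ref{d:spl:ab} sitting at the outermost level. First I would record the shape of the left-hand side. Since $G \in \TGnrm(\FS)$ and $\varphi(l)$ lies on a safe path from $\rootnode_G$ by Lemma~\ref{l:safepath}.\ref{l:safepath:1}, an induction along that path (using the recursive clause of Definition~\ref{d:safepath}.\ref{d:safepath:2}) shows that $G \seg \varphi(l)$ again lies in $\TGnrm(\FS)$, so its normal successors $\varphi(v_1), \varphi(x_1), \dots, \varphi(x_k)$ all carry constructor term graphs. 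Writing $f = \lab_G(\varphi(l))$ and $\vec a = (\term(G \seg \varphi(v_1)), \term(G \seg \varphi(x_1)), \dots, \term(G \seg \varphi(x_k)))$, Definition~\ref{d:pint} gives
\[
\PINT(G \seg \varphi(l)) = \lst{\fn(\vec a)} \con C,
\qquad
C = \PINT(G \seg u_1) \con \cdots \con \PINT(G \seg u_n),
\]
where $u_1, \dots, u_n$ are the $\pj$-preserved safe successors of $\varphi(l)$, i.e.\ the relevant safe-parameter subgraphs.

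Next I would analyse $H \seg r'$. Its backbone is the image of the $w$-nodes, each labelled by a $\Theta$-symbol $g = \varphi(\lab_{G'}(v_j))$, and by Corollary~\ref{c:sruf} every normal successor of such a node either carries a constructor subgraph that is a subterm of the recursion argument $\term(G \seg \varphi(v_1))$, or is one of the normal parameters $\varphi(x_1), \dots, \varphi(x_k)$. Unfolding $\PINT$ along this backbone, I would show that $\PINT(H \seg r')$ is, up to the permutation permitted by clause~\ref{d:spl:ab}, of the form
\[
\lst{\fsn{g_1}(\vec b_1) \cdots \fsn{g_p}(\vec b_p)} \con C',
\]
where each $g_i$ is a $\Theta$-symbol, each component of $\vec b_i$ is a subterm of $\term(G \seg \varphi(v_1))$ or equals some $\term(G \seg \varphi(x_m))$, and $C'$ collects the contributions of the safe parameters.

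With both sides in this form, the comparison splits into two independent parts handled by clause~\ref{d:spl:ab}. The safe-parameter blocks $C$ and $C'$ are matched to one another by reflexivity ($\eqspl{\ell}$), and the single head term $\fn(\vec a)$ must dominate the whole backbone block via $\fn(\vec a) \cspl{sb} \lst{\fsn{g_1}(\vec b_1) \cdots \fsn{g_p}(\vec b_p)}$. This in turn reduces, for each $i$, to $\fn(\vec a) \cspl{st} \fsn{g_i}(\vec b_i)$: the precedence satisfies $\fn \sp \fsn{g_i}$ because $f$ is defined from the $\Theta$-symbols; each component of $\vec b_i$ is a subterm of one of the arguments of $\fn(\vec a)$, whence $\fn(\vec a) \rhd b$ for each such $b$; and the two length/arity side conditions $p \leq \ell$ and $\arity(\fsn{g_i}) \leq \ell$ both follow from $\ell = \max(\{|G' \seg r|\} \cup \{\arity(f) \mid f \in \FS\})$, since $p$ is bounded by the number of backbone nodes of $G' \seg r$ and $\arity(\fsn{g_i}) \leq \arity(\lab_{G'}(w_j))$.

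The hard part will be establishing that the safe-parameter blocks $C$ and $C'$ genuinely coincide. As warned after Definition~\ref{d:pint}, the interpretation of a subgraph depends on the global $\pj$-structure of the ambient graph, not on the subgraph in isolation, and a safe parameter reached by a single safe edge from $\varphi(l)$ in $G \seg \varphi(l)$ is reached through the $w$-backbone in $H \seg r'$. The work here is to check that the rewrite only reshapes the normal/backbone part of the graph, leaving the internal safe-path structure of each safe-parameter subgraph, and the leftmost safe path reaching it, in exact correspondence between $G \seg \varphi(l)$ and $H \seg r'$; this correspondence, together with Corollary~\ref{c:sruf} and Lemma~\ref{l:safepath}, is what lets the $\pj$-interpretations of the shared safe parameters be identified, so that clause~\ref{d:spl:ab} applies with the backbone block as the strictly decreasing component.
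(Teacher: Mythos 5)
Your proposal is correct and follows essentially the same route as the paper's proof: the same unfolding of $\PINT$ on both sides, the precedence $\fn \sp \fsn{g}$ with subterm conditions supplied by Corollary~\ref{c:sruf} and $V_{G \seg v_1} = \{v_1, \dots, v_m\}$, reflexive matching of the shared safe-parameter blocks via the permutation clause (packaged in the paper as Lemma~\ref{l:Gl}.\ref{l:Gl:perm}), and the same counting of preserved backbone nodes against $\ell$ --- the paper merely organises your single flattened $\cspl{sb}$ step as a structural induction over $H \seg w_i$ with intermediate bounds $\ell_i = \max\{|V_{\safe(w_i)}|, d\}$ and the pairwise disjointness of the sets $V_{\safe(u_j)}$. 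The coincidence of the safe-parameter blocks that you flag as the hard part is handled in the paper at the same level of detail, namely by asserting the set equality $\{v \mid (y, v) \in E_{\pj(G) \seg y}\} = \{v \mid (w_1, v) \in E_{\pj(H) \seg w_1}\}$ for the nodes among $x_{k+1}, \dots, x_{k+l}$; the only point you omit entirely is the one-line degenerate case $H \seg r' \in \mathcal{TG(C)}$, where $\PINT(G \seg \varphi(l)) \cspl{ab} \lst{} = \PINT(H \seg r')$.
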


\begin{proof}
Let $R = (G', l, r) \in \GS$ define a
function symbol $f$ over $\FS \supseteq \Sigma \cup \Theta$
with $\arity (f) = 1 + k + l$.
In case $H \seg r' \in \mathcal{TG(C)}$,
clearly
$\pint (G \seg \varphi (l)) \cspl{ab} \lst{} =
 \pint (H \seg r')$
holds since $G \seg \varphi (l) \not\in \mathcal{TG(C)}$.
In the sequel, we suppose $H \seg r' \not\in \mathcal{TG(C)}$.
Let the set $V_{G'}$ of vertices consist of
$y$, $v_1, \dots$, $v_m, w_1, \dots, w_m$, $x_1, \dots, x_k$, 
$x_{k+1}, \dots, x_{k+l}$ 
as specified in Definition \ref{d:uf}
and \ref{d:sruf}, where
$\{ x_1, \dots, x_k \} \subseteq \nrm (y)$ and
$\{ x_{k+1}, \dots, x_{k+l} \} \subseteq \safe (y)$
hold.
In particular, $l = y$, $r = w_1$ and $\lab_{G'} (l) = f$ hold by definition.
To make the presentation simpler, let us identify the nodes
$y$, $v_1, \dots, v_m$, $x_1, \dots, x_k$,
$x_{k+1}, \dots, x_{k+l} \in V_{G' \seg l}$
with the nodes in $V_{G}$ corresponding by the homomorphism
$\varphi$
and the nodes
$v_2, \dots, v_m$, $w_1, \dots, w_m$, $x_1, \dots, x_k$,
$x_{k+1}, \dots, x_{k+l} \in V_{G' \seg r}$
with the corresponding nodes in $V_H$, 
e.g., 
$y = \varphi (l)$ and $w_1 = r'$.
We write $g$ to denote 
$\lab_{H} (w_1)$.
Then, by the interpretations $\pint$ defined for $G$ and $H$, the term graphs $G \seg y$ and $H \seg w_1$ are respectively transformed into
 the following sequences of terms.
\begin{eqnarray*}
\PINT (G \seg y) &=&
\lst{\fn (\term (G \seg v_1), 
          \term (G \seg x_1), \dots, \term (G \seg x_k))
    }
\\
&& \hspace{5.5cm}
\con 
\PINT (G \seg z_{1}) \con \cdots \con
\PINT (G \seg z_{l'}),
\\
\PINT (H \seg w_1) &=&
[\gn (\term (H \seg v_{j_1}), \dots, \term (H \seg v_{j_n}),
          \term (H \seg x_1), \dots, \term (H \seg x_k))]
\\
&& \hspace{1cm}
\con 
\PINT (H \seg z_{1}) \con \cdots \con
\PINT (H \seg z_{l'})
\con 
\PINT (H \seg u_{1}) \con \cdots \con
\PINT (H \seg u_{n'}),
\end{eqnarray*}
where 
$\att_G (v_1) = v_{j_1}, \dots, v_{j_n}$,
and
$z_1, \dots, z_{l'}$
and
$u_1, \dots, u_{n'}$
denotes the sequence of nodes such that 
\begin{itemize}
\item $\{ z_1, \dots, z_{l'} \} =
 \{ v \in \{ x_{k+1}, \dots, x_{k+l} \} \mid
%    v \in \safe (w_1)
%$
%and
   (y, v) \in E_{\pj (G) \seg y}
 \}
=
 \{ v \in \{ x_{k+1}, \dots$, $x_{k+l} \} \mid
%    v \in \safe (w_1)
%$
%and
   (w_1, v) \in E_{\pj (H) \seg w_1}
 \}
$,
$z_1 < \cdots < z_{l'}$, and
\item $\{ u_1, \dots, u_{n'} \} =
 \{ v \in \{ w_{j_1}, \dots, w_{j_n} \} \mid
%    v \in \safe (w_1)
%$
%and
   (w_1, v) \in E_{\pj (H) \seg w_1}
 \}
$
and $u_1 < \cdots < u_{n'}$.
\end{itemize}
%We can assume that
%$\{ u_1, \dots, u_{n'} \} \subseteq \{ w_1, \dots, w_ m \}$.
Define a precedence $>$ over $\FSn$ as
$\fn > \fsn{h}$ for any $h \in \Theta$.
Write $s$ to denote $\term (G' \seg v_1)$, 
$s_j$ to denote
$\term (G \seg x_j)$ for each $j \in \{ 1, \dots, k \}$,
$t_i$ to denote
$\term (H \seg v_{j_i})$ for each $i \in \{ 1, \dots, n' \}$,
and write
$t_j'$ to denote
$\term (H \seg x_{j})$ for each $i \in \{ 1, \dots, k \}$.
First we show that
$\fn (s, s_1, \dots, s_k) \spl{\ell} 
 \gn (t_{1}, \dots, t_{n}, t'_1, \dots, t'_k)$
holds.
Since $V_{G \seg v_1} = \{ v_1, \dots, v_m \}$ by definition,
any of $H \seg v_{j_1}, \dots, H \seg v_{j_n}$ is a subgraph of
$G \seg v_1$, and hence
any of $t_{1}, \dots, t_{n}$ is a subterm of
$s$.
This yields
$\fn (s, s_1, \dots, s_k) \rhd t_{i}$
for all $i \in \{ 1, \dots, n\}$.
Moreover, since
$t_i = t'_i$ for every $i \in \{ 1, \dots, k \}$,
$\fn (s, s_1, \dots, s_k) \rhd t'_{i}$
also holds for all $i \in \{ 1, \dots, k \}$.
These together with $\fn > \gn$ and 
$\arity (\gn) \leq \arity (\mg)$ imply
\begin{equation}
 \fn (s, s_1, \dots, s_k)
  \caseref{\spl{d}}{\ref{d:spl:st}}
 \gn (t_{1}, \dots, t_{n}, t'_1, \dots, t'_k),
\label{e:pint:1}
\end{equation} 
where 
$d := \max \{ \arity (f) \mid f \in \FS \}$.  
For $i \in \{ 1, \dots, m \}$, let
$\ell_i = \max \{ |V_{\safe ( w_i)}|, d \}$,
where
\[
 V_{\safe (u)} := 
 V_{\pj (H) \seg u} \cap 
 \{ w_1, \dots, w_m
 \}.
\]
By structural induction over $H \seg w_i$, one can show that
\begin{equation}
 \fn (s, s_1, \dots, s_k) \caseref{\spl{\ell_i}}{\ref{d:spl:sb}} 
 \PINT (H \seg w_i)
\label{e:pint:2}
\end{equation}
holds for every $i \in \{ 2, \dots, m \}$
for the interpretation $\pint$ defined for $H$.
The base case can be shown in the same way as we proved 
(\ref{e:pint:1}).
Since
$\att_H (w_1) =
 \sn{v_{j_1}, \dots, v_{j_n}, x_1, \dots, x_k}%
    {x_{k+1}, \dots, x_{k+l}, w_{j_1}, \dots, w_{j_n}}$,
there is no edge
$(w_i, x_{j}) \in E_{\pj (H)}$
for any 
$i \in  \{ 2, \dots, m \}$
and
$j \in  \{ 1, \dots, k \}$.
Hence, instead of proving the general induction step, it would suffice to show the
 following orientation assuming (\ref{e:pint:2}).
\begin{equation}
\fn (s, s_1, \dots, s_k)
\caseref{\spl{\ell_1}}{\ref{d:spl:sb}}
\lst{\gn (t_{1}, \dots, t_{n}, t'_1, \dots, t'_k)}
\con \PINT (G' \seg u_{1}) \con \cdots \con
\PINT (G' \seg u_{n'}).
\label{e:pint:3}
\end{equation}
By the definition of the interpretation $\pj$,
$V_{\safe (u_1)}, \dots, V_{\safe (u_{n'})}$
are pair-wise disjoint, and hence
$|V_{\safe (w_1)}|  = 
 1 + \sum_{j=1}^{n'} |V_{\safe (u_j)}|
$.
This together with (\ref{e:pint:1})
enables us to deduce (\ref{e:pint:3}).
Replacing $\ell_1$ with $\ell$ by Lemma \ref{l:Gl}.\ref{l:Gl:mon},
the orientation (\ref{e:pint:3}) together with Lemma \ref{l:Gl}.\ref{l:Gl:perm} now
 allows us to conclude 
$\PINT (G \seg y) \cspl{ab} \PINT (H \seg w_1)$.
\qed
\end{proof}

Let $G \in \mathcal{TG(F)}$ and
$\att_G (\rootnode_G) = v_1, \dots, v_k$.
We call $G$ a {\em basic} term graph if
$\lab_G (\rootnode_G) \in \DS$ and
$G \seg v_j \in \mathcal{TG(C)}$ for all
$j \in \{ 1, \dots, k \}$.
By definition,
$G \in \mathcal{TG}_{\nrm} (\FS)$
holds for any basic term graph $G \in \mathcal{TG(F)}$.

\begin{lemma}
\label{l:basic}
Let $\GS$ be an infinite set of constructor safe recursive unfolding graph rewrite rules over a signature $\FS$.
For any closed basic term graph
$G \in \mathcal{TG(F)}$,
if $G \rewast H$, then,
for any node $v \in V_H$ on a safe path from $\rootnode_H$,
$|\bigcup_{u \in \nrm (v)} V_{H \seg u}| \leq
 |\bigcup_{u \in \nrm (\rootnode_G)} V_{G \seg u}|$
holds.
\end{lemma}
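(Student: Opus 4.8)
The plan is to prove the statement by induction on the length of the reduction $G \rewast H$, keeping the quantity $|\bigcup_{u \in \nrm (\rootnode_G)} V_{G \seg u}|$ on the right as one fixed global bound. The argument rests on a single structural observation that I would establish first. By Lemma \ref{l:safepath}.\ref{l:safepath:2} every reduct $H$ of the basic graph $G$ still lies in $\TGnrm$, and by the recursive shape of Definition \ref{d:safepath}.\ref{d:safepath:2} passing along a safe edge preserves membership in $\TGnrm$; hence for every node $v$ on a safe path from $\rootnode_H$ we have $H \seg v \in \TGnrm$, and therefore every normal successor $u \in \nrm (v)$ roots a pure constructor graph, $H \seg u \in \mathcal{TG(C)}$. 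The immediate payoff is that only nodes with $\lab_H (v) \in \DS$ can contribute to $\bigcup_{u \in \nrm (v)} V_{H \seg u}$: constructor and undefined nodes have no normal successors, so the asserted inequality holds vacuously for them.

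\textbf{Base case.} When $G = H$ the graph is basic, so $\rootnode_G$ is the only defined node reachable on a safe path and the inequality holds there with equality, while all other nodes on safe paths are constructor or undefined and are covered by the vacuous case above.

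\textbf{Inductive step.} Writing $G \rewast H' \rew H$ and applying the induction hypothesis to $H'$, let $(R, \varphi)$ with $R = (G', l, r)$ be the contracted redex, so that $\varphi (l)$ lies on a safe path in $H'$ by Lemma \ref{l:safepath}.\ref{l:safepath:1}. I would split the defined nodes $v$ on safe paths in $H$ into two classes. For the freshly created right-hand side nodes $w_1, \dots, w_m$ I invoke Corollary \ref{c:sruf}: each $u \in \nrm (w_j)$ is either a constructor node $v_i$ reachable from $v_1$, whence $V_{H \seg u} \subseteq V_{H' \seg \varphi (v_1)}$, or an undefined normal argument $x_i \in \nrm (y)$ of $y = \varphi (l)$, whence $V_{H \seg u} = V_{H' \seg \varphi (x_i)}$. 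Since $v_1 \in \nrm (y)$ as well, this yields the set inclusion $\bigcup_{u \in \nrm (w_j)} V_{H \seg u} \subseteq \bigcup_{u \in \nrm (\varphi (l))} V_{H' \seg u}$, and the induction hypothesis applied to $\varphi (l)$ closes this class. For a node $v$ already present in $H'$ I would instead argue that contraction leaves its normal-argument subgraphs untouched: each $H \seg u$ with $u \in \nrm (v)$ is a pure constructor graph, so it neither contains the defined redex root $\varphi (l)$ nor carries an edge into it, and hence the redirection performed by the rewrite cannot alter it, giving $H \seg u = H' \seg u$ and reducing the claim to the induction hypothesis for $v$ in $H'$.

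\textbf{Main obstacle.} The delicate part is the bookkeeping of node identities and of safe-path membership across the contraction, in particular verifying that a node $v$ lying on a safe path from $\rootnode_H$ in the reduct still lies on a safe path in $H'$. If the safe path enters the newly grafted $w$-region it can only leave it again through a shared safe successor $x_i \in \safe (w_j)$, and clause 3 of Definition \ref{d:sruf} guarantees $x_i \in \safe (y)$, so the path may be rerouted through $\varphi (l)$ in $H'$; making precise that the only new edges live inside the $w$-region and that every redirected edge targets a constructor-free node --- which is exactly what Definition \ref{d:uf}.\ref{d:uf:w}, Definition \ref{d:sruf} and the $\TGnrm$ invariant supply --- is where the real work lies. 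Once these incidence facts are in place, the cardinality estimates follow at once from the inclusions and equalities established above.
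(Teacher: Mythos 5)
Your proposal is correct and follows the paper's skeleton — induction on the length of the reduction, with Lemma \ref{l:safepath} supplying the $\TGnrm$ invariant and the safe-path location of the redex, and Corollary \ref{c:sruf} controlling the normal successors of the fresh $w$-nodes — but the way you close the inductive step for pre-existing nodes is genuinely different. The paper splits on whether $v$ lies in $V_{K \seg r'}$ (in its notation, $H \rew K$) and, for $v \not\in V_{K \seg r'}$, performs a cardinality computation: it bounds $|\bigcup_{u \in \nrm(r')} V_{K \seg u}|$ by $|\bigcup_{u \in \nrm(\varphi(l))} V_{H \seg u}|$, bounds the set difference $\bigcup_{u \in \nrm(v)} V_{K \seg u} \setminus \bigcup_{u \in \nrm(r')} V_{K \seg u}$ via the identity $V_{K \seg v} \setminus V_{K \seg r'} = V_{H \seg v} \setminus V_{H \seg \varphi(l)}$, and sums the two estimates before invoking the induction hypothesis. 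You instead prove a stronger structural fact: since every normal successor $u$ of a safe-path node roots a pure constructor graph, the redex root (labelled by a defined symbol) cannot occur in $H' \seg u$, so redirection leaves that subgraph literally unchanged and the inductive hypothesis applies with equality rather than through set-difference bookkeeping. Your route is cleaner where it works and makes the vacuous constructor-node case explicit, whereas the paper's counting argument is less sensitive to tracking node identities across the contraction. What you correctly identify as the main obstacle — that a node on a safe path in the reduct is still on a safe path before the step, rerouting paths through the grafted region via clause 3 of Definition \ref{d:sruf} — is in fact needed by the paper's proof as well (its final appeal to the induction hypothesis silently assumes exactly this), and the paper never spells it out; so your sketch, once completed, fills a gap common to both arguments rather than introducing a new liability, though in your version the same persistence fact also carries the extra burden of justifying the equality $H \seg u = H' \seg u$, so it must be established before, not after, the two node classes are treated.
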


\begin{proof}
Suppose $G \rewm{n} H$.
By induction on $n \geq 0$ we show that
for any node $v \in V_H$ on a safe path from $\rootnode_H$ with
$|\bigcup_{u \in \nrm (v)} V_{H \seg u}| \leq
 |\bigcup_{u \in \nrm (\rootnode_G)} V_{G \seg u}|$
holds.
In the base case $n=0$, $H = G$ and hence the assertion follows
 trivially.

For the induction step, suppose that
$G \rewm{n} H$ holds and that
$H \rew K$ is induced by a redex 
$(R, \varphi)$ in $H$ for a rewrite rule 
$R = (H', l, r) \in \GS$ and a homomorphism
$\varphi : H' \seg l \rightarrow H$.
Then 
$H, K \in \TGnrm$
holds by Lemma \ref{l:safepath}.\ref{l:safepath:2}.
Let a node $v \in V_K$ lie on a safe path from 
$\rootnode_K$.

{\sc Case.}
$\varphi (l) = \rootnode_H$:
In this case, %the inequality
$|\bigcup_{u \in \nrm (v)} V_{K \seg u}| \leq
 |\bigcup_{u \in \nrm (\rootnode_H)} V_{H \seg u}|$
follows from Corollary \ref{c:sruf}.
Since $\rootnode_H$ lies on the trivial safe path from $\rootnode_H$,
$|\bigcup_{u \in \nrm (\rootnode_H)} V_{H \seg u}| \leq
 |\bigcup_{u \in \nrm (\rootnode_G)} V_{G \seg u}|$
holds by induction hypothesis, and thus 
$|\bigcup_{u \in \nrm (v)} V_{K \seg u}| \leq
 |\bigcup_{u \in \nrm (\rootnode_G)} V_{G \seg u}|$
also holds.

{\sc Case.}
$\varphi (l) \neq \rootnode_H$:
Let
$r' \in H$ denote the node corresponding to $r \in H'$.
First consider the subcase $v \in V_{K \seg r'}$.
In this subcase, as in the previous case,
$|\bigcup_{u \in \nrm (v)} V_{K \seg u}| \leq
 |\bigcup_{u \in \nrm (\varphi (l))} V_{H \seg u}|$
follows from Corollary \ref{c:sruf}.
Since $\varphi (l)$ lies on a safe path from $\rootnode_H$ by Lemma
\ref{l:safepath}.\ref{l:safepath:1},
$|\bigcup_{u \in \nrm (\varphi (l))} V_{H \seg u}| \leq
 |\bigcup_{u \in \nrm (\rootnode_G)} V_{G \seg u}|$
holds by induction hypothesis, and thus 
$|\bigcup_{u \in \nrm (v)} V_{K \seg u}| \leq
 |\bigcup_{u \in \nrm (\rootnode_G)} V_{G \seg u}|$
holds.

Consider the subcase $v \not\in V_{K \seg r'}$.
As in the previous subcase, it holds that
\begin{equation}
\textstyle
 |\bigcup_{u \in \nrm (r')} V_{K \seg u}| \leq
 |\bigcup_{u \in \nrm (\varphi (l))} V_{H \seg u}|.
\label{e:basic:1}
\end{equation}
On the other side, since
$V_{K \seg v} \setminus V_{K \seg r'} =
 V_{H \seg v} \setminus V_{H \seg \varphi (l)}$,
it holds that
\begin{equation}
\textstyle
 |\bigcup_{u \in \nrm (v)} V_{K \seg u} \setminus
  \bigcup_{u \in \nrm (r')} V_{K \seg u}|
 \leq
 |\bigcup_{u \in \nrm (v)} V_{H \seg u} \setminus
  \bigcup_{u \in \nrm (\varphi (l))} V_{H \seg u}|.
\label{e:basic:2}
\end{equation}
Combining the inequalities (\ref{e:basic:1}) and (\ref{e:basic:2}), we
 reason as follows.
\begin{eqnarray*}
\textstyle
 |\bigcup_{u \in \nrm (v)} V_{K \seg u}|
 &=&
\textstyle
 |\bigcup_{u \in \nrm (v)} V_{K \seg u} \setminus
  \bigcup_{u \in \nrm (r')} V_{K \seg u}| +
  |\bigcup_{u \in \nrm (r')} V_{K \seg u}| \\
 &\leq&
\textstyle
 |\bigcup_{u \in \nrm (v)} V_{H \seg u} \setminus
  \bigcup_{u \in \nrm (\varphi (l))} V_{H \seg u}|
 +
 |\bigcup_{u \in \nrm (\varphi (l))} V_{H \seg u}|
\\
 &\leq&
\textstyle
 |\bigcup_{u \in \nrm (v)} V_{H \seg u}|
\\
 &\leq&
\textstyle
 |\bigcup_{u \in \nrm (\rootnode_G)} V_{G \seg u}|.
\end{eqnarray*}
The last inequality follows from induction hypothesis.
\qed
\end{proof}

\begin{theorem}
\label{t:context}
Let $\GS$ be an infinite set of constructor safe recursive unfolding graph rewrite rules over a signature $\FS$.
Suppose $\max \{ \arity (\mf) \mid \mf \in \FS \} \leq d$.
Then, for any closed basic term graph 
$G_0 \in \mathcal{TG(F)}$,
%$\att_{G_0} (\rootnode_{G_0}) = 
% \sn{v_1, \dots, v_k}{v_{k+1}, \dots, v_{k+l}}$.
if $G_0 \rewast G$ and $G \rew H$, then 
$\PINT (G) \spl{\ell} \PINT (H)$
holds for
$\ell = 2 |\bigcup_{v \in \nrm (\rootnode_{G_0})} V_{G_0 \seg v}| + d$.
\end{theorem}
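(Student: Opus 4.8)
The plan is to derive the global orientation from the local one supplied by Lemma~\ref{l:pint} by closing it under context with Lemma~\ref{l:Gl}. Fix the redex inducing $G \rew H$, say $R = (G', l, r) \in \GS$ with homomorphism $\varphi : G' \seg l \to G$, and let $r' \in V_H$ be the node corresponding to $r = w_1$. Because $G_0$ is basic we have $G \in \TGnrm$, so Lemma~\ref{l:safepath}.\ref{l:safepath:1} places the contracted node $\varphi(l)$ on a safe path from $\rootnode_G$, while Lemma~\ref{l:safepath}.\ref{l:safepath:2} gives $H \in \TGnrm$; together these justify speaking of the interpretations $\PINT$ defined for both $G$ and $H$.

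First I would expose the redex as a factor of the interpretation. By Definition~\ref{d:pint} the interpretation descends precisely along the safe edges retained by $\pj$, and by Definition~\ref{d:pj} the leftmost safe path to any node lying on a safe path is retained; tracing that path from $\rootnode_G$ down to $\varphi(l)$ therefore exhibits the block $\PINT(G \seg \varphi(l))$ --- computed, as the remark after Definition~\ref{d:pint} stresses, from $\pj(G) \seg \varphi(l)$ --- as a single contiguous factor, so that
\begin{equation*}
\PINT(G) = A \con \PINT(G \seg \varphi(l)) \con B ,
\end{equation*}
where $A$ and $B$ collect the heads met along the path and the sibling safe subgraphs ordered before, respectively after, $\varphi(l)$. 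The key claim is that the same $A, B$ frame the redex in $H$, i.e.\ $\PINT(H) = A \con \PINT(H \seg r') \con B$. I expect this to be the main obstacle: since $\pj$ is defined globally, one must verify that contracting the redex changes neither the retained safe edges nor the normal-argument terms of any context node. I would read this off from the locality of the step together with Corollary~\ref{c:sruf}: the rewrite replaces only the subgraph at the safe node $\varphi(l)$, whose sole interface to the context is the shared variable nodes $\varphi(x_1), \dots, \varphi(x_{k+l})$, and Corollary~\ref{c:sruf} ensures that the normal substructure reachable through $w_1$ is not altered, so that leftmost safe paths to context nodes, and their normal terms, are preserved.

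Granting the decomposition, the rest is bookkeeping. Lemma~\ref{l:pint} yields $\PINT(G \seg \varphi(l)) \spl{\ell'} \PINT(H \seg r')$ with $\ell' = \max(\{|G' \seg r|\} \cup \{\arity(f) \mid f \in \FS\})$, so it remains to check $\ell' \leq \ell = 2|\bigcup_{v \in \nrm(\rootnode_{G_0})} V_{G_0 \seg v}| + d$. Now $G' \seg r = G' \seg w_1$ comprises at most the nodes $w_1, \dots, w_m$, the nodes $v_1, \dots, v_m$, and the $k+l$ variable nodes, and $k+l = \arity(f) - 1 \leq d - 1$, whence $|G' \seg r| \leq 2m + d - 1$. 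The redex matches the first normal subgraph, identifying the constructor nodes $v_1, \dots, v_m$ with $V_{G \seg \varphi(v_1)}$, so $m = |V_{G \seg \varphi(v_1)}| \leq |\bigcup_{u \in \nrm(\varphi(l))} V_{G \seg u}|$; since $\varphi(l)$ sits on a safe path and $G_0 \rewast G$ with $G_0$ basic, Lemma~\ref{l:basic} bounds the right-hand side by $|\bigcup_{v \in \nrm(\rootnode_{G_0})} V_{G_0 \seg v}|$, giving $\ell' \leq 2m + d \leq \ell$. Lemma~\ref{l:Gl}.\ref{l:Gl:mon} then upgrades the orientation to $\PINT(G \seg \varphi(l)) \spl{\ell} \PINT(H \seg r')$, and Lemma~\ref{l:Gl}.\ref{l:Gl:con} applied to the decomposition concludes $\PINT(G) \spl{\ell} \PINT(H)$.
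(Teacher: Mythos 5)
Your proposal is correct and follows essentially the same route as the paper's proof: bound $|G' \seg r|$ by $2\,|\bigcup_{v \in \nrm (\rootnode_{G_0})} V_{G_0 \seg v}| + d$ using the shape of unfolding rules together with Lemma \ref{l:safepath}.\ref{l:safepath:1} and Lemma \ref{l:basic}, then apply Lemma \ref{l:pint} at the redex and close under context via Lemma \ref{l:Gl}.\ref{l:Gl:mon} and Lemma \ref{l:Gl}.\ref{l:Gl:con}. The only difference is presentational: where the paper splits on $\varphi (l) = \rootnode_G$ versus $\varphi (l) \neq \rootnode_G$ and simply cites Lemma \ref{l:Gl}.\ref{l:Gl:con}, you spell out the underlying decomposition $\PINT (G) = A \con \PINT (G \seg \varphi (l)) \con B$ and argue its invariance under the step (including the global behaviour of $\pj$), making explicit a point the paper leaves implicit.
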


\begin{proof} 
Given a closed basic term graph
$G_0 \in \mathcal{TG(F)}$,
suppose that $G_0 \rewast G$ and that
$G \rew H$ is induced by a redex $(R, \varphi)$
in $G$ for a rule $R = (G', l, r)$ and a homomorphism
$\varphi: G' \seg l \rightarrow G$.
Then $G, H \in \mathcal{TG}_{\nrm} (\FS)$
holds by Lemma \ref{l:safepath}.\ref{l:safepath:2}.
Let
$\att_{G'} (l) = \sn{u_1, \dots, u_{k}}{u_{k+1}, \dots, u_{k+l}}$
and
$\att_{G} (\varphi (l)) = 
 \sn{v_1, \dots, v_{k}}{v_{k+1}, \dots, v_{k+l}}$.
Since
$|G' \seg r| \leq 2 |G' \seg u_1| + d$
holds by the definition of unfolding graph rewrite rules,
$|G' \seg r| \leq 2 |G' \seg u_1| + d \leq 2 |G \seg v_1| + d$
holds.
On the other hand, since $\varphi (l)$ lies on a safe path from
$\rootnode_G$ by Lemma \ref{l:safepath}.\ref{l:safepath:1},
$|\bigcup_{v \in \nrm (\varphi (l))} V_{G \seg v}| \leq 
 |\bigcup_{v \in \nrm (\rootnode_{G_0})} V_{G_0 \seg v}|$
holds by Lemma \ref{l:basic}.
Hence
$|G' \seg r| \leq
 2 |\bigcup_{v \in \nrm (\rootnode_{G_0})} V_{G_0 \seg v}| + d$
holds.
Now let 
$\ell = 2 |\bigcup_{v \in \nrm (\rootnode_{G_0})} V_{G_0 \seg v}| + d$.
In case $\varphi (l) = \rootnode_G$,
$\pint (G) \spl{\ell} \pint (H)$
 follows from Lemma \ref{l:pint} (and Lemma
\ref{l:Gl}.\ref{l:Gl:mon}). 
In case $\varphi (l) \neq \rootnode_G$,
$\pint (G) \spl{\ell} \pint (H)$ follows from
Lemma \ref{l:pint}, \ref{l:safepath}.\ref{l:safepath:1}
and \ref{l:Gl}.\ref{l:Gl:con}.
\qed
\end{proof}
 
\begin{corollary}
\label{c:main}
Let $\GS$ be an infinite set of constructor safe recursive
 unfolding graph rewrite rules over a signature $\FS$.
Then there exists a polynomial 
$p: \mathbb N \rightarrow \mathbb N$ such that, for any
 closed basic term graph $G \in \mathcal{TG(F)}$,
%$\att_G (\rootnode_G)$ $=$ 
%$\sn{v_1, \dots, v_k}{v_{k+1}, \dots, v_{k+l}}$,
if
$G \rewm{m} H$ for some term graph $H \in \mathcal{TG(F)}$, then
$m \leq p (|\bigcup_{v \in \nrm (\rootnode_G)} V_{G \seg v}|)$
holds.
\end{corollary}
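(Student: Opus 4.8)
The plan is to read the length $m$ off the complexity measure $\mG{\ell}$ via the embedding of Theorem~\ref{t:context}. First I would fix an arbitrary closed basic term graph $G$ and put $N := |\bigcup_{v \in \nrm (\rootnode_G)} V_{G \seg v}|$ together with $\ell := 2N + d$, where $d = \max \{ \arity (\mf) \mid \mf \in \FS \}$; this is precisely the index supplied by Theorem~\ref{t:context} when the starting graph $G_0$ is taken to be $G$ itself. Consider any rewrite sequence $G = G_0 \rew G_1 \rew \cdots \rew G_m = H$. Since each $G_i$ satisfies $G \rewast G_i$ and $G_i \rew G_{i+1}$, Theorem~\ref{t:context} yields $\PINT (G_i) \spl{\ell} \PINT (G_{i+1})$ for every $i$, with the \emph{same} index $\ell$ throughout, as $\ell$ is determined once and for all by the fixed graph $G$. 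Because $\mG{\ell}$ strictly decreases along $\spl{\ell}$, we obtain $\mG{\ell} (\PINT (G_0)) > \cdots > \mG{\ell} (\PINT (G_m)) \geq 0$, and hence $m \leq \mG{\ell} (\PINT (G))$. It then remains to bound $\mG{\ell} (\PINT (G))$ by a polynomial in $N$.

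Next I would compute $\PINT (G)$ explicitly. Since $G$ is basic, $\lab_G (\rootnode_G) = f \in \DS$ (so $G \notin \mathcal{TG(C)}$) and every argument subgraph lies in $\mathcal{TG(C)}$; in particular each safe successor subgraph is a constructor graph and so contributes the empty sequence. Thus $\PINT (G)$ collapses to the single-element sequence $\lst{\fn (\term (G \seg v_1), \dots, \term (G \seg v_k))}$, where $v_1, \dots, v_k$ are the normal successors of $\rootnode_G$. By Lemma~\ref{l:Gl:sum} this gives $\mG{\ell} (\PINT (G)) = \mG{\ell} (\fn (\term (G \seg v_1), \dots, \term (G \seg v_k)))$. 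As $k \leq \arity (f) \leq d \leq \ell$ and the maximal arity of $\CS \cup \FSn$ is again bounded by $d$, Lemma~\ref{l:Gleq} applies to the term on the right and bounds it by $d^{\rk (\fn)} \cdot (1 + \ell)^{\rk (\fn)} \cdot (1 + \sum_{j=1}^k \depth (\term (G \seg v_j)))$.

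The final step is to absorb the remaining parameters into constants. Although $\GS$ is infinite, the underlying signature $\FS$ is finite, so $\FSn$ is finite and the precedence on it has bounded rank; let $R$ bound $\rk (\fn)$ over all $f \in \FS$. Moreover each $\depth (\term (G \seg v_j))$ is at most the number of nodes $|V_{G \seg v_j}| \leq N$, whence $\sum_{j=1}^k \depth (\term (G \seg v_j)) \leq d \cdot N$. Substituting $\ell = 2N + d$ then yields $m \leq \mG{\ell} (\PINT (G)) \leq d^{R} \cdot (1 + 2N + d)^{R} \cdot (1 + dN)$, a polynomial in $N$ of degree $R + 1$ whose coefficients depend only on the fixed constants $d$ and $R$; taking this expression as $p$ proves the corollary. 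Having pushed the real work into Theorem~\ref{t:context}, the only point that I expect to need care is the uniformity of the bound across the infinite rule set: the argument hinges on $\ell$ being fixed by the starting graph alone and on the rank being bounded independently of which (possibly arbitrarily large) instance of the unfolding rules is applied, both of which rest on the finiteness of $\FS$ rather than of $\GS$.
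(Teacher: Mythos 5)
Your proposal is correct and follows essentially the same route as the paper's own proof: the same choice $\ell = 2\,|\bigcup_{v \in \nrm (\rootnode_G)} V_{G \seg v}| + d$ fed into Theorem~\ref{t:context}, the same bound $m \leq \mG{\ell} (\PINT (G))$ from the strict decrease of $\mG{\ell}$ along $\spl{\ell}$, the same application of Lemma~\ref{l:Gleq} with $k \leq d \leq \ell$ and $\sum_j \depth (s_j) \leq d \cdot N$, and the same resulting polynomial. The only (harmless) addition is your explicit justification that $\PINT (G)$ is a singleton sequence for a basic term graph, which the paper leaves implicit by simply writing $\PINT (G) = \lst{f(s_1, \dots, s_k)}$.
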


\begin{proof}
Given an infinite set $\GS$ of constructor safe recursive unfolding
 graph rewrite rules over a signature $\FS$,
let
$\max \{ \arity (\mf) \mid \mf \in \FS \} \leq d$.
In addition, given a closed basic term graph
$G \in \mathcal{TG(F)}$,
let
$\ell = 2 |\bigcup_{v \in \nrm (\rootnode_G)} V_{G \seg v}| + d$.
Suppose that $G \rewm{m} H$ holds for some term graph
$H \in \mathcal{TG(F)}$.
By Theorem \ref{t:context}, any
$\rew$ sequence starting with $G$ can be embedded into some
$\spl{\ell}$ reduction sequence starting with
$\pint (G)$,
and hence $m$ can be bounded by $\mG{\ell} (\PINT (G))$.
Write $\lst{f(s_1, \dots, s_k)}$ to denote
$\PINT (G)$.
Since $k \leq d \leq \ell$,
Lemma \ref{l:Gleq} implies
$\mG{\ell} (\PINT (G)) \leq
 d^{\rk (f)} \cdot (1+ \ell)^{\rk (f)} \cdot 
 \left( 1 + \sum_{j=1}^k \depth (s_j) \right)$.
Now let $p$ denote a polynomial such that
\[
 d^{\max \{ \rk (f) \mid f \in \FS \}} \cdot 
 (1 + 2 x +d)^{\max \{ \rk (f) \mid f \in \FS \}} \cdot (1 + d x) \leq
 p (x).
\]
For every $j \in \{ 1, \dots, k \}$, 
$s_j = \term (G \seg v)$ for some $v \in \nrm (\rootnode_G)$,
and hence
$\depth (s_j) \leq |G \seg v|$
holds for some $v \in \nrm (\rootnode_G)$.
Thus
$\sum_{j=1}^k \depth (s_j) \leq 
 d \cdot |\bigcup_{v \in \nrm (\rootnode_G)} V_{G \seg v}|$
holds.
This together with 
$\ell = 2 |\bigcup_{v \in \nrm (\rootnode_G)} V_{G \seg v}| + d$
allows us to conclude that
$m \leq \mG{\ell} (\PINT (G)) \leq 
p (|\bigcup_{v \in \nrm (\rootnode_G)} V_{G \seg v}|)$ holds.
\qed
\end{proof}
 
\begin{lemma}
\label{l:size}
Let $\GS$ be an infinite set of constructor safe recursive
 unfolding graph rewrite rules over a signature $\FS$.
For any closed basic term graph
$G \in \mathcal{TG(F)}$
and for any term graph
$H \in \mathcal{TG(F)}$, if
$G \rewm{n} H$, then
$|V_H \setminus \bigcup_{v \in \nrm (\rootnode_H)} V_{H \seg v}| \leq  
 n \cdot |\bigcup_{v \in \nrm (\rootnode_G)} V_{G \seg v}| +
 |V_G \setminus \bigcup_{v \in \nrm (\rootnode_G)} V_{G \seg v}|$
holds.  
\end{lemma}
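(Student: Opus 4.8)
The plan is to argue by induction on the number $n$ of rewrite steps, tracking the quantity $M(X) := V_X \setminus \bigcup_{v \in \nrm(\rootnode_X)} V_{X \seg v}$, the ``non-normal'' part of a term graph $X$, and to show that a single step increases $|M|$ by at most $|\bigcup_{v \in \nrm(\rootnode_G)} V_{G \seg v}|$, which I abbreviate as $|N(G)|$. The base case $n = 0$ is immediate, since then $H = G$ and both sides equal $|M(G)|$. For the step I would suppose $G \rewm{n} H \rew K$, with $H \rew K$ induced by a redex $(R, \varphi)$ for a rule $R = (G', l, r) \in \GS$, and let $r'$ be the node of $K$ corresponding to $r$; by the induction hypothesis it then suffices to prove $|M(K)| \leq |M(H)| + |N(G)|$.

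First I would pin down the fresh material. By Definition \ref{d:uf} the only nodes of $K$ not already present in $H$ are the copies of $w_1, \dots, w_m$, so exactly $m = |V_{G' \seg v_1}|$ new nodes are created. As in the proof of Theorem \ref{t:context}, matching the recursion structure gives $m = |V_{G' \seg v_1}| \leq |V_{H \seg \varphi(v_1)}|$; since $v_1 \in \nrm(y)$ forces $\varphi(v_1) \in \nrm(\varphi(l))$, and $\varphi(l)$ lies on a safe path from $\rootnode_H$ by Lemma \ref{l:safepath}.\ref{l:safepath:1}, Lemma \ref{l:basic} yields $|V_{H \seg \varphi(v_1)}| \leq |\bigcup_{u \in \nrm(\varphi(l))} V_{H \seg u}| \leq |N(G)|$, so $m \leq |N(G)|$. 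It then remains to show that the non-normal part of $K$ consists only of these fresh nodes together with surviving non-normal nodes of $H$.

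For this I would exploit the locality of term graph rewriting together with Corollary \ref{c:sruf}, splitting on whether $\varphi(l) = \rootnode_H$. If $\varphi(l) = \rootnode_H$, then $\rootnode_K = r'$ and, by Corollary \ref{c:sruf}, every normal successor of a copied $w_j$ is an old node lying in the matched recursion structure or among the normal arguments; hence all reused old nodes fall into $\bigcup_{u \in \nrm(\rootnode_K)} V_{K \seg u}$, the copies $w_1, \dots, w_m$ are the only nodes reachable from $\rootnode_K$ solely through safe edges, so $M(K) \subseteq \{w_1, \dots, w_m\}$ and $|M(K)| \leq m$. If $\varphi(l) \neq \rootnode_H$, then $\rootnode_K = \rootnode_H$ and rewriting leaves everything strictly outside the redex unchanged, as recorded by the identity $V_{K \seg v} \setminus V_{K \seg r'} = V_{H \seg v} \setminus V_{H \seg \varphi(l)}$ already used in Lemma \ref{l:basic}; combined with Corollary \ref{c:sruf} this shows that an old node surviving in $K$ that was reachable in $H$ from a normal successor of the root remains so in $K$, whence $M(K)$ can contain, besides the at most $m$ fresh nodes, only old nodes already in $M(H)$. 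In either case $|M(K)| \leq |M(H)| + m \leq |M(H)| + |N(G)|$, completing the induction.

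The step I expect to be the main obstacle is this last one: controlling the flow of old nodes between the normal and non-normal regions across a rewrite. The delicate point is that the root $\varphi(v_1)$ of the consumed recursion argument need not be reachable from $r'$, so a priori an old node could drop out of the normal region and inflate $M(K)$ beyond the freshly created nodes. I would handle this by using that the redex sits on a safe path from the root (Lemma \ref{l:safepath}.\ref{l:safepath:1}) together with the argument-separation discipline of Definition \ref{d:safeterm}, which keeps any such node from reappearing in the non-normal part uncompensated, so that the net growth of $|M|$ is still bounded by the $m \leq |N(G)|$ new nodes.
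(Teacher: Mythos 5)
Your proof is correct and takes essentially the same route as the paper's: both argue by induction on $n$, bound the one-step growth of $M(X) := V_X \setminus \bigcup_{v \in \nrm (\rootnode_X)} V_{X \seg v}$ by the number $m$ of fresh $w$-nodes created by the applied unfolding rule, and bound $m$ by $|\bigcup_{v \in \nrm (\rootnode_G)} V_{G \seg v}|$ using Lemma \ref{l:safepath}.\ref{l:safepath:1} and Lemma \ref{l:basic} --- the paper merely packages the per-step estimate as the intermediate inequality (\ref{e:size:1}), $|M(K)| \leq |H|$, invoking Lemma \ref{l:basic} at $\rootnode_H$ instead of at the redex node $\varphi (l)$ as you do, and your step $m = |V_{G' \seg v_1}| \leq |V_{H \seg \varphi (v_1)}|$ is exactly the step $|G' \seg u_1| \leq |G \seg v_1|$ that the paper itself makes in the proof of Theorem \ref{t:context}. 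One harmless overclaim: in the case $\varphi (l) = \rootnode_H$ the old subgraphs attached at the safe argument positions $x_{k+1}, \dots, x_{k+l}$ also survive into $M(K)$ (they hang off the fresh $w$-nodes only through safe edges), so $M(K) \subseteq \{ w_1, \dots, w_m \}$ fails in general; but since those nodes already belonged to $M(H)$, the bound your induction actually requires, $|M(K)| \leq |M(H)| + m$, still holds by your own reasoning.
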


\begin{proof} 
By induction on $n$.
In the base case $n=0$, $H = G$ and hence the assertion trivially holds.
For the induction step, suppose
$G \rewm{n} H$ and $H \rew K$.
Let us observe that for any safe recursive unfolding graph rewrite rule
$(G', l, r)$,
\begin{itemize}
\item $|V_{G' \seg r} \setminus 
        \bigcup_{v \in \nrm (\rootnode_{G'})} V_{G' \seg v}
       | \leq
       |G' \seg l|$, and
\item for any node $v \in V_{G' \seg r}$, if $\lab_{G'} (v)$ is
      undefined, then $v \in V_{G' \seg l}$.
\end{itemize} 
From the observation, it can be seen that
$|V_K \setminus \bigcup_{v \in \nrm (\rootnode_K)} V_{K \seg v}|
 \leq |H|$
holds.
Namely, the following inequality holds. 
\begin{equation}
\textstyle
 |V_K \setminus \bigcup_{v \in \nrm (\rootnode_K)} V_{K \seg v}| \leq  
 |\bigcup_{v \in \nrm (\rootnode_H)} V_{H \seg v}| +
 |V_H \setminus \bigcup_{v \in \nrm (\rootnode_H)} V_{H \seg v}|
\label{e:size:1}
\end{equation}
On the other hand, Lemma \ref{l:basic} yields
\begin{equation}
\textstyle
 |\bigcup_{v \in \nrm (\rootnode_H)} V_{H \seg v}| \leq
 |\bigcup_{v \in \nrm (\rootnode_G)} V_{G \seg v}|.
\label{e:size:2}
\end{equation}
Moreover, induction hypothesis yields 
\begin{equation}
\textstyle
 |V_H \setminus \bigcup_{v \in \nrm (\rootnode_H)} V_{H \seg v}| \leq  
 n \cdot |\bigcup_{v \in \nrm (\rootnode_G)} V_{G \seg v}| +
 |V_G \setminus \bigcup_{v \in \nrm (\rootnode_H)} V_{G \seg v}|.
\label{e:size:3}
\end{equation}
Combining the three inequalities (\ref{e:size:1}), (\ref{e:size:2}) and
(\ref{e:size:3}) allows us to conclude that
$|V_K \setminus \bigcup_{v \in \nrm (\rootnode_K)} V_{K \seg v}| \leq  
 (n+1) \cdot |\bigcup_{v \in \nrm (\rootnode_G)} V_{G \seg v}| +
 |V_G \setminus \bigcup_{v \in \nrm (\rootnode_G)} V_{G \seg v}|
$
holds.
\qed
\end{proof}

\begin{corollary}
\label{c:size}
Let $\GS$ be an infinite set of constructor safe recursive
 unfolding graph rewrite rules over a signature $\FS$.
Then there exists a polynomial 
$p: \mathbb N \rightarrow \mathbb N$ such that, for any
 closed basic term graph $G \in \mathcal{TG(F)}$
and for any term graph $H \in \mathcal{TG(F)}$, if
$G \rewast H$, then
$|H| \leq p (|\bigcup_{v \in \nrm (\rootnode_G)} V_{G \seg v}|) +
 |V_G \setminus \bigcup_{v \in \nrm (\rootnode_G)} V_{G \seg v}|$
holds.  
\end{corollary}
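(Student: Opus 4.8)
The plan is to combine the three preceding results additively, with no recursion needed. Abbreviate $N := |\bigcup_{v \in \nrm (\rootnode_G)} V_{G \seg v}|$, the quantity appearing on the right-hand side, and fix $n$ with $G \rewm{n} H$ (such $n$ exists since $G \rewast H$). First I would split the vertex set of $H$ along the distinction between normal and non-normal material:
\[
\textstyle
|H| = |V_H \setminus \bigcup_{v \in \nrm (\rootnode_H)} V_{H \seg v}|
      + |\bigcup_{v \in \nrm (\rootnode_H)} V_{H \seg v}|,
\]
which is a genuinely disjoint decomposition because $\bigcup_{v \in \nrm (\rootnode_H)} V_{H \seg v} \subseteq V_H$ and the complement is taken inside $V_H$.

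The two summands are controlled by the two lemmas already proved. The non-normal part is exactly the quantity bounded in Lemma~\ref{l:size}, giving
\[
\textstyle
|V_H \setminus \bigcup_{v \in \nrm (\rootnode_H)} V_{H \seg v}|
  \leq n \cdot N + |V_G \setminus \bigcup_{v \in \nrm (\rootnode_G)} V_{G \seg v}|.
\]
For the normal part I would apply Lemma~\ref{l:basic} to the node $\rootnode_H$ itself, which lies on the trivial (length-one) safe path from $\rootnode_H$; the lemma then yields $|\bigcup_{v \in \nrm (\rootnode_H)} V_{H \seg v}| \leq N$. Controlling the normal part uniformly in $n$ in this way is the one step that makes the whole estimate polynomial rather than merely finite.

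It remains only to remove the dependence on the number $n$ of rewrite steps. Here Corollary~\ref{c:main} supplies a polynomial $q \colon \mathbb{N} \rightarrow \mathbb{N}$ with $n \leq q (N)$, and substituting this into the bound for the non-normal part gives
\[
\textstyle
|H| \leq q (N) \cdot N + N + |V_G \setminus \bigcup_{v \in \nrm (\rootnode_G)} V_{G \seg v}|.
\]
Defining $p (x) := x \cdot q (x) + x$, again a polynomial, then delivers precisely the claimed inequality $|H| \leq p (N) + |V_G \setminus \bigcup_{v \in \nrm (\rootnode_G)} V_{G \seg v}|$. Since every ingredient is already in hand, I expect no serious obstacle; the only points needing care are that the decomposition of $V_H$ is into disjoint sets and that $\rootnode_H$ qualifies for Lemma~\ref{l:basic} through its trivial safe path, so that the normal part stays bounded by $N$ independently of how many steps have been taken.
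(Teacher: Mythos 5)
Your proposal is correct and follows essentially the same route as the paper's own proof: the same disjoint decomposition of $V_H$ into the normal part (bounded by Lemma~\ref{l:basic} via the trivial safe path at the root) and the non-normal part (bounded by Lemma~\ref{l:size}), with Corollary~\ref{c:main} eliminating the step count $n$. Even your polynomial $p(x) = x \cdot q(x) + x$ coincides with the paper's choice $p(x) = (1 + q(x)) \cdot x$.
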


\begin{proof}
Given a closed basic term graph $G \in \mathcal{TG(F)}$,
suppose that $G \rewm{m} H$ holds for some term graph 
$H \in \mathcal{TG(F)}$,
Then, by Corollary \ref{c:main}, one can find a polynomial 
$q: \mathbb N \rightarrow \mathbb N$
such that
$m \leq q (|\bigcup_{v \in \nrm (\rootnode_G)} V_{G \seg v}|)$
holds.
Define a polynomial 
$p: \mathbb N \rightarrow \mathbb N$
by
$p(x) = (1+ q(x)) \cdot x$.
Then we conclude as follows.
\begin{eqnarray*}
 |H| &=&
\textstyle
 \left|\bigcup_{v \in \nrm (\rootnode_H)} V_{H \seg v}
 \right| +
 \left|V_H \setminus \bigcup_{v \in \nrm (\rootnode_H)} V_{H \seg v}
 \right| \\
 &\leq&
\textstyle
 \left|\bigcup_{v \in \nrm (\rootnode_G)} V_{G \seg v}
 \right| +
 \left|V_H \setminus \bigcup_{v \in \nrm (\rootnode_H)} V_{H \seg v}
 \right|
\qquad \quad (\text{by Lemma \ref{l:basic}}) \\
 &\leq&
\textstyle
 (1+m)
 \left| \bigcup_{v \in \nrm (\rootnode_G)} V_{G \seg v}
 \right| +
 \left| V_G \setminus \bigcup_{v \in \nrm (\rootnode_G)} V_{G \seg v}
 \right|  
\ (\text{by Lemma \ref{l:size}}) \\
 &\leq&
\textstyle
 \left( 1+ q \left( \left| \bigcup_{v \in \nrm (\rootnode_G)} V_{G \seg v} 
                    \right| 
             \right)
 \right)
 \cdot \left| \bigcup_{v \in \nrm (\rootnode_G)} V_{G \seg v} \right| \\
&&
\textstyle
\hspace{3.5cm}
 + \left| V_G \setminus \bigcup_{v \in \nrm (\rootnode_G)} V_{G \seg v}
   \right|
\quad (\text{by Corollary \ref{c:main}}) \\
 &=&
\textstyle
 p \left( \left| \bigcup_{v \in \nrm (\rootnode_G)} V_{G \seg v}
          \right|
   \right) +
 \left| V_G \setminus \bigcup_{v \in \nrm (\rootnode_G)} V_{G \seg v}
 \right|.  
\end{eqnarray*}
\qed
\end{proof}

As a consequence of Corollary \ref{c:main} and \ref{c:size},
for any set $\GS$ of constructor safe recursive unfolding graph rewrite
rules, we can find a polynomial
$p: \mathbb N \rightarrow \mathbb N$ such that, for any
 closed basic term graph $G \in \mathcal{TG(F)}$
and for any term graph $H \in \mathcal{TG(F)}$, if
$G \rewm{m} H$, then
$m  \leq p (|\bigcup_{v \in \nrm (\rootnode_G)} V_{G \seg v}|)$
and
$|H| \leq p (|\bigcup_{v \in \nrm (\rootnode_G)} V_{G \seg v}|) +
 |V_G \setminus \bigcup_{v \in \nrm (\rootnode_G)} V_{G \seg v}|$.  

\section{Interpreting general safe recursive functions}
\label{s:gsr}

Until the previous section, we have restricted every GRS to a set of
safe recursive graph rewrite rules.
In this section, to interpret all the general safe recursive functions,
expanding safe recursive unfolding graph rewrite rules,
we introduce
{\em safe recursive graph rewrite systems} by which every general safe
recursive function can be expressed.
We show that every safe recursive GRS can be interpreted into the
relation $\spl{\ell}$ by the predicative interpretation defined in the
previous section, sharpening the complexity result obtained in 
\cite{GRR2010}.
Let $\CS$ be a set of constructors and 
$m \mapsto c_m$ ($1 \leq m \leq |\CS|$) be an enumeration for $\CS$.
We assume that $\CS$ contains at least one constant.
We call a function 
$f: \mathcal{T(C)}^{k+l} \rightarrow \mathcal{T(C)}$
{\em general safe recursive} if, under a suitable argument separation
$f(\sn{x_1, \dots, x_k}{y_1, \dots, y_l})$,
$f$ can be defined from the initial functions by operating the schemata specified below.
\begin{itemize}
\item $O^{k, l}_j (\sn{x_1, \dots, x_k}{y_1, \dots, y_l}) = c_j$
      if $c_j$ is a constant.
\hfill {\bf (Constants)}
\item $C_j (; x_1, \dots, x_{\arity (c_j)}) = 
       c_j (x_1, \dots, x_{\arity (c_j)})$
      if $\arity (c_j) \neq 0$.
\hfill {\bf (Constructors)}
\item $I^{k, l}_j (\sn{x_1, \dots, x_k}{y_1, \dots, y_l}) = 
       \left\{
       \begin{array}{ll}
       x_j & \text{if } 1 \leq j \leq k, \\
       y_{j-k} & \text{if } k \leq j \leq k+l
       \end{array}
       \right.
      $
\hfill {\bf (Projections)}
\\
      $(1 \leq j \leq k+l)$.
\item $P_{i, 0} (\sn{}{c_i}) = c_i$ if $c_i$ is a constant. \\
      $P_{i, j} (\sn{}{c_i (x_1, \dots, x_{\arity (c_i)})
                      }
                )
       = x_j$ $(1 \leq j \leq \arity (c_i))$.
\hfill {\bf (Predecessors)}
\item $C(\sn{}%
            {c_{j} (x_1, \dots, x_{\arity (c_j)}), 
             y_1, \dots, y_{|\CS|}
            }) =
       y_{j}$.
\hfill {\bf (Conditional)}
\item $f(\sn{x_1, \dots, x_k}{y_1, \dots, y_l}) =
       h(\sn{x_{j_1}, \dots, x_{j_m}}%
            {g_1 (\sn{\vec x}{\vec y}), \dots,
             g_n (\sn{\vec x}{\vec y})
            }
        )$
\\
$(\{ j_1, \dots, j_m \} \subseteq \{ 1, \dots, k \})$,
\hfill {\bf (Safe composition)}\\
where $h$ has $m$ normal and $n$ safe argument positions.
\item $f(\sn{c_j (x_1, \dots, x_{\arity (c_j)}), \vec y}{\vec z}) =
       h_j (\sn{\vec x, \vec y}%
               {\vec z, f(\sn{x_1, \vec y}{\vec z}), \dots,
                        f(\sn{x_{\arity (c_j)}, \vec y}{\vec z})
               }
           )$
\\
$(j \in I)$
\hfill {\bf (General safe recursion)}\\
In case that $c_j$ is a constant, the schema of general safe recursion should be understood as
$f(\sn{c_j, \vec y}{\vec z}) = h_j (\sn{\vec y}{\vec z})$.
\end{itemize}

\begin{definition}[Safe recursive graph rewrite systems]
\label{d:srgrs}
\normalfont
We call a GRS $\GS$ over a signature $\FS$ {\em safe recursive} if $\GS$ consists of an infinite
 number of safe recursive unfolding graph rewrite rules over
 $\FS$ and of a finite
 number of graph rewrite rules $(G, l, r)$ fulfilling one of the
 following conditions
\ref{d:srgrs:1} and \ref{d:srgrs:2}.
\begin{enumerate}
\item $G \seg r = (G \seg l) \seg v$ for some node
      $v \in V_{G \seg l} \setminus \{ l \}$.
\label{d:srgrs:1}
\item 
\label{d:srgrs:2}
  \begin{enumerate}
  \item The set $V_G$ of vertices consists of 
        $2+k+l+n$ elements
        $u$, $v$, $x_1, \dots, x_{k+l}$,
        $w_1, \dots, w_n$.
  \item $l = u$ and $r = v$.
  \item $\{ \lab_G (u), \lab_G (v), \lab_G (w_1), \dots, \lab_G (w_n)
         \} \subseteq \FS$.
  \item $\lab_G (x_j)$ is undefined for all 
        $j \in \{ 1, \dots, k+l \}$.
  \item $\att_G (u) =
         \sn{x_1, \dots, x_k}{x_{k+1}, \dots, x_{k+l}}$.
  \item $\att_G (v) =
         \sn{x_{j_1}, \dots, x_{j_m}}{w_1, \dots, w_n}$
        for some
        $\{ j_1, \dots, j_m \} \subseteq \{ 1, \dots, k \}$.
  \item $\att_G (w_j) =
         \sn{x_1, \dots, x_k}{x_{k+l}, \dots, x_{k+l}}$
        for all $j \in \{ 1, \dots, n \}$.
  \end{enumerate}
\end{enumerate}
\end{definition}

\begin{lemma}
\label{l:gsr}
For every general safe recursive function $f$ over a finite set 
$\CS$ of constructors, there exists a constructor safe recursive GRS
 defining $f$.
\end{lemma}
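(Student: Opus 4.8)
The plan is to argue by induction on the build-up of $f$ as a general safe recursive function, exhibiting for each initial function and each closure schema a GRS that defines it and, in the composition and recursion cases, taking the union with the GRSs supplied by the induction hypothesis. For every rule I construct I would have to verify three things: that it is a \emph{constructor} graph rewrite rule (its left root carries a defined symbol and every other labelled node below it carries a constructor), that it respects the normal/safe separation of Definition \ref{d:safeterm}, and that it has one of the admissible shapes, namely condition \ref{d:srgrs:1} or \ref{d:srgrs:2} of Definition \ref{d:srgrs} for the finitely many non-recursive rules, or Definitions \ref{d:uf} and \ref{d:sruf} for the infinite families coming from recursion.

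I expect the base cases to be immediate. A constant $O^{k,l}_j$ and a constructor $C_j$ can each be given by a single rule whose right root is labelled by the constructor $c_j$; since constructors carry only safe positions, these fit shape \ref{d:srgrs:2} (with $n=0$ for constants). A projection $I^{k,l}_j$, a predecessor $P_{i,j}$, and the conditional $C$ each return a subgraph already present in the left pattern, so their right root $r$ satisfies $G \seg r = (G \seg l) \seg v$ for the appropriate $v$; these are exactly the rules of shape \ref{d:srgrs:1}. The predecessors and the conditional match against a constructor in a safe position, giving one rule per constructor and hence finitely many rules in total, and the separation conditions of Definition \ref{d:safeterm} hold because constructors carry only safe arguments and the projected positions keep their status.

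For safe composition I would use a rule matching the schema verbatim under shape \ref{d:srgrs:2}: the left root labelled $f$ with $\att_G$ equal to $\sn{x_1, \dots, x_k}{x_{k+1}, \dots, x_{k+l}}$, the right root labelled $h$ with $\att_G$ equal to $\sn{x_{j_1}, \dots, x_{j_m}}{w_1, \dots, w_n}$, and each $w_j$ labelled $g_j$ with $\att_G$ equal to $\sn{x_1, \dots, x_k}{x_{k+1}, \dots, x_{k+l}}$. For general safe recursion I would instantiate the infinite family of safe recursive unfolding rules of Definitions \ref{d:uf} and \ref{d:sruf} by taking $\Sigma = \{ c_j \mid j \in I \}$, $\Theta = \{ h_j \mid j \in I \}$ with the bijection $c_j \mapsto h_j$, and by letting the parameter nodes $x_1, \dots, x_{k+l}$ carry the normal arguments $\vec y$ and the safe arguments $\vec z$; the arity condition $\arity(h_j) = 2\arity(c_j) + k + l$ and the separation requirements of Definition \ref{d:sruf} (the recursion node $v_1 \in \nrm(y)$, the structural copies normal and the recursive copies safe in each $h_j$-node, and the parameters placed consistently) then hold by construction, with Corollary \ref{c:sruf} confirming that the normal subgraphs are preserved. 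In both schemata the GRS defining $f$ is the union of the new rule(s) with the inductively given GRSs for $h$, the $g_i$, and the $h_j$, each using fresh defined symbols.

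The hard part will be the general safe recursion case, where two points need care. First, soundness: when the principal normal argument is a fully formed constructor tree of height $m$, the unique matching unfolding rule of that height must fire and produce exactly the $m$-fold unfolding of the recursion in terms of the $\Theta$-symbols $h_j$, after which the inductively given rules for the $h_j$ finish the computation; this is essentially the simulation underlying Theorem \ref{t:GRR10}, now arranged so as to respect the explicit argument separation rather than the injectivity restriction of \cite{GRR2010}. Second, I would have to check that forming unions of the subsystems preserves both the constructor property and the defining behaviour, so that the disjointness of the fresh defined symbols rules out spurious redexes and the constructor normal forms of the subfunctions feed correctly into the outer rule.
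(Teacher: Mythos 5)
Your proposal follows essentially the same route as the paper: induction over the build-up of $f$, with the base functions handled by single rules of shape \ref{d:srgrs:1} in Definition \ref{d:srgrs} (projections, predecessors, conditional) or a degenerate instance of shape \ref{d:srgrs:2} (constants), safe composition by a verbatim shape-\ref{d:srgrs:2} rule, and general safe recursion by the infinite family of safe recursive unfolding rules over $\Sigma = \{ c_j \mid j \in I\}$ and $\Theta = \{ h_j \mid j \in I \}$, taking unions with the GRSs from the induction hypothesis.

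One small technical wrinkle: for the constructor functions $C_j$ you propose a rule $C_j(\sn{}{x_1, \dots, x_a}) \rightarrow c_j(\sn{}{x_1, \dots, x_a})$ and claim it fits shape \ref{d:srgrs:2}, but that shape requires the safe successors of the right root to be \emph{labelled} nodes $w_1, \dots, w_n$ (each with $\lab_G(w_j) \in \FS$ and full parameter lists as successors), not variable nodes, and shape \ref{d:srgrs:1} fails too since the right-hand side is not a subgraph of the left. The paper sidesteps this by simply identifying $C_j$ with the constructor $c_j$ itself (and $O^{0,0}_j$ with the constant $c_j$), so no rewrite rule is needed for constructors at all; with that adjustment your argument matches the paper's proof, which leaves implicit the soundness and disjoint-union checks you spell out.
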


\begin{proof}
By induction over the definition of $f$.
Since we always assume that constructors only have safe argument
 positions, we can identify the function $O^{0, 0}_j$ with the constant 
$c_j$ and the function $C_j$ with the constructor $c_j$.
In the base case, {\bf (Constants)} can be defined by a single graph rewrite
 rule in a special shape of \ref{d:srgrs:2} in Definition 
\ref{d:srgrs}, and each of
{\bf (Projections)}, {\bf (Predecessors)} and {\bf (Conditional)}
can be defined by a single graph rewrite rule in the form of 
\ref{d:srgrs:1} in Definition \ref{d:srgrs}.
The induction step splits into two cases.
In case that $f$ is defined by {\bf (Safe composition)},
$f$ is defined by a graph rewrite rule in the form of 
\ref{d:srgrs:2} in Definition \ref{d:srgrs} together with the
 constructor safe recursive GRSs obtained from induction hypothesis.
In case that $f$ is defined by {\bf (General safe recursion)},
$f$ is defined by an infinite set of constructor safe recursive unfolding graph
 rewrite rules together with the constructor safe recursive GRSs
 obtained from induction hypothesis.
\qed
\end{proof}

\begin{example}
\label{ex:GRS}
Let us discuss a safe recursive GRS expressing the TRS 
$\RS$ on page \pageref{ex:TRS}.
To obey the formal definition of safe recursive GRSs, instead of considering $\RS$ directly,
we consider the following TRS over the signature $\FS$
with 
$\CS = \{ \epsilon, \m{0}, \m{c}, \ms \}$ and
$\DS = \{ \mI{0,1}{1}, \mI{2,2}{3},  \mI{2,2}{4}, \mI{2,1}{3},
          \m{e}, \mh_0, \mh_1, \mg, \mf 
       \}$.

%\noindent
$\begin{array}{rclcrcl}
&&& \qquad &
\mI{2,l}{j} (\sn{x,y}{u_1, u_2}) & \rightarrow & u_{j-2} \quad
(l = 1, 2, \ j \in \{ 3, 2+l \}) \\
\mI{0,1}{1} (\sn{}{x}) & \rightarrow & x & &
\mh_0 (\sn{x, y}{\, u, v}) & \rightarrow & 
\m{c} (\sn{}{\mI{2,2}{3} (\sn{x, y}{u, v}), \mI{2,2}{4} (\sn{x, y}{u, v})}) \\
\mg (\sn{\epsilon}{z}) & \rightarrow & \mI{0,1}{1} (\sn{}{z}) & &
\mg (\sn{\m{c} (\sn{}{x, y})}{z}) & \rightarrow &
\mh_0 (\sn{x, y}{z, \mg (\sn{x}{z}), \mg (\sn{y}{z})}) \\
\me (\sn{x}{}) & \rightarrow & \epsilon & &
\mh_1 (\sn{x, y}{z}) & \rightarrow &
\mg (\sn{y}{\mI{2,1}{3} (\sn{x,y}{z})}) \\
\mf (\sn{\m{0}, y}{}) & \rightarrow & \me (\sn{y}{}) & &
\mf (\sn{\ms (\sn{}{x}), y}) & \rightarrow & 
\mh_1 (\sn{x, y}{\mf (\sn{x, y}{})})
\end{array}
$

\noindent
The rewrite rules defining
$\mI{0,1}{1}$, $\mI{2,2}{3}$, $\mI{2,2}{4}$ and $\mI{2,1}{3}$
can be expressed by graph rewrite rules in the shape of 
Case \ref{d:srgrs:1} in Definition \ref{d:srgrs}.
For example, the rule
$\mI{2,1}{3} (\sn{x,y}{z}) \rightarrow z$
is expressed by the rule $(1)$ below.
The defining rule for $\mh_0$ can be expressed by the graph rewrite rule
$(2)$ which is an instance of Case \ref{d:srgrs:2} in Definition
\ref{d:srgrs}.
\begin{equation*}
  (1)
  \xymatrix{& *+[o][F-]{\mI{2,1}{3}} \ar[dl] \ar[d] \ar[dr] & \\
            \bot & \bot & *+[F]{\bot}
           }
\quad
  (2)
  \xymatrix{ & & & *+[F]{\m{c}} \ar[dl] \ar[d] \\
            *+[o][F-]{\mh_0} \ar[d] \ar[dr] \ar[drr] \ar[drrr] & &
            \mI{2,2}{3} \ar[dll] \ar[dl] \ar[d] \ar[dr] & 
            \mI{2,2}{4} \ar[dlll] \ar[dll] \ar[dl] \ar[d] \\
            \bot & \bot & \bot & \bot
           }
\quad
  (3) \
  \xymatrix{*+[o][F-]{\m{e}} \ar[d] & \\
            \bot & *+[F]{\epsilon}
           }
\end{equation*}
Consider the rewrite rules
$\mg (\sn{\epsilon}{z}) \rightarrow \mI{0,1}{1} (\sn{}{z})$
and
$\mg (\sn{\m{c} (\sn{}{x, y})}{z}) \rightarrow
 \mh_0 (x, y; z$, $\mg (\sn{x}{z}), \mg (\sn{y}{z}))$.
Let $\Sigma_{\mg} = \{ \epsilon, \m{c} \}$ and 
$\Theta_{\mg} = \{ \mI{0,1}{1}, \mh_0 \}$ be two signatures with the bijection
$\epsilon \mapsto \mI{0,1}{1}$ and $\m{c} \mapsto \mh_0$.
Define an argument separation as indicted in the rules above.
Then, the rewrite rules defining $\mg$ can be expressed by the set 
$\GS_{\mg}$ of all the safe recursive
 unfolding graph rewrite rules defining $\mg$ over
$\Sigma_{\mg} \cup \Theta_{\mg} \cup \{ \mg \}$.
The rule
$\m{e} (\sn{x}{}) \rightarrow \epsilon$
can be expressed by the graph rewrite rule $(3)$ above which is a
 special case of Case \ref{d:srgrs:2} in Definition
\ref{d:srgrs}.
The defining rule for $\mh_1$ can be expressed by a graph rewrite rule
 as $(2)$.
Finally, consider the rewrite rules 
$\mf (\sn{\m{0}, y}{}) \rightarrow \m{e} (\sn{y}{})$
and
$\mf (\sn{\ms (\sn{}{x}), y}) \rightarrow  
 \mh_1 (\sn{x, y}{\mf (\sn{x, y}{})})
$.
Let 
$\Sigma_{\mf} = \{ \m{0}, \ms \}$ and 
$\Theta_{\mf} = \{ \m{e}, \mh_1 \}$ be two signatures with the bijection
$\m{0} \mapsto \m{e}$ and $\ms \mapsto \mh_1$.
Define an argument separation as indicted accordingly.
Then, the rewrite rules defining $\mf$ can be expressed by 
the infinite set of safe recursive unfolding graph rewrite rules
 defining $\mf$ over 
$\Sigma_{\mg} \cup \Theta_{\mg} \cup \{ \mf \}$.
Now, define a GRS $\GS$ by
$\GS = \GS_{\mg} \cup \GS_{\mf} \cup \GS_0$,
where $\GS_0$ is the finite set of graph rewrite rules defining
$\mI{0,1}{1}$, $\mI{2,2}{3}$,  $\mI{2,2}{4}$, $\mI{2,1}{3}$,
$\m{e}$, $\mh_0$ and $\mh_1$ as pictured above.
Clearly $\GS$ is a constructor safe recursive GRS, and the TRS can
 be expressed by $\GS$.
\end{example}

\begin{lemma}
\label{l:pintgsr}
Let $\GS$ be a constructor safe recursive GRS over a signature $\FS$.
Suppose that $G \rew H$ is induced by a redex $(R, \varphi)$ in a closed term graph 
$G \in \mathcal{TG}_{\nrm} (\FS)$
for a rule $R = (G', l, r) \in \GS$ and a homomorphism
$\varphi : G' \seg l \rightarrow G$.
Let $r' \in V_H$ the node corresponding to $r \in V_{G'}$.
Then, for the interpretations $\pint$ defined for $G$ and $H$,
$\PINT (G \seg \varphi (l)) \spl{\ell} 
 \PINT (H \seg r')$
holds for
$\ell =
\max (\{ |G' \seg r| \} \cup \{ \arity (f) \mid f \in \FS \})$.
\end{lemma}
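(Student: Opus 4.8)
The plan is to follow the proof of Lemma~\ref{l:pint} for the unfolding rules and to add two cases for the finitely many non-unfolding rules admitted by Definition~\ref{d:srgrs}. First I would fix the precedence $>$ on $\FSn$ induced by the definition order of the general safe recursive function represented by $\GS$: it extends the precedence $\fn > \fsn{h}$ ($h \in \Theta$) already used for the recursion rules by requiring $\fsn{f} > \fsn{h}$ and $\fsn{f} > \fsn{g_i}$ whenever $f$ is obtained from $h, g_1, \dots, g_n$ by \textbf{(Safe composition)}, and it imposes no constraint for the projection-, predecessor- and conditional-style rules of form~\ref{d:srgrs:1}. With this precedence in place, the case that $R$ is a safe recursive unfolding graph rewrite rule is exactly Lemma~\ref{l:pint} (its proof uses only the single rule $R$), so it remains to treat forms~\ref{d:srgrs:1} and~\ref{d:srgrs:2}.

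For a rule of form~\ref{d:srgrs:1}, where $G' \seg r = (G' \seg l) \seg v$ for a proper subnode $v$, the node $r'$ is the image of $v$ and $H \seg r'$ is a subgraph of $G \seg \varphi(l)$. Since $\varphi(l)$ lies on a safe path from $\rootnode_G$ (Lemma~\ref{l:safepath}.\ref{l:safepath:1}) and $G \in \TGnrm$, every normal-argument subgraph of $\varphi(l)$ is a constructor graph. If $H \seg r' \in \mathcal{TG(C)}$ --- which covers projection onto a normal argument as well as the predecessor and conditional rules that select a constructor subterm --- then $\PINT(H \seg r') = \lst{}$ and $\PINT(G \seg \varphi(l)) \cspl{ab} \lst{}$ holds because the left sequence is non-empty. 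Otherwise $r'$ is a non-constructor node reached along a safe path, so $\PINT(H \seg r')$ coincides with one of the safe segments occurring inside $\PINT(G \seg \varphi(l))$; I would then apply clause~\ref{d:spl:ab}, matching that segment by $\eqspl{\ell}$ and collapsing every remaining element --- in particular the leading term $\fsn{f}(\dots)$ via $\fsn{f}(\dots) \cspl{sb} \lst{}$ --- to obtain the strict decrease.

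The substantial case is form~\ref{d:srgrs:2} (safe composition). Writing $\fsn{f}(\vec s)$ for the leading term of $\PINT(G \seg \varphi(l))$, I would first reproduce~(\ref{e:pint:1}) to get $\fsn{f}(\vec s) \cspl{st} \fsn{h}(\vec t)$ and $\fsn{f}(\vec s) \cspl{st} \fsn{g_i}(\vec s)$ for each $i$: the normal arguments passed to $h$ (a subset of $\vec x$) and to each $g_i$ (all of $\vec x$) are shared with $\varphi(l)$, hence proper subterms of $\fsn{f}(\vec s)$, while the precedence supplies $\fsn{f} > \fsn{h}, \fsn{g_i}$ and the arity side conditions hold since $\arity(h), \arity(g_i) \le d \le \ell$ and the number $1 + n$ of these terms is at most $|G' \seg r| \le \ell$. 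Every further element of $\PINT(H \seg r')$ is the predicative interpretation of one of the shared safe arguments $x_{k+1}, \dots, x_{k+l}$, which occur unchanged in $\PINT(G \seg \varphi(l))$. I would match these segments by $\eqspl{\ell}$, dominate the new terms $\fsn{h}(\vec t), \fsn{g_1}(\vec s), \dots, \fsn{g_n}(\vec s)$ by the single term $\fsn{f}(\vec s)$ through clause~\ref{d:spl:sb}, and assemble the comparison by clause~\ref{d:spl:ab} together with Lemma~\ref{l:Gl}.\ref{l:Gl:perm}, exactly as at the end of Lemma~\ref{l:pint}.

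The main obstacle is to verify that the shared safe arguments are \emph{not duplicated} on the right: although each $g_i$ carries all of $x_{k+1}, \dots, x_{k+l}$ as safe successors, a naive count would make the right-hand multiset of safe segments larger than the left-hand one and break domination. This is precisely what the interpretation $\pj$ prevents, since it retains only the leftmost safe path to each node, so that each safe-argument subgraph is interpreted once on each side. Making this bookkeeping precise across the branches $w_1, \dots, w_n$ --- confirming that the multiset of safe segments of $\PINT(H \seg r')$ is dominated by that of $\PINT(G \seg \varphi(l))$ relative to the respective restrictions of $\pj(G)$ and $\pj(H)$ --- is where I expect the real work to lie.
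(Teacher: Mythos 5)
Your proposal follows essentially the same route as the paper's proof: reduce to Lemma~\ref{l:pint} for the unfolding rules, then handle forms~\ref{d:srgrs:1} and~\ref{d:srgrs:2} separately, with the safe-composition case assembled exactly as you describe from clause-\ref{d:spl:st} orientations on the head terms (using $1+n \leq |G' \seg r| \leq \ell$), clause~\ref{d:spl:sb}, and Lemma~\ref{l:Gl}.\ref{l:Gl:perm}. The bookkeeping you flag as the remaining work is precisely how the paper discharges the duplication concern --- its displayed interpretations show that $\pj$ keeps the shared safe arguments only under the leftmost branch, so $\PINT (H \seg w_j)$ is a singleton for $j \geq 2$ --- and the only (inessential) divergence is in form~\ref{d:srgrs:1}, where the paper argues by structural induction over $G \seg \varphi (l)$ with Lemma~\ref{l:Gl}.\ref{l:Gl:con} rather than your direct clause-\ref{d:spl:ab} collapse.
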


\begin{proof}
By Lemma \ref{l:pint}, it suffices to check the case that the rule
$R \in \GS$ is in the form either \ref{d:srgrs:1} or
\ref{d:srgrs:2} in Definition \ref{d:srgrs}.
We mention that Lemma \ref{l:safepath} still holds for a constructor
 safe recursive GRS $\GS$. 

{\sc Case  \ref{d:srgrs:1}}.
$H \seg r' = (G \seg \varphi (l)) \seg \varphi (v) =
 G \seg \varphi (v)$ for some node
$v \in V_{G' \seg l} \setminus \{ l \}$.
If $G \seg \varphi (v) \in \mathcal{TG(C)}$, then
$H \seg r' \in \mathcal{TG(C)}$, and hence
$\pint (G \seg \varphi (l)) \cspl{ab} \lst{} =
 \pint (H \seg r')$.
Suppose $G \seg \varphi (v) \not\in \mathcal{TG(C)}$.
%Then $H \seg r' \not\in \mathcal{TG(C)}$.
Since $G \in \TGnrm$, the node $\varphi (v)$ lies on a safe path from
$\rootnode_G$.
Moreover, as well as $\varphi (l)$, the node $r'$ lies on a safe path
 from $\rootnode_H$.
From these observations, the equality
$H \seg r' = G \seg \varphi (v)$
and Lemma \ref{l:Gl}.\ref{l:Gl:con},
one can show that
$\pint (G \seg \varphi (l)) \spl{\ell} \pint (H \seg r')$
holds by structural induction over 
$G \seg \varphi (l)$.

{\sc Case \ref{d:srgrs:2}}.
Let $V_{G'}$ consists of $2+k+l+n$ elements
$u$, $v$, $x_1, \dots, x_{k+l}$, $w_1, \dots, w_n$
as specified in Case \ref{d:srgrs:2} in Definition \ref{d:srgrs}.
To make the presentation simpler, let us identify the nodes
$u$, $x_1, \dots, x_{k+l} \in V_{G' \seg l}$
with the nodes in $V_{G}$ corresponding by the homomorphism
$\varphi$
and the nodes
$v$, $w_1, \dots, w_n$
$x_{k+1}, \dots, x_{k+l} \in V_{G' \seg r}$
with the corresponding nodes in $V_H$, 
e.g., 
$u = \varphi (l)$ and $v = r'$.
We write $f$ to denote $\lab_G (u)$, $h$ to denote 
$\lab_{H} (v)$ and write $g_j$ to denote
$\lab_H (w_j)$ for each $j \in \{ 1, \dots, n \}$.
Then, by the interpretations $\pint$ defined for $G$ and $H$, the term graphs $G \seg u$ and $H \seg v$ are respectively transformed into
 the following sequences of terms.
\begin{eqnarray*}
\PINT (G \seg u) &=&
\lst{\fn (\term (G \seg x_1), \dots, \term (G \seg x_k))
    }
\con 
\PINT (G \seg y_{1}) \con \cdots \con
\PINT (G \seg y_{l'}),
\\
\PINT (H \seg v) &=&
\lst{\hn (\term (H \seg x_{j_1}), \dots, \term (H \seg x_{j_m})}
\con 
\PINT (H \seg w_{1}) \con \cdots \con
\PINT (H \seg w_{n}),
\\
\PINT (H \seg w_1) &=&
\lst{\fsn{(g_1)} (\term (H \seg x_1), \dots, \term (H \seg x_k))
    }
\con 
\PINT (H \seg y_{1}) \con \cdots \con
\PINT (H \seg y_{l'}),
\\
\PINT (H \seg w_j) &=&
\lst{\fsn{(g_j)} (\term (H \seg x_1), \dots, \term (H \seg x_k))
    }
\hspace{2cm} \hfill (2 \leq j \leq n),
\end{eqnarray*}
where 
$\{ j_1, \dots, j_m \} \subseteq \{ 1, \dots, k \}$,
$\{ y_1, \dots, y_{l'} \} =
 \{ x \in \{ x_{k+1}, \dots, x_{k+l} \} \mid
          (u, x) \in E_{\pj (G) \seg u}
 \}
 =
 \{ x \in \{ x_{k+1}, \dots$, $x_{k+l} \} \mid
    (w_1, x) \in E_{\pj (H) \seg w_1}
 \}$,
and
$y_1 < \cdots < y_{l'}$.
Define a precedence $>$ as
 $\fn > \hn$, $\fsn{(g_1)}, \dots, \fsn{(g_n)}$.
Write $t_j$ to denote
$\term (G \seg x_j)$ for each $j \in \{ 1, \dots, k \}$.
Then
$\term (H \seg x_j) = t_j$ holds for all
$j \in \{ 1, \dots, k \}$.
Since $\fn > \hn$ and 
$\fn (t_1, \dots, t_k) \rhd t_{j_m}$
for all $i \in \{ 1, \dots, m \}$,
the following orientation holds.
\begin{equation}
\label{e:pintgsr:1}
\fn (t_1, \dots, t_k)
\cspl{st}
\hn (t_{j_1}, \dots, t_{j_m})
\end{equation}
Similarly, since $\fn > \fsn{(g_j)}$ for all 
$j \in \{ 1, \dots, n \}$ and
$\fn (t_1, \dots, t_k) \rhd t_{j}$
for all $j \in \{ 1, \dots, k \}$,
the following orientations also hold.
\begin{equation}
\label{e:pintgsr:2}
\fn (t_1, \dots, t_k)
\cspl{st}
\fsn{(g_j)} (t_{1}, \dots, t_{k})
\quad (j=1, \dots, n)
\end{equation}
Since $1+n \leq |G' \seg r| \leq \ell$,
the orientations (\ref{e:pintgsr:1}) and (\ref{e:pintgsr:2}) imply
\begin{equation*}
\fn (t_1, \dots, t_k)
\cspl{sb}
\lst{\hn (t_{j_1}, \dots, t_{j_m}) \
     \fsn{(g_1)} (t_{1}, \dots, t_{k}) \cdots
     \fsn{(g_n)} (t_{1}, \dots, t_{k})
    }.
\end{equation*}
This together with Lemma  \ref{l:Gl}.\ref{l:Gl:perm} allows us to
conclude 
$\pint (G \seg u) \cspl{ab} \pint (H \seg v)$.
\qed
\end{proof}

One would observe that Lemma \ref{l:basic} also holds for a constructor
safe recursive GRS $\GS$.
Thus, by a slight modification of the proof of Theorem \ref{t:context},
using Lemma \ref{l:pintgsr} instead of Lemma \ref{l:pint},
one can deduce the following theorem.

\begin{theorem}
\label{t:contextgsr}
Let $\GS$ be a constructor safe recursive GRS over a signature $\FS$ and
$\GS_0$ be the maximal subset of $\GS$ that does not contain any
 unfolding graph rewrite rule.
Suppose
$\max (\{ \arity (f) \mid f \in \FS \} \cup
       \{ |G \seg r| \mid \exists l \ (G, l, r) \in \GS_0 \}
      ) \leq d$.
For any closed basic term graph
$G_0 \in \mathcal{TG(F)}$,
if $G_0 \rewast G$ and $G \rew H$, then 
$\PINT (G) \spl{\ell} \PINT (H)$
holds for
$\ell = 2 |\bigcup_{v \in \nrm (\rootnode_{G_0})} V_{G_0 \seg v}| + d$.
\end{theorem}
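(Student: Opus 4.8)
The plan is to reproduce the proof of Theorem \ref{t:context} verbatim in its outer structure, substituting Lemma \ref{l:pintgsr} for Lemma \ref{l:pint}, and to insert a single case distinction according to whether the rule $R = (G', l, r)$ inducing $G \rew H$ is an unfolding graph rewrite rule or a member of $\GS_0$. First I would fix the redex $(R, \varphi)$ with $\varphi : G' \seg l \rightarrow G$. As already observed, both Lemma \ref{l:safepath} and Lemma \ref{l:basic} remain valid for a constructor safe recursive GRS; hence $G, H \in \TGnrm$ by Lemma \ref{l:safepath}.\ref{l:safepath:2}, and $\varphi (l)$ lies on a safe path from $\rootnode_G$ by Lemma \ref{l:safepath}.\ref{l:safepath:1}.

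The only genuinely new work is to check that the index $\max(\{ |G' \seg r| \} \cup \{ \arity (f) \mid f \in \FS \})$ required by Lemma \ref{l:pintgsr} is bounded by the target value $\ell = 2 |\bigcup_{v \in \nrm (\rootnode_{G_0})} V_{G_0 \seg v}| + d$. If $R$ is an unfolding rule I would argue exactly as in Theorem \ref{t:context}, combining the structural bound $|G' \seg r| \leq 2 |G \seg v_1| + d$ (with $v_1$ the principal normal successor of $\varphi (l)$) with Lemma \ref{l:basic}, which bounds $|\bigcup_{v \in \nrm (\varphi (l))} V_{G \seg v}|$ by $|\bigcup_{v \in \nrm (\rootnode_{G_0})} V_{G_0 \seg v}|$, to obtain $|G' \seg r| \leq \ell$. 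If instead $R \in \GS_0$, the bound is immediate from the choice of $d$, since $d \geq |G' \seg r|$ for every rule $(G', l, r) \in \GS_0$; thus again $|G' \seg r| \leq d \leq \ell$. In either case the hypothesis of Lemma \ref{l:pintgsr} is met with the uniform index $\ell$.

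Having secured the index, Lemma \ref{l:pintgsr} yields the local comparison $\PINT (G \seg \varphi (l)) \spl{\ell} \PINT (H \seg r')$ for the node $r'$ corresponding to $r$. To lift this to $\PINT (G) \spl{\ell} \PINT (H)$ I would split on the position of the redex exactly as in Theorem \ref{t:context}: when $\varphi (l) = \rootnode_G$ the conclusion is immediate, after raising the index if necessary via Lemma \ref{l:Gl}.\ref{l:Gl:mon}; when $\varphi (l) \neq \rootnode_G$ the redex sits strictly below the root on a safe path, and Lemma \ref{l:Gl}.\ref{l:Gl:con} propagates the comparison through the surrounding context.

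I do not anticipate a serious obstacle, since the skeleton is identical to that of Theorem \ref{t:context}; the only point demanding care is confirming that the definition of $d$ was chosen precisely to absorb the right-hand sides of the finitely many rules in $\GS_0$, which need not satisfy the $2 |G \seg v_1| + d$ estimate available for unfolding rules. This is exactly why $\GS_0$ is singled out as the maximal unfolding-free subsystem and why the maximum in the hypothesis ranges over $\{ |G \seg r| \mid \exists l \ (G, l, r) \in \GS_0 \}$; once this is noted, the two cases merge into a single application of Lemma \ref{l:pintgsr}.
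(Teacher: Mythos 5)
Your proposal is correct and matches the paper's argument: the paper itself proves Theorem \ref{t:contextgsr} exactly by the "slight modification" you describe, namely rerunning the proof of Theorem \ref{t:context} with Lemma \ref{l:pintgsr} in place of Lemma \ref{l:pint}, observing that Lemmas \ref{l:safepath} and \ref{l:basic} extend to constructor safe recursive GRSs, and absorbing the right-hand sides of the finitely many rules in $\GS_0$ into the constant $d$ so that $|G' \seg r| \leq d \leq \ell$ in that case. Your explicit case split on whether the applied rule is an unfolding rule or a member of $\GS_0$ is precisely the point the paper leaves implicit, so nothing is missing.
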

 
\begin{corollary}
\label{c:gsr:1}
For every general safe recursive function $f$ over constructors
$\CS$, there exist a constructor safe recursive GRS $\GS$ over a signature
$\FS \supseteq \CS$ defining $f$ and a polynomial 
$p: \mathbb N \rightarrow \mathbb N$ such that, for any
 closed basic term graph $G \in \mathcal{TG(F)}$,
if
$G \rewm{m} H$ for some term graph $H \in \mathcal{TG(F)}$, then
$m \leq p (|\bigcup_{v \in \nrm (\rootnode_G)} V_{G \seg v}|)$
holds.
\end{corollary}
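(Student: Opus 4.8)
The plan is to follow the strategy behind Corollary~\ref{c:main}, replacing the embedding of Theorem~\ref{t:context} by its generalisation Theorem~\ref{t:contextgsr}. First I would apply Lemma~\ref{l:gsr} to obtain a constructor safe recursive GRS $\GS$ over some signature $\FS \supseteq \CS$ defining $f$; this discharges the existence of $\GS$, leaving only the construction of the polynomial $p$. Let $\GS_0$ be the maximal subset of $\GS$ containing no unfolding graph rewrite rule. By the definition of a safe recursive GRS, $\GS_0$ is finite, so the constant $d := \max(\{\arity (g) \mid g \in \FS\} \cup \{|G \seg r| \mid \exists l\ (G, l, r) \in \GS_0\})$ is well defined and depends only on $\GS$. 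For a given closed basic term graph $G \in \mathcal{TG(F)}$ I then set $\ell = 2 |\bigcup_{v \in \nrm (\rootnode_G)} V_{G \seg v}| + d$.

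Next I would use Theorem~\ref{t:contextgsr} with $G_0 = G$: along any reduction $G = G_0 \rew G_1 \rew \cdots \rew G_m = H$, each single step yields $\PINT (G_i) \spl{\ell} \PINT (G_{i+1})$, where the crucial point is that $\ell$ is computed from $\rootnode_{G_0}$ alone and therefore remains fixed throughout the whole sequence. This produces a strictly descending $\spl{\ell}$-chain of length $m$ issuing from $\PINT (G)$, and since $\mG{\ell} (a) > \mG{\ell} (b)$ whenever $a \spl{\ell} b$, we obtain $m \leq \mG{\ell} (\PINT (G))$.

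It then remains to bound $\mG{\ell} (\PINT (G))$ polynomially in $x := |\bigcup_{v \in \nrm (\rootnode_G)} V_{G \seg v}|$. Since $G$ is basic, all successor subgraphs of $\rootnode_G$ are constructor term graphs, so by Definition~\ref{d:pint} the interpretation collapses to the singleton $\PINT (G) = \lst{\fn (s_1, \dots, s_k)}$, where $s_j = \term (G \seg v)$ for a normal successor $v \in \nrm (\rootnode_G)$ and $k \leq \arity (f) \leq d \leq \ell$. Lemma~\ref{l:Gleq} then gives $\mG{\ell} (\PINT (G)) \leq d^{\rk (\fn)} \cdot (1 + \ell)^{\rk (\fn)} \cdot (1 + \sum_{j=1}^k \depth (s_j))$. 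Using $\depth (s_j) \leq |G \seg v| \leq x$ I get $\sum_{j=1}^k \depth (s_j) \leq d x$, and substituting $\ell = 2 x + d$ bounds the whole expression by a polynomial in $x$; concretely one may take any $p$ with $d^{N} \cdot (1 + 2 x + d)^{N} \cdot (1 + d x) \leq p(x)$, where $N = \max\{\rk (g) \mid g \in \FSn\}$.

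I expect the argument to be essentially routine once the machinery of Sections~\ref{s:order}--\ref{s:gsr} is in place; the single point requiring care is the definition of the constant $d$. Unlike in Corollary~\ref{c:main}, the index $\ell$ fed to the embedding now incorporates, via Theorem~\ref{t:contextgsr}, the sizes $|G \seg r|$ of the right-hand sides of the non-unfolding rules. The key observation—and the only place where the reasoning could fail—is that these rules form the \emph{finite} set $\GS_0$, so that $d$, and hence the exponent $N$ and the coefficients of $p$, depend only on $\GS$ and not on the input $G$. This uniformity is exactly what guarantees that $p$ is a single genuine polynomial valid for every closed basic term graph over $\FS$.
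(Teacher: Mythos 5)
Your proposal is correct and follows essentially the same route the paper intends for this corollary: Lemma~\ref{l:gsr} supplies the GRS, Theorem~\ref{t:contextgsr} replaces Theorem~\ref{t:context} in the argument of Corollary~\ref{c:main}, and Lemma~\ref{l:Gleq} with $\depth (s_j) \leq |G \seg v|$ yields the same polynomial $p(x) \geq d^{N} \cdot (1+2x+d)^{N} \cdot (1+dx)$. You also correctly isolate the one delicate point, namely that $\GS_0$ is finite by Definition~\ref{d:srgrs}, so $d$ (and hence $p$) depends only on $\GS$ and not on the input graph.
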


In addition, Lemma \ref{l:size} can be modified as follows.

\begin{lemma}
\label{l:gsr:size}
Let $\GS$ be a constructor safe recursive GRS over a signature $\FS$,
$\GS_0$ be the maximal subset of $\GS$ that does not contain any
 unfolding graph rewrite rule, and
$\max \{ |G \seg r| \mid \exists l \ (G, l, r) \in \GS_0 \}
 \leq d$.
For any closed basic term graph $G \in \mathcal{TG(F)}$
and for any term graph $H \in \mathcal{TG(F)}$,
if
$G \rewm{n} H$, then
$|V_H \setminus \bigcup_{v \in \nrm (\rootnode_H)} V_{H \seg v}| \leq  
 n \cdot \left( |\bigcup_{v \in \nrm (\rootnode_G)} V_{G \seg v}| +d \right) +
 |V_G \setminus \bigcup_{v \in \nrm (\rootnode_G)} V_{G \seg v}|$
holds.  
\end{lemma}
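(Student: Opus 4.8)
The plan is to proceed by induction on $n$, following the proof of Lemma \ref{l:size} verbatim in its overall shape and only refining the combinatorial observation so that it absorbs the finitely many non-unfolding rules collected in $\GS_0$. The base case $n = 0$ is immediate since then $H = G$. For the induction step I would assume $G \rewm{n} H$ and $H \rew K$ via a redex $(R, \varphi)$ for a rule $R = (G', l, r) \in \GS$, and bound the non-normal part $|V_K \setminus \bigcup_{v \in \nrm (\rootnode_K)} V_{K \seg v}|$ of $K$. As in Lemma \ref{l:size}, both $H$ and $K$ stay in $\TGnrm$ by Lemma \ref{l:safepath}.\ref{l:safepath:2}, and $\varphi (l)$ lies on a safe path from $\rootnode_H$ by Lemma \ref{l:safepath}.\ref{l:safepath:1}; in particular the node $r'$ replacing $\varphi (l)$ sits in a safe position, so every node freshly created by the step is non-normal.

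The only new ingredient is a case split on whether $R$ is an unfolding rule or lies in $\GS_0$. For unfolding rules the original observation $|V_{G' \seg r} \setminus \bigcup_{v \in \nrm (\rootnode_{G'})} V_{G' \seg v}| \leq |G' \seg l|$, together with the fact that the undefined nodes of $G' \seg r$ already occur in $G' \seg l$, yields the unchanged bound $|V_K \setminus \bigcup_{v \in \nrm (\rootnode_K)} V_{K \seg v}| \leq |\bigcup_{v \in \nrm (\rootnode_H)} V_{H \seg v}| + |V_H \setminus \bigcup_{v \in \nrm (\rootnode_H)} V_{H \seg v}|$ exactly as in inequality (\ref{e:size:1}). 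For a rule in $\GS_0$ I would instead use the crude estimate $|V_{G' \seg r} \setminus \bigcup_{v \in \nrm (\rootnode_{G'})} V_{G' \seg v}| \leq |G' \seg r| \leq d$. Both shapes of Definition \ref{d:srgrs} are easy to check here: in Case \ref{d:srgrs:1} the right-hand side $G' \seg r = (G' \seg l) \seg v$ is a subgraph of $G' \seg l$, so the step creates no fresh node at all, whereas in Case \ref{d:srgrs:2} the right-hand side consists of the at most $d$ nodes $v, w_1, \dots, w_n$ together with the undefined nodes $x_1, \dots, x_{k+l}$ that are shared with $l$. Hence a $\GS_0$-step adds at most $d$ non-normal nodes, and since the rewrite happens at $\varphi (l)$ on a safe path, the normal paths from $\rootnode_K$ that avoid $\varphi (l)$ are untouched; one therefore obtains the uniform estimate $|V_K \setminus \bigcup_{v \in \nrm (\rootnode_K)} V_{K \seg v}| \leq |\bigcup_{v \in \nrm (\rootnode_H)} V_{H \seg v}| + d + |V_H \setminus \bigcup_{v \in \nrm (\rootnode_H)} V_{H \seg v}|$, valid for every rule of $\GS$ (for unfolding rules it holds a fortiori, the term $d$ being a harmless slack).

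From here the argument closes as before. Lemma \ref{l:basic}, which continues to hold for constructor safe recursive GRSs, gives $|\bigcup_{v \in \nrm (\rootnode_H)} V_{H \seg v}| \leq |\bigcup_{v \in \nrm (\rootnode_G)} V_{G \seg v}|$, and the induction hypothesis supplies $|V_H \setminus \bigcup_{v \in \nrm (\rootnode_H)} V_{H \seg v}| \leq n \cdot (|\bigcup_{v \in \nrm (\rootnode_G)} V_{G \seg v}| + d) + |V_G \setminus \bigcup_{v \in \nrm (\rootnode_G)} V_{G \seg v}|$. Substituting both into the uniform estimate collapses the bound for $K$ to $(n+1) \cdot (|\bigcup_{v \in \nrm (\rootnode_G)} V_{G \seg v}| + d) + |V_G \setminus \bigcup_{v \in \nrm (\rootnode_G)} V_{G \seg v}|$, which is exactly the claim.

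I expect the genuine obstacle to be the verification, for Case \ref{d:srgrs:1} and Case \ref{d:srgrs:2}, that a $\GS_0$-step does not silently turn previously normal nodes into non-normal ones and thereby inflate $|V_K \setminus \bigcup_{v \in \nrm (\rootnode_K)} V_{K \seg v}|$ beyond the promised $+d$. This rests on the redex sitting on a safe path (Lemma \ref{l:safepath}.\ref{l:safepath:1}) together with the argument-separation dichotomy of Definition \ref{d:safeterm}, which forbids a redex node from being reached along a normal path as well; it is precisely the same delicacy that underlies inequality (\ref{e:size:1}) in Lemma \ref{l:size}. The remaining arithmetic is routine.
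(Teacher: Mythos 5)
Your proof is correct and is essentially the paper's intended argument: the paper states Lemma~\ref{l:gsr:size} without a printed proof, as a modification of Lemma~\ref{l:size}, and your induction on $n$ with the case split on unfolding rules versus $\GS_0$-rules (reusing inequalities (\ref{e:size:1})--(\ref{e:size:3}) with the $+d$ slack justified by the shapes \ref{d:srgrs:1} and \ref{d:srgrs:2} of Definition~\ref{d:srgrs}) is exactly that modification. Your closing remark on the delicacy is also well placed: the crude bound $|V_K \setminus \bigcup_{v \in \nrm (\rootnode_K)} V_{K \seg v}| \leq |H| + d$ absorbs any migration of formerly normal nodes into the non-normal part, just as (\ref{e:size:1}) does in the original proof.
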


In order to show that a witnessing GRS $\GS$ for Theorem \ref{t:GRR10} is
polynomially bounded, it is shown that there exists a
polynomial $p: \mathbb N \rightarrow \mathbb N$ such that
$\max \{ m, |H| \} \leq p (|G|)$ holds whenever $G \irewm{m} H$ holds
\cite[Propositon 1]{GRR2010}.
As a consequence of Corollary \ref{c:gsr:1} and Lemma \ref{l:gsr:size},
it can be sharpened as follows.

\begin{corollary}
\label{c:gsr:2}
For every general safe recursive function $f$ over constructors $\CS$, there exist a
constructor safe recursive GRS $\GS$ over a signature 
$\FS \supseteq \CS$ defining $f$ and a polynomial
$p: \mathbb N \rightarrow \mathbb N$ such that,
for any closed basic term graph
$G \in \mathcal{TG(F)}$
and for any term graph $H \in \mathcal{TG(F)}$,
if $G \rewm{m} H$, then the following two conditions hold.
\begin{enumerate}
\item $m \leq p \left( |\bigcup_{v \in \nrm (\rootnode_G)} V_{G \seg v}| \right)$.
\item $|H| \leq p \left( |\bigcup_{v \in \nrm (\rootnode_G)} V_{G \seg v}| \right) +
       |V_G \setminus \bigcup_{v \in \nrm (\rootnode_G)} V_{G \seg v_j}|$.
\end{enumerate}
\end{corollary}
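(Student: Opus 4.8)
The plan is to read off condition (1) directly from Corollary \ref{c:gsr:1} and to derive condition (2) by combining Corollary \ref{c:gsr:1} with Lemma \ref{l:gsr:size} and the general-GRS version of Lemma \ref{l:basic}, exactly mirroring the passage from Corollary \ref{c:main} to Corollary \ref{c:size}. First I would fix a general safe recursive function $f$ and let $\GS$ be the constructor safe recursive GRS defining $f$ supplied by Corollary \ref{c:gsr:1}, together with the polynomial it provides, which I rename $q$. By construction $m \leq q(|\bigcup_{v \in \nrm(\rootnode_G)} V_{G \seg v}|)$ holds whenever $G \rewm{m} H$, so condition (1) already holds for $q$.

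For condition (2) I would abbreviate $x = |\bigcup_{v \in \nrm(\rootnode_G)} V_{G \seg v}|$ and split $|H|$ along the normal/non-normal partition of its vertices:
\[
|H| = \left| \bigcup_{v \in \nrm(\rootnode_H)} V_{H \seg v} \right| + \left| V_H \setminus \bigcup_{v \in \nrm(\rootnode_H)} V_{H \seg v} \right|.
\]
The first summand is bounded by $x$ via Lemma \ref{l:basic}, which continues to hold for constructor safe recursive GRSs as remarked before Theorem \ref{t:contextgsr}. The second summand is bounded by Lemma \ref{l:gsr:size}, giving $\leq m \cdot (x + d) + |V_G \setminus \bigcup_{v \in \nrm(\rootnode_G)} V_{G \seg v}|$, where $d$ is the fixed constant bounding $|G' \seg r|$ over the finitely many non-unfolding rules of $\GS$; we may assume $d \geq 1$, since enlarging $d$ only weakens this bound. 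Substituting $m \leq q(x)$ from condition (1) yields $|H| \leq x + q(x)(x+d) + |V_G \setminus \bigcup_{v \in \nrm(\rootnode_G)} V_{G \seg v}|$.

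Finally I would package both inequalities under one polynomial by setting $p(x) = x + q(x)\,(x + d)$. Because $x + d \geq 1$ we have $q(x) \leq q(x)(x+d) \leq p(x)$, so condition (1) holds with $p$ in place of $q$, while the computation above reads $|H| \leq p(x) + |V_G \setminus \bigcup_{v \in \nrm(\rootnode_G)} V_{G \seg v}|$, which is condition (2). I do not expect a genuine obstacle: all the analytic work is done in Corollary \ref{c:gsr:1} and Lemma \ref{l:gsr:size}, and the only care needed is the bookkeeping of the additive constant $d$ contributed by the finite non-unfolding part $\GS_0$ and the choice of a single $p$ dominating both bounds.
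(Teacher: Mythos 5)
Your proposal is correct and matches the paper's intent exactly: the paper gives no separate proof of Corollary \ref{c:gsr:2}, stating it as an immediate consequence of Corollary \ref{c:gsr:1} and Lemma \ref{l:gsr:size}, with the derivation mirroring the passage from Corollary \ref{c:main} to Corollary \ref{c:size} just as you carry it out (including the use of Lemma \ref{l:basic} in its safe recursive GRS form). Your explicit bookkeeping of the additive constant $d$ and the choice of a single dominating polynomial $p(x) = x + q(x)(x+d)$ is precisely the adjustment the generalised setting requires.
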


In contrast to Theorem \ref{t:GRR10}, the upper bound
$p \left( |\bigcup_{v \in \nrm (\rootnode_G)} V_{G \seg v}| \right)$
for $m$ depends only on the size
$|\bigcup_{v \in \nrm (\rootnode_G)} V_{G \seg v}|$
(of the union) of the subgraphs connected to
the normal argument positions.
Moreover, innermost rewriting is not assumed as long as rewriting starts
with a (closed) basic term graph.

\begin{remark}
The schema (\ref{e:gsr}) is formulated based on safe recursion (on
 notation) following \cite{BC92} whereas
the schema of general ramified recurrence formulated in \cite{GRR2010}
 is based on ramified recurrence following \cite{Leivant95}.
Due to the difference between safe recursion and ramified recurrence, the
 definition of general safe recursive functions on page \pageref{d:srgrs}
 is slightly different from the original definition of
{\em tiered recursive} functions in \cite{GRR2010}.
Notably, the schema (\textbf{Safe composition}) is a weaker form of the
 original one in \cite{BC92}, which was introduced in \cite{HW99}.
It is not clear whether there is a precise correspondence between
 general safe recursive functions in the current formulation and tiered
 recursive functions.
However, it is known that the polytime functions (over binary words) can
 be covered with the weak form of safe composition, cf.
\cite[Lemma 12]{AEMspop}, which means that the restriction of the
 general safe recursive functions to unary constructors still
covers all the polytime functions.
\end{remark}

\section{Precedence termination with argument separation}
\label{s:ptas}

In this section, generalising the definition of safe recursive
GRSs defined in the previous section, we propose
{\em precedence termination with argument separation},
which is a restriction of the standard precedence termination in the
sense of \cite{MidOZ96}.
The restrictive precedence termination together with suitable
assumptions yields a new criterion for
the polynomial runtime complexity of infinite GRSs and for the polynomial
size of normal forms in infinite GRSs 
(Corollary \ref{c:spt:2}).

\begin{definition}
\label{d:ptas}
\normalfont
Let $>$ be a precedence over a signature $\FS$ and
$G, H \in \mathcal{TG(F)}$ be two term graphs.
Then the relation $G \spt H$ holds if 
$\lab_G (\rootnode_G) > \lab_H (v)$
for any $v \in V_H$ whenever $\lab_H (v)$ is defined, and additionally
one of the following two cases holds.
\begin{enumerate}
\item $G \seg v \eqspt H$ for some successor node
      $v$ of $\rootnode_G$.
\item $\lab_H (\rootnode_H)$ is defined, i.e.
      $\lab_G (\rootnode_G) > \lab_H (\rootnode_H)$,
      \begin{itemize}
      \item for each $v \in \nrm (\rootnode_H)$,
            $H \seg v$ is a sub-term graph of 
            $G \seg u$ for some $u \in \nrm (\rootnode_G)$, and
      \item $G \spt H \seg v$ for each $v \in \safe (\rootnode_H)$.
      \end{itemize}
\end{enumerate} 
We say that a GRS $\GS$ over a signature $\FS$ is
{\em precedence-terminating with argument separation}
if for some separation of argument positions and for some precedence
$>$ on $\FS$, the following two conditions are fulfilled.
\begin{enumerate}
\item For each rule $(G, l, r) \in \GS$,
      $\lab_G (v)$ is undefined for any node
      $v \in \safe (l)$.
\item $G \seg l \spt G \seg r$ for each rule 
      $(G, l, r) \in \GS$ for the relation 
      $\spt$ induced by the precedence $>$. 
\end{enumerate}
\end{definition}

We note that, for any graph rewrite rule $(G, l, r)$,
{\em variable nodes are maximally shared} in the term graph
$G \seg r$, which may not be assumed in a different formulation of graph
rewrite systems.
In the rest of this section, for every GRS $\GS$ we always assume that for each
rewrite rule $(G, l, r) \in \GS$ and for any node 
$v \in V_{G \seg r}$, if $\lab_G (v)$ is undefined, then
$v \in V_{G \seg l}$.
We only consider GRSs over finite signatures.
Hence, for any (infinite) constructor GRS $\GS$ over a signature
$\FS = \CS \cup \DS$,
the defined symbols $\DS$ can be partitioned into two sets
$\Dinf$ and $\Dfin$
so that every symbol $f \in \Dinf$ is defined by infinite rules
whereas every symbol $f \in \Dfin$ is defined by finite rules.

Recall that Lemma \ref{l:Gl}.\ref{l:Gl:perm} was employed to show Lemma 
\ref{l:pint}.
To show a lemma corresponding to Lemma \ref{l:pint}, Lemma
\ref{l:Gl}.\ref{l:Gl:perm} is slightly generalised.
 
\begin{lemma}
\label{l:Gl:permptas}
If $s \cspl{sb} b$,
$b = b_1 \con \cdots \con b_{k}$ and
$b_j \neq \lst{}$ for each $j \in \{ 1, \dots, k \}$, then
$\lst{s} \con c \cspl{ab}
 b_1 \con c_1 \con  \cdots \con b_k \con c_k$
holds for any sequence
$c = c_1 \con \cdots \con c_k$.
\end{lemma}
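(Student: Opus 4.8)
The plan is to apply clause \ref{d:spl:ab} of Definition \ref{d:spl} (the $\cspl{ab}$ case) to the source sequence $\lst{s} \con c$ and the target $b_1 \con c_1 \con \cdots \con b_k \con c_k$ directly, exactly as in the proof of Lemma \ref{l:Gl}.\ref{l:Gl:perm} but with the single insertion point replaced by $k$ of them. The one element carrying a strict descent will be $s$ itself: to $s$ I would assign the whole block $b = b_1 \con \cdots \con b_k$, using $s \spl{\ell} b$ (which is precisely what $s \cspl{sb} b$ provides) as the required strict comparison, and to every element of $c$ I would assign the corresponding singleton block, using reflexivity of $\eqspl{\ell}$. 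With these assignments the only remaining obligation is to supply a permutation $\pi$ witnessing that the block concatenation taken in source order, namely $b \con c = b_1 \con \cdots \con b_k \con c_1 \con \cdots \con c_k$, is a rearrangement of the interleaved target $b_1 \con c_1 \con \cdots \con b_k \con c_k$.

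First I would dispose of the degenerate case $c = \lst{}$, i.e. $c_1 = \cdots = c_k = \lst{}$: there the target collapses to $b_1 \con \cdots \con b_k = b$ and the claim reduces to $\lst{s} \cspl{ab} b$, which is immediate from $s \spl{\ell} b$ by clause \ref{d:spl:ab} with the single block $b$. For the general case I would fix the assignment of blocks to the elements of $\lst{s} \con c$ as described above and observe that the two side conditions of clause \ref{d:spl:ab}, namely that every source element is $\eqspl{\ell}$ its block and that at least one comparison is strict, are met by $s \spl{\ell} b$ together with $d \eqspl{\ell} d$ for each remaining element $d$ of $c$.

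The heart of the argument is then the construction of $\pi$. Writing $\beta_j = |b_j|$ and $\gamma_j = |c_j|$, the target places $b_j$ and $c_j$ consecutively in the window of length $\beta_j + \gamma_j$ starting at $\sum_{i<j}(\beta_i + \gamma_i)$, whereas the source block concatenation lists all the $b_j$ first and all the $c_j$ afterwards. I would therefore define $\pi$ piecewise, affinely on each of the $2k$ index ranges, sending the source range occupied by $b_j$ onto the target range occupied by $b_j$ and the source range occupied by $c_j$ onto the target range occupied by $c_j$; since within each block the left-to-right order is preserved, each piece is an order-preserving shift and the $2k$ ranges partition $\{ 1, \dots, l \}$ on both the source and the target side, so $\pi$ is a bijection realizing $b_1 \con \cdots \con b_k \con c_1 \con \cdots \con c_k = \lst{t_{\pi (1)} \cdots t_{\pi (l)}}$ for the target $\lst{t_1 \cdots t_l}$. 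The hypothesis $b_j \neq \lst{}$ guarantees that each $b_j$-window is genuinely nonempty, so the target really does decompose into the advertised interleaving and the ranges are unambiguous.

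I expect the only real obstacle to be the bookkeeping for $\pi$: setting $B = \sum_{i=1}^{k} \beta_i$ and writing the three-way case split of Lemma \ref{l:Gl}.\ref{l:Gl:perm} as a $2k$-fold case split over the partial sums of the $\beta_i$ and $\gamma_i$, then checking bijectivity without off-by-one errors. Conceptually nothing new happens beyond the two-block situation already handled in Lemma \ref{l:Gl}.\ref{l:Gl:perm}; the difficulty scales only with the number of blocks, and the concluding application of clause \ref{d:spl:ab} is then routine.
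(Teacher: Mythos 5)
Your proposal is correct and follows essentially the same route as the paper: the paper gives no separate proof for this lemma, presenting it as a slight generalisation of Lemma \ref{l:Gl}.\ref{l:Gl:perm}, whose proof is exactly the argument you reproduce --- assign the whole block $b$ to $s$ with the strict comparison $s \spl{\ell} b$, assign singleton blocks to the elements of $c$ via reflexivity, and witness clause \ref{d:spl:ab} by a piecewise-shift permutation rearranging $b_1 \con \cdots \con b_k \con c_1 \con \cdots \con c_k$ into the interleaving $b_1 \con c_1 \con \cdots \con b_k \con c_k$. Your multi-range bookkeeping for $\pi$ is the evident $k$-block analogue of the paper's three-case permutation, so nothing beyond the paper's intended generalisation is involved.
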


\begin{lemma}
\label{l:pintptas}
{\normalfont (Cf. Lemma \ref{l:pint})}
Let $\GS$ be a constructor GRS over a signature $\FS$ that is
 precedence-terminating with argument separation.
Suppose that $G \rew H$ is induced by a redex $(R, \varphi)$ in a closed term graph 
$G \in \mathcal{TG}_{\nrm} (\FS)$
for a rule $R = (G', l, r) \in \GS$ and a homomorphism
$\varphi : G' \seg l \rightarrow G$.
Let $r' \in V_H$ the node corresponding to $r \in V_{G'}$.
Then, for the interpretations $\pint$ defined for $G$ and $H$,
$\PINT (G \seg \varphi (l)) \spl{\ell} 
 \PINT (H \seg r')$
holds for
$\ell =
\max (\{ |G' \seg r| \} \cup \{ \arity (f) \mid f \in \FS \})$.
\end{lemma}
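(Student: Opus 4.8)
The plan is to mirror the proof of Lemma~\ref{l:pint} step for step, letting the single invariant $G' \seg l \spt G' \seg r$ guaranteed by precedence-termination with argument separation (Definition~\ref{d:ptas}) play the structural role that the explicit shape of unfolding rules played there. As in Lemma~\ref{l:pint} I would first dispose of the degenerate case $H \seg r' \in \mathcal{TG(C)}$: then $\PINT(H \seg r') = \lst{}$, while $\lab_G(\varphi(l)) = \lab_{G'}(l) \in \DS$ forces $G \seg \varphi(l) \notin \mathcal{TG(C)}$, so $\PINT(G \seg \varphi(l)) \cspl{ab} \lst{}$. In the remaining case I would write $f = \lab_{G'}(l) = \lab_G(\varphi(l))$ and decompose $\PINT(G \seg \varphi(l)) = \lst{\fn(\vec s)} \con \PINT(G \seg z_1) \con \cdots \con \PINT(G \seg z_{l'})$, where $\vec s$ lists the $\term$-images of the normal successors of $\varphi(l)$ and $z_1 < \cdots < z_{l'}$ are the safe successors kept by $\pj$. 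Since the first defining condition of precedence-termination forces every direct safe successor of $l$ to be undefined, each $z_i$ is an instantiated variable node shared between $G$ and $H$, so $\PINT(G \seg z_i) = \PINT(H \seg z_i)$ for the interpretations relative to $G$ and $H$.

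Next I would split on which clause of $\spt$ witnesses $G' \seg l \spt G' \seg r$. The first clause (the right-hand side copies or sits below a successor $v$ of $l$) covers the projection-style rules: a normal successor yields, by Lemma~\ref{l:safepath}.\ref{l:safepath:1} and membership in $\mathcal{TG}_{\nrm}(\FS)$, a constructor subgraph and hence the degenerate case, while a safe-variable successor gives $H \seg r' = G \seg \varphi(r)$, which I would handle by structural induction together with Lemma~\ref{l:Gl}.\ref{l:Gl:con} exactly as in the treatment of Case~\ref{d:srgrs:1} in Lemma~\ref{l:pintgsr}. The main work is the second clause. Iterating it down the safe structure of $G' \seg r$ shows that every defined node $p$ reachable from $r$ along a safe path satisfies both $f > \lab_{G'}(p)$ and the property that each normal subgraph of $(G' \seg r)\seg p$ is a subgraph of a normal successor subgraph of $l$. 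Transporting the latter through $\varphi$ makes every normal-argument term of the corresponding node $p'$ in $H$ a subterm of some component of $\vec s$, so $\fn(\vec s) \rhd$ that term; combined with $\fn > \gn$ (for $g := \lab_{G'}(p)$) in the precedence on $\FSn$ and the target arity being $\leq d \leq \ell$ (where $d := \max\{\arity(f) \mid f \in \FS\}$), this yields the head comparison $\fn(\vec s) \cspl{st} \gn(\ldots)$.

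To assemble these comparisons I would extract the subsequence $b$ of $\PINT(H \seg r')$ consisting of the head terms contributed by the new defined nodes of $G' \seg r$, prove $\fn(\vec s) \cspl{sb} b$ by the structural induction just described, and bound $|b|$ by the number of defined nodes of $G' \seg r$, hence by $|G' \seg r| \leq \ell$ as required by clause~\ref{d:spl:sb} of Definition~\ref{d:spl}. It then remains to reinsert the shared contributions $c = \PINT(G \seg z_1) \con \cdots \con \PINT(G \seg z_{l'})$, which reappear verbatim inside $\PINT(H \seg r')$ but distributed among the several safe levels of $r$ rather than all at the top level as in $\PINT(G \seg \varphi(l))$. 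This is precisely what the generalised permutation Lemma~\ref{l:Gl:permptas} is designed for: writing $b = b_1 \con \cdots \con b_k$ and $c = c_1 \con \cdots \con c_k$ according to the interleaving dictated by $r$, it gives $\lst{\fn(\vec s)} \con c \cspl{ab} b_1 \con c_1 \con \cdots \con b_k \con c_k = \PINT(H \seg r')$, which is the desired orientation.

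I expect the interleaving bookkeeping to be the main obstacle, and it has three delicate points. First, one must check that the safe subgraphs reached from $r$ really coincide, as subgraphs of $G$, with the $z_i$ reached directly from $\varphi(l)$, so that their $\PINT$-contributions agree and may be matched; this rests on $\varphi$ being a homomorphism and on the leftmost-path convention of Definition~\ref{d:pint}, together with the fact noted after that definition that $\PINT(\cdot \seg v)$ is read off the global $\pj(\cdot)\seg v$. Second, a variable that is normal in $l$ but appears in a safe position of $r$ contributes nothing, since by Lemma~\ref{l:safepath} its instantiation is a constructor graph with empty $\PINT$; one must record this so the block decomposition of $c$ stays faithful. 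Third, when $r$ fails to reference some safe variable of $l$, the surplus term on the left is absorbed by sending it to an empty block under clause~\ref{d:spl:ab}, which is legitimate because any term $\cspl{sb} \lst{}$ but lies slightly beyond the literal statement of Lemma~\ref{l:Gl:permptas}; making the matching of blocks $b_j$ and $c_j$ reproduce exactly the emission order of $\PINT(H \seg r')$ is the step most likely to require care.
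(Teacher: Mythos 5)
Your proposal is correct and coincides with the paper's intended argument: the paper gives no separate proof of Lemma~\ref{l:pintptas}, marking it ``Cf.\ Lemma~\ref{l:pint}'' and supplying Lemma~\ref{l:Gl:permptas} as the only new ingredient, and your proof is exactly that adaptation --- the degenerate case $H \seg r' \in \mathcal{TG(C)}$, the head comparison via clause~\ref{d:spl:st} of Definition~\ref{d:spl} obtained from the $\spt$-invariant of Definition~\ref{d:ptas} in place of the explicit shape of unfolding rules, and reassembly of the shared safe blocks via Lemma~\ref{l:Gl:permptas}, with the permutation built into clause~\ref{d:spl:ab} absorbing the reordering and erasure subtleties you flag. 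Your clause-1 treatment of projection-style rules likewise matches how the paper disposes of Case~\ref{d:srgrs:1} in Lemma~\ref{l:pintgsr}.
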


\begin{lemma}
\label{l:basicptas}
{\normalfont (Cf. Lemma \ref{l:basic})}
Let $\GS$ be a constructor GRS over a signature $\FS$ that is
 precedence-terminating with argument separation.
For any closed basic term graph
$G \in \mathcal{TG(F)}$,
if $G \rewast H$, then,
for any node $v \in V_H$ on a safe path from $\rootnode_H$,
$|\bigcup_{u \in \nrm (v)} V_{H \seg u}| \leq
 |\bigcup_{u \in \nrm (\rootnode_G)} V_{G \seg u}|$
holds.
\end{lemma}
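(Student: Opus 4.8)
The plan is to follow the proof of Lemma~\ref{l:basic} almost verbatim, namely by induction on the number $n$ of steps in $G \rewm{n} H$, and to replace the one appeal to Corollary~\ref{c:sruf} by a direct analysis of the relation $\spt$ of Definition~\ref{d:ptas}. Before the induction I would record that Lemma~\ref{l:safepath} extends to a constructor GRS that is precedence-terminating with argument separation: part~\ref{l:safepath:1} does not depend on the rule format, and part~\ref{l:safepath:2} (preservation of $\TGnrm$) is re-proved by the same structural induction, now reading the containment of normal-argument subgraphs off the clauses of $\spt$ rather than off Corollary~\ref{c:sruf}. The base case $n=0$ is trivial. In the induction step I fire a rule $(G',l,r)$ at $\varphi(l)$, producing $K$ with right root $r'$; by the extended Lemma~\ref{l:safepath}.\ref{l:safepath:1} the node $\varphi(l)$ lies on a safe path of $H$, and $H,K \in \TGnrm$.

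For a node $v \in V_K$ on a safe path from $\rootnode_K$ I split on whether $v \in V_{K \seg r'}$. If $v \notin V_{K \seg r'}$, the argument of Lemma~\ref{l:basic} carries over unchanged: it rests only on the set equality $V_{K \seg v}\setminus V_{K \seg r'} = V_{H \seg v}\setminus V_{H \seg \varphi(l)}$ together with the inequalities (\ref{e:basic:1}) and (\ref{e:basic:2}), none of which uses the shape of the rule. So the whole weight of the generalisation falls on the subcase $v \in V_{K \seg r'}$, where I must bound $|\bigcup_{u \in \nrm(v)} V_{K \seg u}|$ by $|\bigcup_{u \in \nrm(\rootnode_G)} V_{G \seg u}|$.

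I would obtain this bound by unfolding the derivation of $G' \seg l \spt G' \seg r$ along the safe path from $r'$. Two features of $\spt$ drive the argument. First, the recursive clause~2 keeps the left graph $G' \seg l$ fixed while descending into safe successors of the right root, and at every such node it guarantees that each normal-successor subgraph is a \emph{sub-term graph} of a normal argument of $l$; transported through $\varphi$ these subgraphs are shared with $H$ and lie inside $\bigcup_{u' \in \nrm(\varphi(l))} V_{H \seg u'}$, so $\bigcup_{u \in \nrm(v)} V_{K \seg u} \subseteq \bigcup_{u' \in \nrm(\varphi(l))} V_{H \seg u'}$ and the induction hypothesis applied at the safe-path node $\varphi(l)$ closes the bound. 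Second, when clause~1 is used at some safe descendant, the right subgraph coincides with a subgraph $H \seg \varphi(w)$ of $H$; here the $\TGnrm$ invariant decides the outcome: if the path from $l$ to $w$ takes a normal step it does so at a defined symbol, below which everything is a constructor tree, so $v$ carries a constructor label and $\nrm(v) = \emptyset$; otherwise the path is entirely safe, $\varphi(w)$ lies on a safe path of $H$, and the induction hypothesis on $H$ applies directly since $K \seg v = H \seg v$ there.

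The main obstacle is precisely this interleaving of the two clauses of $\spt$ and the reconciliation of the two routes to the bound: a normal argument encountered deep inside $K \seg r'$ is bounded either against the normal arguments of $\varphi(l)$ (via the sub-term-graph witnesses of clause~2) or against a safe-path node of $H$ supplied by clause~1, and it is the $\TGnrm$ invariant --- constructors have only safe arguments, so a normal descent can occur only at a defined symbol and immediately lands in a pure-constructor region --- that guarantees no third possibility arises. A secondary point requiring care is the set-theoretic bookkeeping: because the sub-term-graph relations are realised by genuine node sharing in the host graph after the rewrite, the unions over normal positions are controlled by honest set inclusions rather than by mere cardinality estimates, which is what keeps the bound from being inflated by shared structure.
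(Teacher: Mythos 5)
Your proposal is correct and follows essentially the same route as the paper, which states Lemma~\ref{l:basicptas} without explicit proof precisely because it is the proof of Lemma~\ref{l:basic} with the appeal to Corollary~\ref{c:sruf} replaced by the defining clauses of $\spt$ in Definition~\ref{d:ptas} (sub-term-graph containment of normal arguments under clause~2, sharing with the host graph under clause~1, with the $\TGnrm$ invariant ruling out any third case), together with the observation that Lemma~\ref{l:safepath} persists in this setting --- exactly what you do. One small slip: inequality~(\ref{e:basic:1}) does depend on the rule shape (it is the instance $v = r'$ of the very bound being proved), but your $\spt$-analysis at $r'$ delivers it anyway, so nothing is lost.
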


\begin{theorem}
\label{t:contextptas}
{\normalfont (Cf. Theorem \ref{t:context})}
Let $\GS$ be a constructor GRS over a signature $\FS$ that is
 precedence-terminating with argument separation and
$\GS_0 = \{ (G, l, r) \in \GS \mid 
             \lab_G (l) \in \Dfin
         \}$.
Suppose the following two conditions.
\begin{enumerate}
\item $\max \left( \{ \arity (f) \mid f \in \FS \} \cup
             \{ |G \seg r| \mid \exists l \ (G, l, r) \in \GS_0 \}
            \right) \leq d$.
\label{t:contextptas:1}
\item $|V_{G \seg r} \setminus \bigcup_{v \in \nrm (r)} V_{G \seg v}| \leq
       |G \seg l|$
      for any rule $(G, l, r) \in \GS \setminus \GS_0$.
\label{t:contextptas:2}
\end{enumerate}
Then, for any closed basic term graph
$G_0 \in \mathcal{TG(F)}$,
if $G_0 \rewast G$ and $G \rew H$, then 
$\PINT (G) \spl{\ell} \PINT (H)$
holds for
$\ell = 2 |\bigcup_{v \in \nrm (\rootnode_{G_0})} V_{G_0 \seg v}| + d$.
\end{theorem}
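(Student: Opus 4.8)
The plan is to follow the template of the proof of Theorem~\ref{t:context}, replacing each appeal to a lemma from the safe-recursive-GRS development by its precedence-termination counterpart, and to verify that the two new hypotheses \ref{t:contextptas:1} and \ref{t:contextptas:2} supply exactly the bound on $|G' \seg r|$ that the argument requires. Concretely, suppose $G_0 \rewast G$ and that $G \rew H$ is induced by a redex $(R, \varphi)$ with $R = (G', l, r) \in \GS$ and $\varphi : G' \seg l \rightarrow G$. Since $G_0$ is basic, $G \in \mathcal{TG}_{\nrm}(\FS)$, and by Lemma~\ref{l:safepath}.\ref{l:safepath:2} (which goes through for precedence-terminating GRSs, just as it did for safe recursive ones) we also have $H \in \mathcal{TG}_{\nrm}(\FS)$. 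The goal is to exhibit the single value $\ell = 2|\bigcup_{v \in \nrm(\rootnode_{G_0})} V_{G_0 \seg v}| + d$ and show $\pint(G) \spl{\ell} \pint(H)$.

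The heart of the argument is to bound $|G' \seg r|$ by $\ell$ in both possible forms of the rule. First I would split on whether $\lab_{G'}(l) \in \Dfin$ or $\lab_{G'}(l) \in \Dinf$, i.e.\ whether $R \in \GS_0$ or $R \in \GS \setminus \GS_0$. If $R \in \GS_0$, then hypothesis~\ref{t:contextptas:1} gives $|G' \seg r| \leq d \leq \ell$ directly. If $R \in \GS \setminus \GS_0$, then hypothesis~\ref{t:contextptas:2} gives $|V_{G' \seg r} \setminus \bigcup_{v \in \nrm(r)} V_{G' \seg v}| \leq |G' \seg l|$; combining this with the standing assumption that variable nodes in $G' \seg r$ already occur in $G' \seg l$ and with the structure of the normal subgraphs, one bounds $|G' \seg r|$ by roughly $2|G' \seg u_1| + d$, matching the unfolding-rule estimate used in Theorem~\ref{t:context}. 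In either case, writing $\att_G(\varphi(l)) = \sn{v_1, \dots, v_k}{v_{k+1}, \dots, v_{k+l}}$, the normal subgraph bound $|G' \seg l| \leq |\bigcup_{v \in \nrm(\varphi(l))} V_{G \seg v}|$ together with Lemma~\ref{l:basicptas} (the analogue of Lemma~\ref{l:basic}, valid here because $\varphi(l)$ lies on a safe path from $\rootnode_G$ by Lemma~\ref{l:safepath}.\ref{l:safepath:1}) yields $|G' \seg r| \leq 2|\bigcup_{v \in \nrm(\rootnode_{G_0})} V_{G_0 \seg v}| + d = \ell$.

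With the bound $|G' \seg r| \leq \ell$ secured, the local orientation $\pint(G \seg \varphi(l)) \spl{\ell} \pint(H \seg r')$ follows from Lemma~\ref{l:pintptas}, since the value of $\ell$ in that lemma is $\max(\{|G' \seg r|\} \cup \{\arity(f) \mid f \in \FS\})$, which is at most our global $\ell$; Lemma~\ref{l:Gl}.\ref{l:Gl:mon} then lifts the orientation to the uniform index. Finally I would close the argument exactly as in Theorem~\ref{t:context}: if $\varphi(l) = \rootnode_G$ the conclusion is immediate, and if $\varphi(l) \neq \rootnode_G$ one applies Lemma~\ref{l:Gl}.\ref{l:Gl:con} to propagate the local orientation through the surrounding context, using that $\pint$ is defined compatibly for $G$ and $H$ outside the redex.

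The main obstacle I anticipate is the $R \in \GS \setminus \GS_0$ case, where the rule is an unfolding rule but is no longer assumed to have the rigid shape of Definition~\ref{d:uf}. There the clean estimate $|G' \seg r| \leq 2|G' \seg u_1| + d$ was read off directly from the unfolding-rule definition, whereas now it must be derived from the abstract size hypothesis~\ref{t:contextptas:2} and from the $\spt$-structure imposed by precedence termination with argument separation. I would therefore need to check carefully that the defined-node budget $|V_{G' \seg r} \setminus \bigcup_{v \in \nrm(r)} V_{G' \seg v}| \leq |G' \seg l|$, combined with the observation that all normal subgraphs of $r$ are shared sub-term graphs of normal subgraphs of $l$ (this is the first bullet of case~2 in the definition of $\spt$), reconstitutes the same $2|G' \seg u_1| + d$ bound; this is the step where the generalisation could fail if the argument-separation conditions were weaker, and so it deserves the closest scrutiny.
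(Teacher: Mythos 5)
Your proposal is correct and matches the paper's intended argument: the paper gives no separate proof of Theorem~\ref{t:contextptas} but derives it exactly as you do, by rerunning the proof of Theorem~\ref{t:context} with Lemma~\ref{l:pintptas} in place of Lemma~\ref{l:pint} and Lemma~\ref{l:basicptas} in place of Lemma~\ref{l:basic}, the only new content being the bound $|G' \seg r| \leq \ell$, which you obtain precisely as intended via the case split $R \in \GS_0$ (condition~\ref{t:contextptas:1}) versus $R \in \GS \setminus \GS_0$ (condition~\ref{t:contextptas:2} combined with the first bullet of case~2 of Definition~\ref{d:ptas}). Your closing remark correctly identifies the one step that replaces the rigid unfolding-rule estimate $|G' \seg r| \leq 2|G' \seg u_1| + d$, and your reconstruction of it is sound (up to the same harmless slack in counting the root and safe variable nodes that the paper itself absorbs into $d$).
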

 
\begin{corollary}
\label{c:spt:1}
{\normalfont (Cf. Corollary \ref{c:main})}
Let $\GS$ be a constructor GRS over a signature $\FS$ that is
 precedence-terminating with argument separation.
Then there exists a polynomial 
$p: \mathbb N \rightarrow \mathbb N$ such that, for any
 closed basic term graph $G \in \mathcal{TG(F)}$,
if
$G \rewm{m} H$ holds for some term graph
$H \in \mathcal{TG(F)}$, then
$m \leq p (|\bigcup_{v \in \nrm (\rootnode_G)} V_{G \seg v}|)$
holds.
\end{corollary}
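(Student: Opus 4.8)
The plan is to reproduce the argument of Corollary~\ref{c:main} almost verbatim, with the embedding Theorem~\ref{t:context} replaced by its generalisation Theorem~\ref{t:contextptas}. First I would record that the hypotheses of Theorem~\ref{t:contextptas} are available: condition~\ref{t:contextptas:1} holds for a suitable $d$ because $\FS$ is finite and the finitely-defined part $\GS_0 = \{(G,l,r) \in \GS \mid \lab_G(l) \in \Dfin\}$ comprises only finitely many rules, so $\max\{\arity(f) \mid f \in \FS\}$ and $\max\{|G \seg r| \mid \exists l\,(G,l,r) \in \GS_0\}$ are both finite; condition~\ref{t:contextptas:2}, which bounds the number of non-normal nodes a rule may create, is the operative structural restriction on $\GS$ that I take to be in force. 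Fixing this $d$ is the only preparatory step.

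Given a closed basic term graph $G$, I would put $\ell = 2|\bigcup_{v \in \nrm(\rootnode_G)} V_{G \seg v}| + d$. Since $G$ is basic, every argument subgraph of $\rootnode_G$ — in particular every safe one — lies in $\mathcal{TG(C)}$ and is hence interpreted by the empty sequence, so $\PINT(G)$ collapses to the singleton $\lst{\fn(\term(G \seg v_1), \dots, \term(G \seg v_k))}$, where $v_1, \dots, v_k$ are the normal successors of $\rootnode_G$ and $f = \lab_G(\rootnode_G)$. This is exactly the shape treated in Corollary~\ref{c:main}. Assuming $G \rewm{m} H$ and instantiating Theorem~\ref{t:contextptas} with starting graph $G_0 = G$, each of the $m$ rewrite steps induces a strict $\spl{\ell}$-descent from $\PINT(G)$, whence $m \leq \mG{\ell}(\PINT(G))$.

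It then remains to bound $\mG{\ell}(\PINT(G))$ polynomially. Because $k \leq \arity(f) \leq d \leq \ell$ and each $\term(G \seg v_j)$ is a closed constructor term, Lemma~\ref{l:Gleq} gives $\mG{\ell}(\PINT(G)) \leq d^{\rk(\fn)}\cdot(1+\ell)^{\rk(\fn)}\cdot(1 + \sum_{j=1}^k \depth(\term(G \seg v_j)))$. Using $\depth(\term(G \seg v_j)) \leq |G \seg v_j|$ and summing over the at most $d$ normal arguments bounds the last factor by $1 + d|\bigcup_{v \in \nrm(\rootnode_G)} V_{G \seg v}|$, while $\ell$ is linear in $|\bigcup_{v \in \nrm(\rootnode_G)} V_{G \seg v}|$ and $\rk(\fn)$ is bounded by a constant depending only on $\FS$. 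Collecting these estimates into one polynomial $p$, precisely as in Corollary~\ref{c:main}, delivers $m \leq p(|\bigcup_{v \in \nrm(\rootnode_G)} V_{G \seg v}|)$.

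I expect the one genuinely load-bearing point to be the applicability of Theorem~\ref{t:contextptas} with this particular $\ell$, rather than the closing estimate, which is mechanical. Its proof requires $|G' \seg r| \leq \ell$ for every applied rule $(G', l, r)$; for an infinite rule this is where condition~\ref{t:contextptas:2} is indispensable, since it lets one split $|G' \seg r|$ into its normal part — bounded by $|\bigcup_{v \in \nrm(\rootnode_G)} V_{G \seg v}|$ through Lemma~\ref{l:basicptas}, as $\varphi(l)$ sits on a safe path from the root — and the remainder, bounded by $|G' \seg l| \leq |\bigcup_{v \in \nrm(\rootnode_G)} V_{G \seg v}| + d$. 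I would therefore make sure that all size bounds are routed through Lemma~\ref{l:basicptas}, so that the index $\ell$ stays uniform over the whole reduction and depends only on the normal part of the initial graph $G$.
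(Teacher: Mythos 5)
Your proposal is correct and takes essentially the same route as the paper, which proves this corollary only implicitly by rerunning the proof of Corollary \ref{c:main} with Theorem \ref{t:contextptas} in place of Theorem \ref{t:context} --- exactly your plan, including the choice $\ell = 2 |\bigcup_{v \in \nrm (\rootnode_G)} V_{G \seg v}| + d$ and the closing estimate via Lemma \ref{l:Gleq}. Your remark that condition \ref{t:contextptas:2} of Theorem \ref{t:contextptas} must be ``taken to be in force'' is well spotted: the corollary's statement omits it (the paper makes it explicit only in Corollary \ref{c:spt:2}), yet the embedding genuinely needs it, together with Lemma \ref{l:basicptas}, to keep $\ell$ uniform over the whole reduction.
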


\begin{lemma}
\label{l:spt:size}
{\normalfont (Cf. Lemma \ref{l:size})}
Let $\GS$ be a constructor GRS over a signature $\FS$ that is
 precedence-terminating with argument separation and
$\GS_0 = \{ (G, l, r) \in \GS \mid 
             \lab_G (l) \in \Dfin
         \}$.
Suppose the conditions \ref{t:contextptas:1} and  \ref{t:contextptas:2}
 in Theorem \ref{t:contextptas} hold.
For any closed basic term graph $G \in \mathcal{TG(F)}$
and for any term graph $H \in \mathcal{TG(F)}$,
if
$G \rewm{n} H$, then
$|V_H \setminus \bigcup_{v \in \nrm (\rootnode_H)} V_{H \seg v}| \leq  
 n \cdot \left( |\bigcup_{v \in \nrm (\rootnode_G)} V_{G \seg v}| +d \right) +
 |V_G \setminus \bigcup_{v \in \nrm (\rootnode_G)} V_{G \seg v}|$
holds.  
\end{lemma}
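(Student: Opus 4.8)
The plan is to prove the statement by induction on $n$, following the same skeleton as the proof of Lemma \ref{l:size} but splitting the treatment of the contracted rule according to whether it comes from $\GS_0$ or from $\GS \setminus \GS_0$. In the base case $n = 0$ we have $H = G$ and the inequality holds trivially. For the induction step I would assume $G \rewm{n} H$ and $H \rew K$, the latter induced by a redex $(R, \varphi)$ for a rule $R = (G', l, r)$. As a preliminary, I would record that the argument-separation structure of Definition \ref{d:ptas} keeps the analogue of Lemma \ref{l:safepath} available and that Lemma \ref{l:basicptas} applies, so that $H, K \in \mathcal{TG}_{\nrm} (\FS)$, the contracted node $\varphi (l)$ lies on a safe path from $\rootnode_H$, and $|\bigcup_{u \in \nrm (\varphi (l))} V_{H \seg u}| \leq |\bigcup_{v \in \nrm (\rootnode_G)} V_{G \seg v}|$.

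The heart of the argument is a bound on the number of non-normal nodes freshly introduced when the redex is replaced by an instance of $G' \seg r$. Using the global assumption that every undefined node of $G' \seg r$ already occurs in $G' \seg l$ (maximal sharing of variable nodes), the only genuinely new nodes are labelled ones, and of these only the non-normal ones contribute. I would bound their number by a case distinction: if $R \in \GS_0$, then condition \ref{t:contextptas:1} gives $|G' \seg r| \leq d$, so at most $d$ fresh non-normal nodes appear; if $R \in \GS \setminus \GS_0$, then condition \ref{t:contextptas:2} gives $|V_{G' \seg r} \setminus \bigcup_{v \in \nrm (r)} V_{G' \seg v}| \leq |G' \seg l|$, and Lemma \ref{l:basicptas} together with the arity bound in condition \ref{t:contextptas:1} lets me bound $|G' \seg l|$ by $|\bigcup_{v \in \nrm (\rootnode_H)} V_{H \seg v}| + d$. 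Because the contraction takes place at a safe position, nodes in normal argument positions retain their status, so the inherited non-normal nodes of $K$ are controlled by those of $H$. This yields the analogue of inequality (\ref{e:size:1}), namely
\begin{equation*}
\textstyle
|V_K \setminus \bigcup_{v \in \nrm (\rootnode_K)} V_{K \seg v}|
\leq
|\bigcup_{v \in \nrm (\rootnode_H)} V_{H \seg v}|
+ |V_H \setminus \bigcup_{v \in \nrm (\rootnode_H)} V_{H \seg v}|
+ d.
\end{equation*}

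From here I would close the induction exactly as in Lemma \ref{l:size}: Lemma \ref{l:basicptas} supplies the analogue of (\ref{e:size:2}), that is $|\bigcup_{v \in \nrm (\rootnode_H)} V_{H \seg v}| \leq |\bigcup_{v \in \nrm (\rootnode_G)} V_{G \seg v}|$, and the induction hypothesis supplies the analogue of (\ref{e:size:3}) for $H$. Adding these three inequalities turns the per-step increment into $|\bigcup_{v \in \nrm (\rootnode_G)} V_{G \seg v}| + d$ and raises the coefficient $n$ to $n+1$, which is precisely the asserted bound.

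I expect the delicate step to be this uniform per-step bound: identifying which non-normal nodes of $K$ are inherited from $H$ and which are created by $G' \seg r$, and verifying that no node silently changes between normal and safe status across the step. The constant $+d$ is exactly what reconciles the two rule classes, since the finitely defined rules of $\GS_0$ contribute right-hand sides of bounded size $d$ that need not be dominated by the matched left-hand side, whereas the infinitely many unfolding-style rules of $\GS \setminus \GS_0$ are controlled through condition \ref{t:contextptas:2} and Lemma \ref{l:basicptas}. Getting this accounting right, rather than the final arithmetic, is where the real work lies.
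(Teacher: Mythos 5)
Your proposal is correct and follows essentially the same route as the paper: the paper gives no separate argument for this lemma but points back to Lemma \ref{l:size}, whose inductive proof you reproduce faithfully --- base case trivial, per-step analogue of inequality (\ref{e:size:1}) augmented by $+d$, then the analogues of (\ref{e:size:2}) (via Lemma \ref{l:basicptas}) and (\ref{e:size:3}) (induction hypothesis) combined exactly as before. Your case split between $\GS_0$ (right-hand sides bounded by $d$ via condition \ref{t:contextptas:1}) and $\GS \setminus \GS_0$ (condition \ref{t:contextptas:2} plus the maximal-sharing convention for variable nodes) is precisely the intended modification that produces the extra $+d$ in the per-step increment.
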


\begin{corollary}
\label{c:spt:2}
{\normalfont (Cf. Corollary \ref{c:size})}
Suppose that $\GS$ is an infinite constructor GRS over a signature $\FS$
 precedence-terminating with argument separation that enjoys
the condition \ref{t:contextptas:2} in Theorem \ref{t:contextptas}.
Then, for any closed basic term graph
$G \in \mathcal{TG(F)}$
and for any term graph $H \in \mathcal{TG(F)}$,
if $G \rewm{m} H$, then the following two conditions hold.
\begin{enumerate}
\item $m \leq p \left( |\bigcup_{v \in \nrm (\rootnode_G)} V_{G \seg v}| \right)$.
\item $|H| \leq p \left( |\bigcup_{v \in \nrm (\rootnode_G)} V_{G \seg v}| \right) +
       |V_G \setminus \bigcup_{v \in \nrm (\rootnode_G)} V_{G \seg v}|$.
\end{enumerate}
\end{corollary}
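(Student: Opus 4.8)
The plan is to derive both bounds by combining the two preceding results for precedence-terminating GRSs with argument separation in exactly the manner Corollary \ref{c:size} was obtained from Corollary \ref{c:main} and Lemma \ref{l:size}. Fix a closed basic term graph $G$ with $G \rewm{m} H$ and abbreviate $N = |\bigcup_{v \in \nrm (\rootnode_G)} V_{G \seg v}|$. Because $\FS$ is finite and the finitely-defined part $\GS_0 = \{ (G, l, r) \in \GS \mid \lab_G (l) \in \Dfin \}$ contains only finitely many rules, the constant $d = \max (\{ \arity (f) \mid f \in \FS \} \cup \{ |G \seg r| \mid \exists l \ (G, l, r) \in \GS_0 \})$ of condition \ref{t:contextptas:1} is well defined, so, together with the assumed condition \ref{t:contextptas:2}, the hypotheses of both Corollary \ref{c:spt:1} and Lemma \ref{l:spt:size} are met.

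The first condition is immediate: Corollary \ref{c:spt:1} already furnishes a polynomial $q$, depending only on $\GS$, with $m \leq q (N)$, and it suffices to let $p$ dominate $q$.

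For the second condition I would split the size of $H$ along the normal/safe boundary at its root,
\[
|H| = |\bigcup_{v \in \nrm (\rootnode_H)} V_{H \seg v}| + |V_H \setminus \bigcup_{v \in \nrm (\rootnode_H)} V_{H \seg v}|,
\]
and estimate the two summands separately. The first is at most $N$ by Lemma \ref{l:basicptas} applied to the node $\rootnode_H$, which lies on the trivial safe path from itself. The second is controlled by Lemma \ref{l:spt:size}, which yields $m \cdot (N + d) + |V_G \setminus \bigcup_{v \in \nrm (\rootnode_G)} V_{G \seg v}|$. Substituting $m \leq q (N)$ then collapses the whole estimate into $p (N) + |V_G \setminus \bigcup_{v \in \nrm (\rootnode_G)} V_{G \seg v}|$ for a suitable polynomial, for instance $p (x) = x + q (x) (x + d)$, which is precisely the claimed form.

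Since the substantive work already resides in Corollary \ref{c:spt:1}, Lemma \ref{l:basicptas} and Lemma \ref{l:spt:size}, this corollary is essentially bookkeeping and I expect no genuine obstacle. The sole difference from the derivation of Corollary \ref{c:size} is the additive constant $d$ contributed by the finite rules via Lemma \ref{l:spt:size}; the one point worth checking carefully is that this $d$ is absorbed into $p$ while the residual term $|V_G \setminus \bigcup_{v \in \nrm (\rootnode_G)} V_{G \seg v}|$ is carried through verbatim, so that the stated bound for $|H|$ matches exactly.
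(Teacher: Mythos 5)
Your proposal is correct and takes essentially the same approach the paper intends: Corollary \ref{c:spt:2} is left as the direct analogue of Corollary \ref{c:size}, obtained by combining Corollary \ref{c:spt:1} (for the bound on $m$), Lemma \ref{l:basicptas} (for the normal part of $|H|$ at the root), and Lemma \ref{l:spt:size} (for the remainder), exactly as you do. Your observations that condition \ref{t:contextptas:1} holds automatically for a suitable $d$ because $\FS$ and $\GS_0$ are finite, and that the extra additive constant $d$ is absorbed into the polynomial via $p(x) = x + q(x)(x+d)$, are precisely the only points where this derivation differs from that of Corollary \ref{c:size}.
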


\begin{example}
Let us consider the GRS $\GS$ defined in Example \ref{ex:GRS}. 
For the signature $\FS$ with the partition
$\CS = \{ \epsilon, \m{0}, \m{c}, \ms \}$ and
$\DS = \{ \mI{0,1}{1}, \mI{2,2}{3},  \mI{2,2}{4}, \mI{2,1}{3},
          \m{e}, \mh_0, \mh_1, \mg, \mf 
       \}$,
define a precedence $>$ by

$\begin{array}{rclcrclcrclcrclcrcl}
 \mf &>& \me & \quad & \me &>& \epsilon & \quad & \mh_1 &>& \mg &
 \quad & \mg &>& \mI{0,1}{1} & \quad & \mh_0 &>& \m{c} \\
 \mf &>& \mh_1 & \quad & & & & \quad & \mh_1 &>& \mI{2,1}{3} &
 \quad & \mg &>& \mh_0 & \quad & \mh_0 &>& \mI{2,2}{j} \ (j=3,4)
 \end{array}
$

\noindent
It is easy to see that $>$ is well-founded.
It is routine to check that for each rule $(G, l, r) \in \GS$,
\begin{itemize}
\item $\lab_G (v)$ is undefined for any node $v \in \safe (l)$, and
\item $\lab_G (l) > \lab_G (v)$ for any node $v \in V_{G \seg r}$
      whenever $\lab_G (v)$ is defined.
\end{itemize}
Let $\spt$ the relation induced by the precedence $>$ as defined in
 Definition \ref{d:ptas}.
Recall that $\GS$ is defined by 
$\GS = \GS_{\mg} \cup \GS_{\mf} \cup \GS_0$, 
where $\GS_{\mg}$ and $\GS_{\mf}$ are infinite sets of safe
recursive unfolding graph rewrite rules respectively defining
$\mg$ and $\mf$, and $\GS_0$ is a finite set of the rewrite rules defining the
 other function symbols.
Therefore, the set $\DS$ is partitioned into
$\Dfin = \{ \mI{0,1}{1}, \mI{2,2}{3},  \mI{2,2}{4}, \mI{2,1}{3},
          \m{e}, \mh_0, \mh_1
       \}$
and
$\Dinf = \{ \mg, \mf\}$.
Note that $\GS_0$ does not contain any unfolding graph rewrite rule.
It follows from the definition of safe recursive unfolding graph rewrite
 rules that
$G \seg l \spt G \seg r$
for each $(G, l, r) \in \GS_{\mg} \cup \GS_{\mf}$
(See also Corollary \ref{c:sruf}).
Consider a rewrite rule $(G, l, r) \in \GS_0$.
It is obvious that 
$G \seg l \spt G \seg r$
holds if $(G, l, r,)$ is an instance of Case \ref{d:srgrs:1} in
 Definition \ref{d:srgrs}.
Suppose that $V_G$ consists of $2+k+l+n$ elements
$u$, $v$, $x_1, \dots, x_{k+l}$, $w_1, \dots, w_n$
as specified in Case
\ref{d:srgrs:2} in Definition \ref{d:srgrs}. 
Let
$v \in V_{G \seg r} = 
 \{ v, x_1, \dots, x_{k+l}, w_1, \dots, w_n \}$.
Consider the case that $\lab_G (v)$ is undefined, i.e.,
$v \in \{ x_1, \dots, x_{k+l} \}$.
In this case, $v$ is a successor node of $l$.
Namely $G \seg v = G \seg u$ for some successor node $u$ of $l$, and
 hence $G \seg l \spt G \seg v$ holds.
Assume that $\lab_G (v) \in \FS$.
Then $v \in \{ v, w_1, \dots, w_n \}$.
Since
$\att_G (w_j) = \sn{x_1, \dots, x_k}{x_{k+1}, \dots, x_{k+l}}$
for every $j \in \{ 1, \dots, n \}$,
it holds that
\begin{equation}
G \seg l \spt G \seg w_j. 
\quad (j \in \{ 1, \dots, n \})
\label{e:ex:GRS}
\end{equation}
Since 
 $\att_G (v) = \sn{x_{j_1}, \dots, x_{j_m}}{w_1, \dots, w_n}$
for some
$\{ j_1, \dots, j_m \} \subseteq \{ 1, \dots, k \}$,
it follows from (\ref{e:ex:GRS}) that
$G \seg l \spt G \seg v$ holds.
To apply Corollary \ref{c:spt:2}, we also have to check that $\GS$
 enjoys the condition
\ref{t:contextptas:2} in Theorem \ref{t:contextptas}. 
As a consequence of the partition 
$\DS = \Dfin \cup \Dinf$ 
observed above, the finite subset $\GS_0$ coincides with the set
$\{ (G, l, r) \in \GS \mid \lab_G (l) \in \Dfin \}$.
In other words,
$\GS \setminus \GS_0 = \GS_{\mg} \cup \GS_{\mf}$.
Thus, it suffices to show that
$|V_{G \seg r} \setminus \bigcup_{v \in \nrm (r)} V_{G \seg v}|
 \leq |G \seg l|$
for any rule $(G, l, r) \in \GS_{\mg} \cup \GS_{\mf}$,
but it follows from the definition of safe recursive unfolding graph
 rewrite rules.
Therefore, the runtime complexity and the sizes of normal forms in $\GS$
 can be polynomially bounded as in Corollary \ref{c:spt:2}. 
\end{example}

\section{Conclusion}

In this paper we introduced a termination order over sequences of terms
together with an interpretation of term graphs into
sequences of terms.
Unfolding graph rewrite rules which express the equation of
(\ref{e:gsr}) can be successfully embedded into the termination order by
the interpretation, sharpening the result obtained in \cite{GRR2010}
about the runtime complexity of those unfolding graph rewrite rules.
The introduction of the termination order is strongly motivated by
former works \cite{AM05,AM08,AEM11,AEM12} and also based on an
observation that every unfolding graph rewrite rule is 
precedence terminating in the sense defined in \cite{MidOZ96}.
Generalising the definition of unfolding graph
rewrite rules for general safe recursion, we proposed a restrictive
notion of the standard precedence termination,
precedence termination with argument separation.
The restrictive precedence termination together with suitable assumptions
yields a new criterion for
the polynomial runtime complexity of infinite GRSs and for the polynomial
size of normal forms in infinite GRSs.

\subsection*{Acknowledgments}

The author thanks Kazushige Terui for drawing his attention to the work
\cite{GRR2010}, which was the initial point of the current research.

%\bibliography{eguchi}

\end{document}